\providecommand{\href}[2]{#2}
\DeclareMathOperator*{\argmin}{arg\,min}
\newcommand{\E}{\mathcal{E}}
\renewcommand{\O}{\mathcal{O}}
\DeclareRobustCommand{\Sec}[1]{Sec.~\ref{#1}}
\DeclareRobustCommand{\Secs}[2]{Secs.~\ref{#1} and \ref{#2}}
\DeclareRobustCommand{\App}[1]{App.~\ref{#1}}
\DeclareRobustCommand{\Tab}[1]{Table~\ref{#1}}
\DeclareRobustCommand{\Fig}[1]{Fig.~\ref{#1}}
\DeclareRobustCommand{\Eq}[1]{Eq.~(\ref{#1})}
\DeclareRobustCommand{\Eqs}[2]{Eqs.~(\ref{#1}) and (\ref{#2})}
\DeclareRobustCommand{\Ref}[1]{Ref.~\cite{#1}}
\DeclareRobustCommand{\Refs}[1]{Refs.~\cite{#1}}
\newtheorem{lemma}{Lemma}
\newtheorem{definition}{Definition}
\newtheorem{conjecture}{Conjecture}
\newtheorem*{ircsafety}{Infrared and Collinear Safety}
\newcommand{\EMD}{\text{EMD}\xspace}
\definecolor{jdtcolor}{rgb}{0.8,0,0}
\definecolor{emmcolor}{rgb}{0,0.8,0}
\definecolor{ptkcolor}{rgb}{0,0,0.8}
\definecolor{table_red}{rgb}{0.501960,0.000000,0.008131}
\definecolor{table_yellow}{rgb}{0.501959,0.501930,0.014756}
\definecolor{table_green}{rgb}{0.065930,0.501799,0.006832}
\definecolor{table_teal}{rgb}{0.064121,0.501820,0.501973}
\definecolor{table_red_bg}{RGB}{249,243,242}
\definecolor{table_yellow_bg}{RGB}{249,249,243}
\definecolor{table_green_bg}{RGB}{245,249,244}
\definecolor{table_teal_bg}{RGB}{244,249,249}
\title{The Hidden Geometry of Particle Collisions}
\preprint{MIT--CTP 5185}
\author{Patrick T. Komiske,}
\author{Eric M. Metodiev,}
\author{and Jesse Thaler}
\affiliation{Center for Theoretical Physics, Massachusetts Institute of Technology, Cambridge, MA 02139, USA}
\emailAdd{pkomiske@mit.edu}
\emailAdd{metodiev@mit.edu}
\emailAdd{jthaler@mit.edu}
\abstract{
We establish that many fundamental concepts and techniques in quantum field theory and collider physics can be naturally understood and unified through a simple new geometric language.
The idea is to equip the space of collider events with a metric, from which other geometric objects can be rigorously defined.
Our analysis is based on the energy mover's distance, which quantifies the ``work'' required to rearrange one event into another.
This metric, which operates purely at the level of observable energy flow information, allows for a clarified definition of infrared and collinear safety and related concepts.
A number of well-known collider observables can be exactly cast as the minimum distance between an event and various manifolds in this space.
Jet definitions, such as exclusive cone and sequential recombination algorithms, can be directly derived by finding the closest few-particle approximation to the event.
Several area- and constituent-based pileup mitigation strategies are naturally expressed in this formalism as well.
Finally, we lift our reasoning to develop a precise distance between theories, which are treated as collections of events weighted by cross sections.
In all of these various cases, a better understanding of existing methods in our geometric language suggests interesting new ideas and generalizations.
}
\begin{document} 

\flushbottom
\maketitle

\addtocontents{toc}{\protect\enlargethispage{9mm}}

\newpage

\section{Introduction}
\label{sec:intro}

Unification of ideas in physics has been an important way of achieving elegance, clarity, and simplicity, which in turn helps inspire meaningful new developments.
In this paper, we use the energy mover's distance (EMD) between collider events~\cite{Komiske:2019fks} to provide a natural geometric language that unifies many important concepts and techniques in quantum field theory and collider physics from the past five decades.
Furthermore, we introduce and discuss several new ideas inspired by this geometric approach to studying the space of events.

Throughout this paper, we refer to an event and its \emph{energy flow} interchangeably.
The energy flow, or distribution of energy, is the kinematic information that is experimentally observable and perturbatively well-defined in quantum field theories with massless particles~\cite{Tkachov:1995kk}.
As it relates to collider physics, the energy flow has been extensively studied~\cite{Tkachov:1995kk,Sveshnikov:1995vi,Korchemsky:1997sy,Basham:1978zq,Cherzor:1997ak,Tkachov:1999py,Korchemsky:1999kt,Belitsky:2001ij,Berger:2002jt,Bauer:2008dt,Hofman:2008ar,Mateu:2012nk,Belitsky:2013xxa,Komiske:2017aww,Komiske:2018cqr,Komiske:2019asc}, and this paper builds on many of these previous concepts.
For an event consisting of $M$ particles with positive energies $E_i$ and angular directions $\hat n_i$, the energy flow is:
\begin{equation}
\label{eq:energyflow}
\E(\hat n) = \sum_{i=1}^ME_i\,\delta(\hat n - \hat n_i).
\end{equation}
Note that the energy flow is insensitive to charge and flavor information.
Particles are taken to be massless in the body of this paper, with $n_i^\mu = (1, \hat{n}_i)^\mu = p^\mu_i/E_i$, and the case of massive particles is discussed in \App{sec:mass}.
In a hadron collider context, particle transverse momenta $p_{T,i}$ are typically used in place of particle energies, but we focus on energies in this paper to minimize extraneous notation.

The EMD was introduced in \Ref{Komiske:2019fks} as a metric between events.
It is based on the well-known earth mover's distance~\cite{DBLP:journals/pami/PelegWR89,Rubner:1998:MDA:938978.939133,Rubner:2000:EMD:365875.365881,DBLP:conf/eccv/PeleW08,DBLP:conf/gsi/PeleT13}, also known as the Wasserstein metric~\cite{wasserstein1969markov,dobrushin1970prescribing}.
Intuitively, the EMD between two events is the amount of ``work'' required to rearrange one event to the other.
Its value can be obtained by solving the following optimal transport problem between energy flows $\E$ and $\E'$:
\begin{equation}
\label{eq:emd}
\EMD_{\beta,R} (\mathcal E, \mathcal E') = \min_{\{f_{ij}\ge0\}} \sum_{i=1}^M\sum_{j=1}^{M'} f_{ij} \left( \frac{\theta_{ij}}{R} \right)^\beta + \left|\sum_{i=1}^M E_i - \sum_{j=1}^{M'}E_j'\right|,
\end{equation}
\begin{equation}
\label{eq:emdconstraints}
\sum_{i=1}^M f_{ij} \le E_j', \quad\quad \sum_{j=1}^{M'} f_{ij} \le E_i, \quad\quad \sum_{i=1}^M\sum_{j=1}^{M'} f_{ij} = \min\left(\sum_{i=1}^M E_i,\sum_{j=1}^{M'}E_j'\right),
\end{equation}
where $\theta_{ij}$ is a pairwise distance between particles known as the \emph{ground metric}, $R>0$ is a parameter controlling the tradeoff between transporting energy and destroying it, and $\beta > 0$ is an angular weighting exponent.%
\footnote{\label{footnote:pWasser}
Strictly speaking, for the case of $\beta>1$, one must raise the first term in \Eq{eq:emd} to the $1/\beta$ power for the EMD to be a proper metric satisfying the triangle inequality, in which case it is known as a $p$-Wasserstein metric with $p = \beta$.
Additionally, $2R$ should be larger than or equal to the maximum distance in the ground space for the EMD to satisfy the triangle inequality.
When written without subscripts, $\EMD(\E,\E')$ refers to the case of $\beta=1$ and a sufficiently large $R$ to ensure that we have a proper metric.
Even if the EMD is not a proper metric, though, it is still a valid optimal transport problem for any positive values of $\beta$ and $R$.
}
For the angular metric between two massless particles, we focus on the case of
\begin{equation}
\label{eq:theta_def}
\theta_{ij} = \sqrt{2n_i^\mu n_{j\mu}} = \sqrt{2 (1 - \hat{n}_i \cdot \hat{n}_j)},
\end{equation}
which reduces to their opening angle in the nearby limit.%
\footnote{Many modifications to this EMD definition are possible, including alternative angular distances such as strict opening angle or rapidity-azimuth distance as well as alternative notions of energy such as transverse momentum. In addition, energies can be normalized by dividing by their total scalar sum so that energy flows become proper probability distributions.  If desired, the EMD in the center-of-mass frame can be phrased in a manifestly Lorentz-invariant way by replacing the particle energies $E_i$ with $p_i^\mu P_\mu/\sqrt{P_\mu P^\mu}$, where $P_\mu$ is the total event four-momentum.}
The first term in \Eq{eq:emd} quantifies the difference in radiation patterns while the second term, which vanishes in the case of normalized energy flows, allows for the comparison of events with different total energies.
The constraints in \Eq{eq:emdconstraints} specify that the amount of energy moved to or from a particle cannot exceed its initial energy, and that as much energy must be moved as possible.

\begin{figure}[t]
\centering
\includegraphics[scale=0.7]{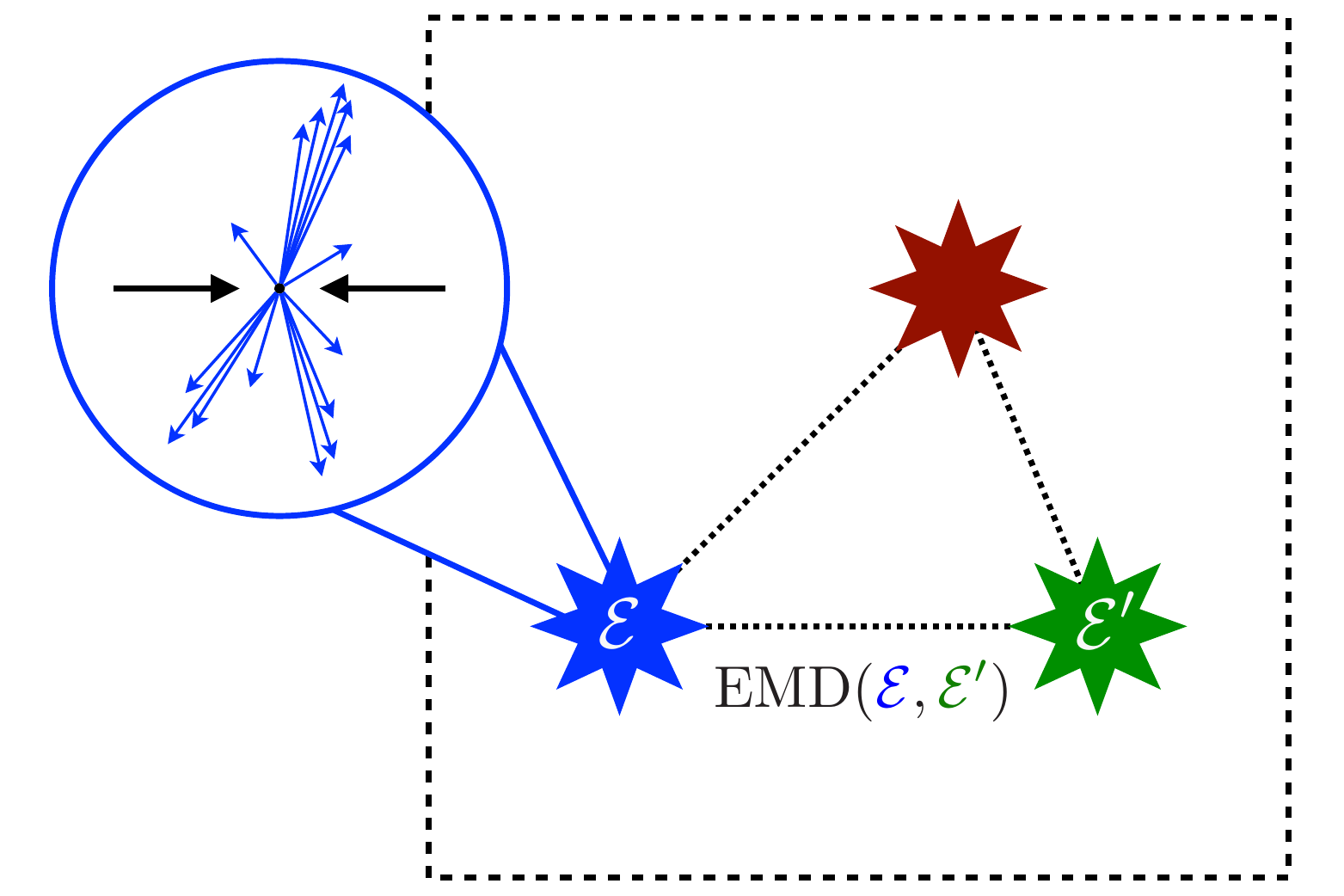}
\caption{\label{fig:event_space} An illustration of the space of events.
Each point in the space is a collider event consisting of the particles produced in a collision, as indicated by the blue event.
The distance between events is quantified by the EMD, giving rise to a metric space.
Geometry in this abstract space of events provides a natural language to understand many ideas and developments in quantum field theory and collider physics.}
\end{figure}

The EMD has previously been used to bound modifications to infrared- and collinear-safe (IRC-safe) observables, distinguish different types of jets, and enable visualizations of the space of events~\cite{Komiske:2019fks}.
It has also been used to explore the space of jets and quantify detector effects with CMS Open Data from the Large Hadron Collider (LHC)~\cite{Komiske:2019jim}.
Alternative pairwise event distances were considered in \Ref{Mullin:2019mmh} in the context of new physics searches.
Here, we demonstrate that the EMD can be used to clarify numerous concepts throughout quantum field theory and collider physics using a unified language of event space geometry.
In addition to demonstrating how concepts such as IRC safety, observables, jet finding, and pileup subtraction are related, we will develop new ideas and techniques in each of these areas, which we describe below.

Equipping collider events with a metric allows us to explore interesting geometric and topological ideas in the space of events.
\Fig{fig:event_space} illustrates the space of events with the EMD as a metric.
One key construction for relating these concepts is the notion of a manifold in the space of events, which will allow us to define the distance between an event and a manifold, as well as the point of closest approach on a manifold.
Since fixed-order perturbation theory works with a definite number of particles, an important type of manifold will be the idealized massless $N$-particle manifold:
\begin{equation}
\label{eq:npmanifold}
\mathcal P_N = \left\{\left.\sum_{i=1}^N E_i\, \delta(\hat n - \hat n_i)\,\, \right| \,\, E_i\ge0 \right\},
\end{equation}
which, intuitively, is the set of all possible events with $N$ massless particles.
Note that $\mathcal P_N\supset\mathcal P_{N-1}\supset\cdots\mathcal P_2\supset \mathcal P_1\supset\mathcal P_0$ via soft and collinear limits, so that the idealized $N$-particle manifold contains each manifold of smaller particle multiplicity.

\begin{table}[t]
\centering
\begin{tabular}{|c|c|c|c|}
\hline\hline
\bf Sec. & \bf Concept &  \bf Equation & \bf Illustration  \\
\hline\hline
& &  &  \multirow{5}{*}{\raisebox{-7em}{\includegraphics[scale=0.28]{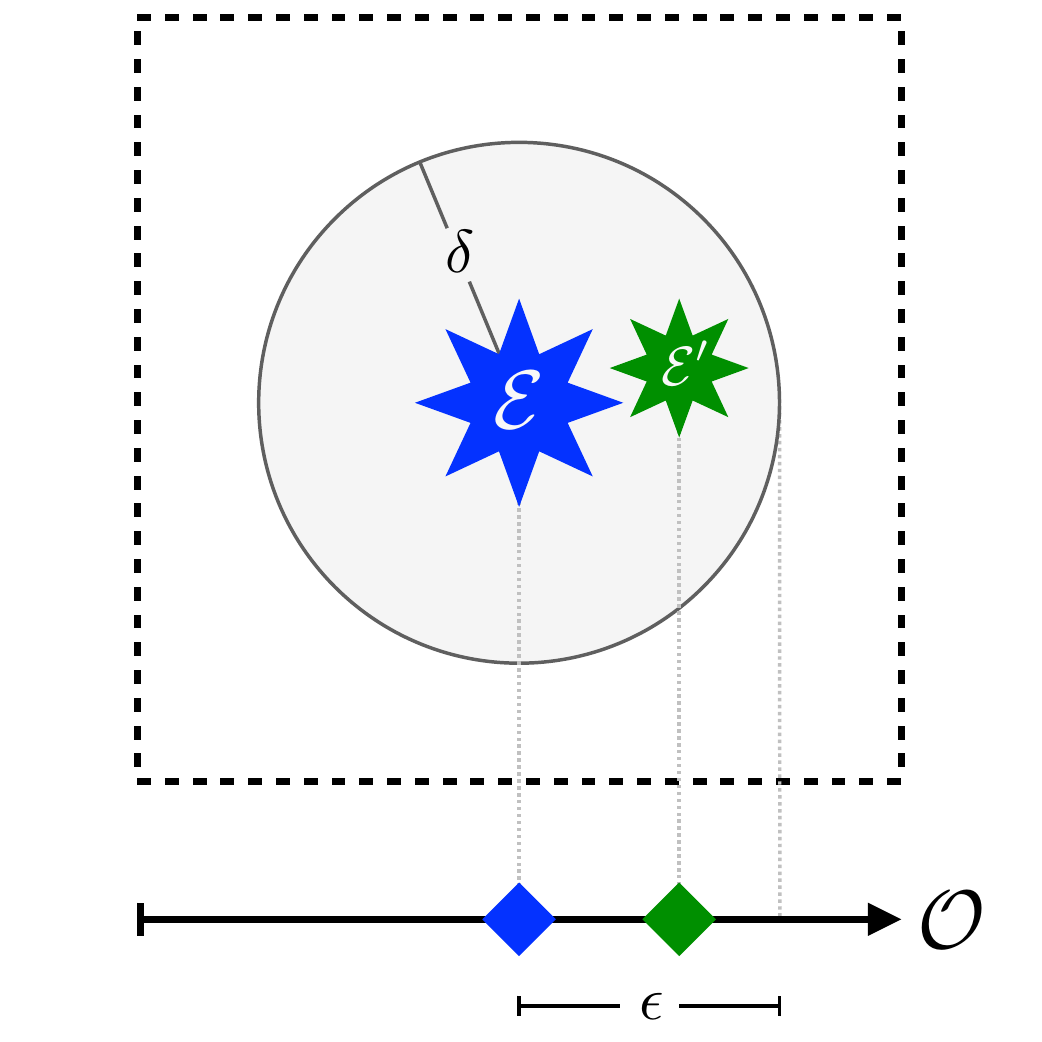}}} \\
 \ref{sec:safety} & {\bf Infrared and} & $\displaystyle\text{EMD}(\E,\E')<\delta  \implies$ & \\ 
 & {\bf Collinear Safety} & $|\O(\E) - \O(\E')| < \epsilon$ &  \\
& \cite{Kinoshita:1962ur,Lee:1964is,Sterman:1977wj,Sterman:1978bi,Sterman:1978bj,Sterman:1979uw,sterman1995handbook,Weinberg:1995mt,Ellis:1991qj,Banfi:2004yd,Larkoski:2013paa,Larkoski:2014wba,Larkoski:2015lea} &  & \\
& &  &  \\ \hline
& &  & \multirow{5}{*}{\raisebox{-7em}{\includegraphics[scale=0.28]{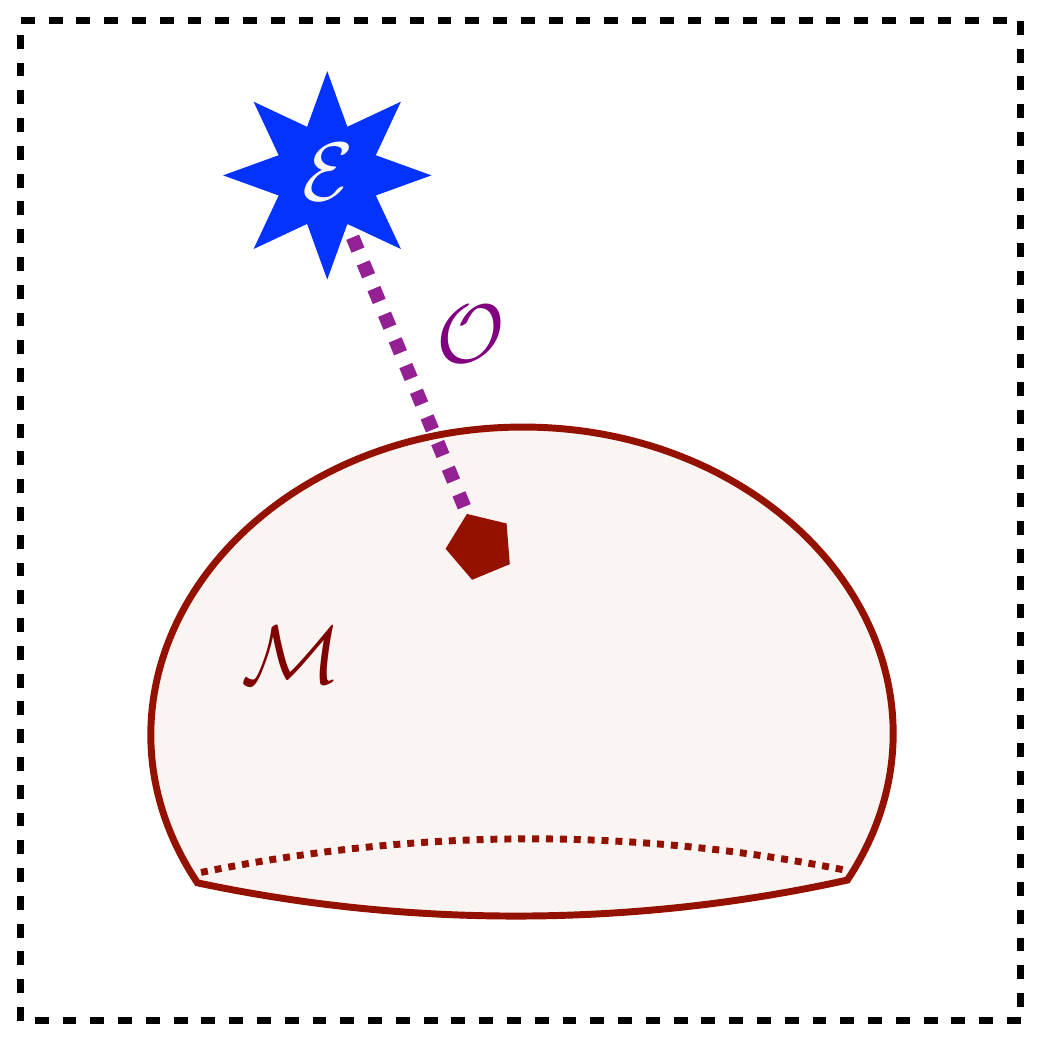}}} \\ 
 \ref{sec:observables} &{\bf Observables} &  $\displaystyle\O(\E) = \min_{\E'\in\mathcal M}\EMD(\E,\E')$ &   \\
\ref{sec:eventobservables} & Event Shapes~\cite{Brandt:1964sa,Farhi:1977sg,Georgi:1977sf,Larkoski:2014uqa,Stewart:2010tn,isotropytemp} &  &    \\
 \ref{sec:jetobservables} & Jet Shapes~\cite{Ellis:2010rwa,Thaler:2010tr,Thaler:2011gf}  &  &    \\
& &  &  \\ \hline
& &  & \multirow{5}{*}{\raisebox{-7em}{\includegraphics[scale=0.28]{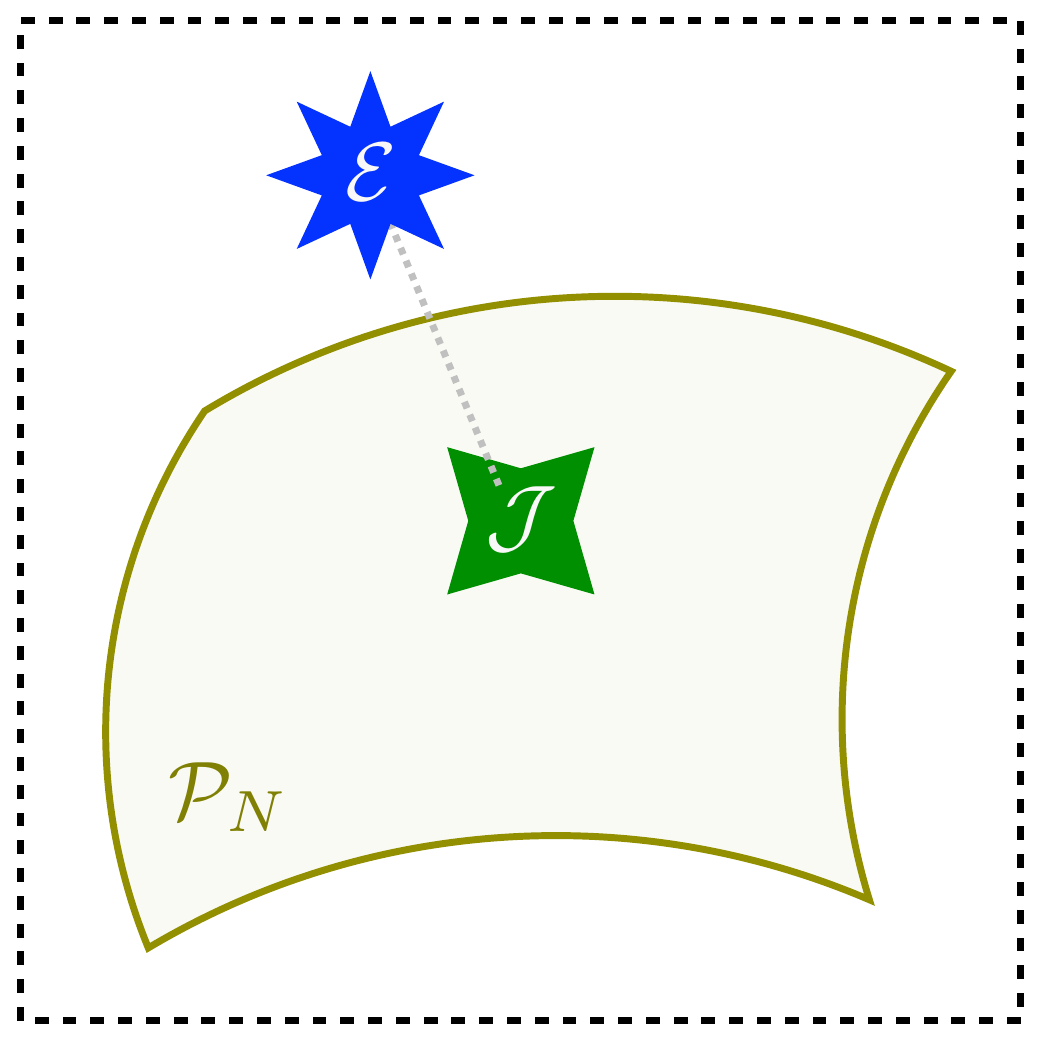}}} \\ 
 \ref{sec:jets} &  {\bf Jets} & $\displaystyle\mathcal J(\E)=\argmin_{\mathcal J\in\mathcal P_N}\,\EMD(\E,\mathcal J)$ &   \\
 \ref{subsec:xcone} &  Cone Finding~\cite{Stewart:2015waa,Thaler:2015xaa}  & & \\
 \ref{sec:seqrec} & Seq. Rec.~\cite{Catani:1993hr,Ellis:1993tq,Bertolini:2013iqa,Salambroadening} &  & \\
& &  & \\ \hline
& &  & \multirow{5}{*}{\raisebox{-7em}{\includegraphics[scale=0.28]{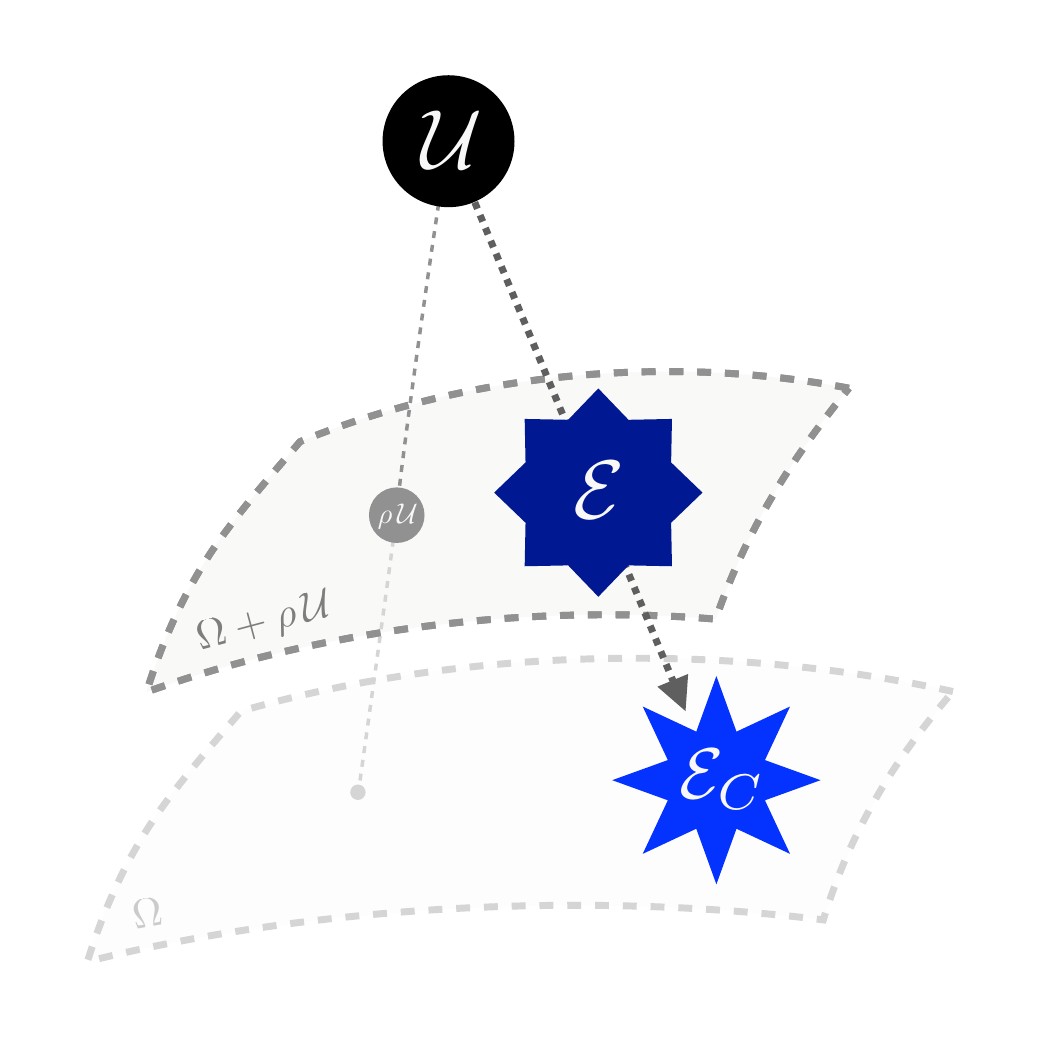}}} \\ 
\ref{sec:pileup} & {\bf Pileup Subtraction} &  $\mathcal E_C(\mathcal E, \rho)  = \displaystyle\argmin_{\mathcal E' \in \Omega}\,\EMD(\mathcal E,\mathcal E' + \rho\,\mathcal U)$ & \\
 &\cite{Cacciari:2007fd,Cacciari:2008gn,Soyez:2012hv,Berta:2014eza,Bertolini:2014bba,Soyez:2018opl,Berta:2019hnj}  &  & \\
& &  & \\
& &  & \\ \hline
& & & \multirow{3}{*}{\raisebox{-7.2em}{\includegraphics[scale=0.28]{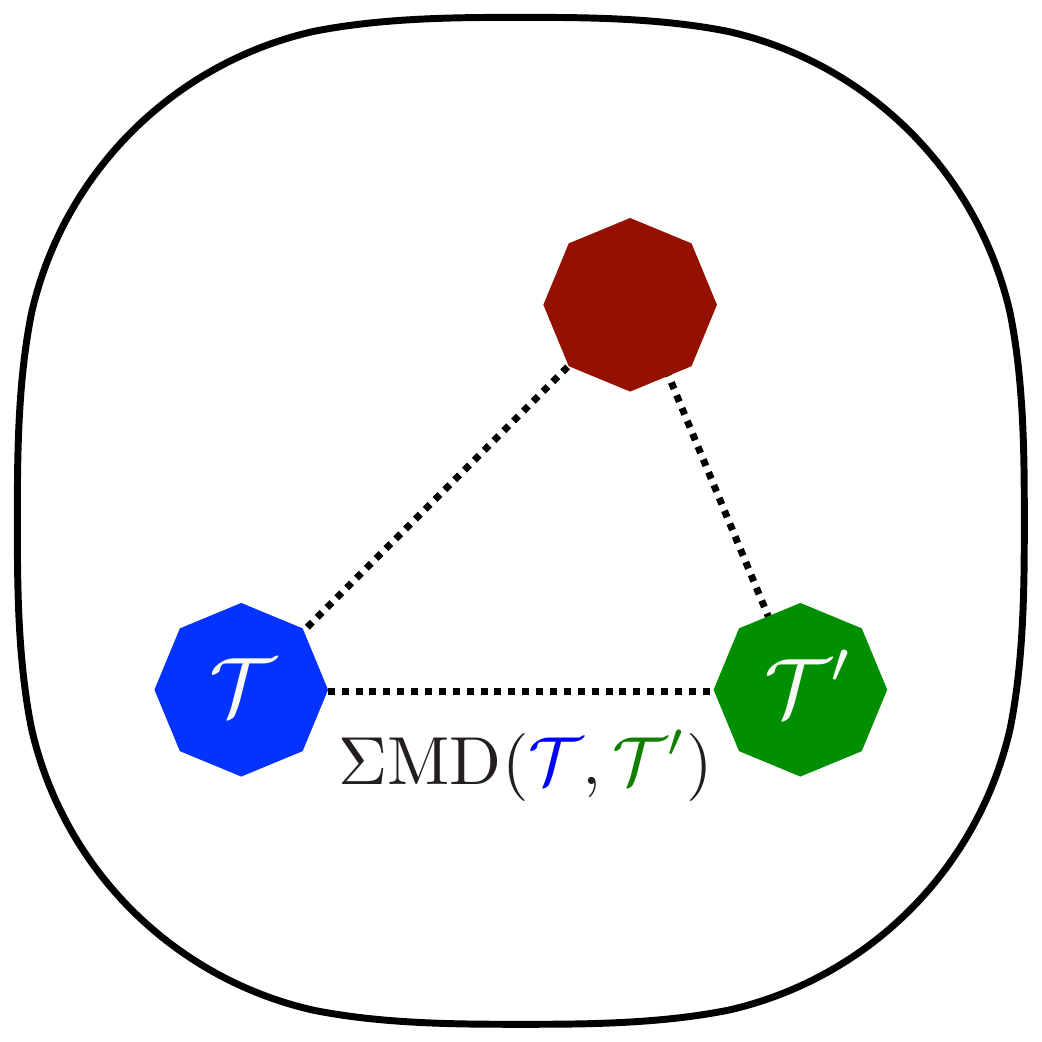}}}  \\
\ref{sec:theory} & {\bf Theory Space} & $\displaystyle\mathcal{T}(\mathcal E) = \sum_{i = 1}^N \sigma_i \, \delta(\mathcal E - \mathcal E_i)$ & \\
& & & \\
& & & \\
\hline\hline
\end{tabular}
\caption{\label{tab:outlinefigs}
Concepts from quantum field theory and collider physics, unified in this paper as geometric and topological constructions in the space of events.
In \Sec{sec:safety}, IRC safety is identified as continuity in this space.
In \Sec{sec:observables}, many classic collider observables are shown to be the shortest distance between the event and a manifold of events.
In \Sec{sec:jets}, popular jet algorithms are derived by projecting the event onto manifolds of $N$-particle events.
In \Sec{sec:pileup}, common pileup mitigation strategies are cast as transporting away uniform radiation.
In \Sec{sec:theory}, a space of theories is developed using a distance between event distributions.
}
\end{table}
\afterpage{\clearpage}

The key concepts unified in this paper are outlined and summarized in \Tab{tab:outlinefigs}.
In \Sec{sec:safety}, we discuss observables as functions defined on event radiation patterns and IRC safety as smoothness in the space of energy flows.
Colloquially, the label ``IRC safe'' indicates that an observable should be well-defined and calculable in perturbation theory~\cite{Kinoshita:1962ur,Lee:1964is} due to its robustness to long-distance effects (e.g.~hadronization in the case of QCD).
This ``perturbatively accessible'' IRC safety is traditionally connected to the observable being ``insensitive'' to the addition of low energy particles or collinear splittings of particles~\cite{Sterman:1977wj,Sterman:1978bi,Sterman:1978bj,Sterman:1979uw,sterman1995handbook,Weinberg:1995mt,Ellis:1991qj,Banfi:2004yd}.
Here, we refine the definition of IRC safety and clarify when discontinuities in an observable spoil its perturbative calculability.
Critical to our formulation is the notion of continuity with respect to the metric topology provided by the EMD:
\begin{definition}\label{def:emdcontinuity}
An observable $\O$ is $\EMD$ continuous at an event $\E$ if, for any $\epsilon>0$, there exists a $\delta>0$ such that for all events $\E'$:
\begin{equation}\label{eq:emdcontinuity}
\text{\emph{EMD}}(\E,\E')<\delta \quad \implies\quad |\O(\E) - \O(\E')| < \epsilon.
\end{equation}
\end{definition}
\noindent We argue that IRC safety is $\EMD$ continuity everywhere except a negligible set of events, where a negligible set is one that contains no EMD balls of non-zero radius.
Using the EMD provides a definition of IRC safety that does not refer to particles directly, which circumvents many pathologies of previous definitions.
We argue that observables that are calculable in fixed-order perturbation are exactly those that satisfy a slightly stronger continuity condition known as H\"older continuity~\cite{Ortega2000,Gilbarg2001}, which restricts the types of divergences that can appear in the distribution of an observable~\cite{Sterman:1979uw,Banfi:2004yd}.
Fascinatingly, this framework naturally accommodates Sudakov-safe observables~\cite{Larkoski:2013paa,Larkoski:2014wba,Larkoski:2015lea} as those that are IRC safe but fail to satisfy EMD H\"older continuity on a non-negligible subset of some $\mathcal P_N$ (where a non-negligible subset of $\mathcal P_N$ is one that has measure in $\mathcal P_N$).
This suggests, in agreement with \Ref{Larkoski:2015lea}, that Sudakov safe observables are indeed perturbatively calculable once properly regulated.

In \Sec{sec:observables}, we highlight that many well-known collider observables can be viewed as the distance of closest approach between an event and a manifold of events.
Many of the observables we consider can be exactly cast as:
\begin{equation}
\label{eq:obsdefemd}
\O(\E) = \min_{\E'\in\mathcal M}\EMD_{\beta,R}(\E,\E'),
\end{equation}
for particular choices of the manifold $\mathcal M$ and parameters $\beta$ and $R$.
Observables that have the form of \Eq{eq:obsdefemd} include thrust~\cite{Brandt:1964sa,Farhi:1977sg}, spherocity~\cite{Georgi:1977sf}, (recoil-free) broadening~\cite{Larkoski:2014uqa}, and $N$-jettiness~\cite{Stewart:2010tn}.
Particularly interesting is the event isotropy, recently proposed in \Ref{isotropytemp}, which was inspired by EMD geometry and is directly based on optimal transport.
This geometric framework also includes jet substructure observables such as jet angularities~\cite{Ellis:2010rwa} and  $N$-subjettiness~\cite{Thaler:2010tr,Thaler:2011gf}.

In \Sec{sec:jets}, we demonstrate how jet finding can be phrased in our geometric language.
Intuitively, a jet algorithm ``approximates'' an $M$-particle event with $N<M$ objects called jets.
To phrase this geometrically, we are interested in the point of closest approach in $\mathcal P_N$ to our event, allowing us to define jets as:
\begin{equation}
\label{eq:jetdefemd}
\mathcal J(\E)=\argmin_{\mathcal J\in\mathcal P_N}\,\EMD_{\beta,R}(\E,\mathcal J),
\end{equation}
where $\mathcal J$ is the collection of $N$ jets corresponding to the event $\E$.
Many common jet finding algorithms can be derived in full detail from \Eq{eq:jetdefemd}.
For instance, we show that jets defined by \Eq{eq:jetdefemd} are precisely those found by XCone~\cite{Stewart:2015waa,Thaler:2015xaa}, where $\beta$ is the angular weighting exponent and $R$ is the jet radius.
Also, several popular sequential clustering algorithms and recombination schemes, such as $k_T$ clustering~\cite{Catani:1993hr,Ellis:1993tq} with winner-take-all recombination~\cite{Bertolini:2013iqa,Larkoski:2014uqa,Salambroadening}, can be exactly obtained by iterating \Eq{eq:jetdefemd} with $N = M-1$ for various $\beta$.
It is satisfying that a rich diversity of jet algorithms can be concisely encoded using event geometry, and we find that several new schemes not previously appearing in the literature naturally emerge.

In \Sec{sec:pileup}, we connect several pileup mitigation strategies to optimal transport through the EMD.
There is a long-established relationship between pileup subtraction and geometric concepts~\cite{Cacciari:2007fd,Cacciari:2008gn,Soyez:2012hv,Berta:2014eza,Bertolini:2014bba,Soyez:2018opl,Berta:2019hnj}.
Since pileup is reasonably modeled as uniform contamination in rapidity and azimuth~\cite{Soyez:2018opl}, we phrase pileup subtraction as removing a uniform distribution of radiation from the event using optimal transport.
Intuitively, pileup mitigation finds the event that, when combined with an amount $\rho$ of uniform radiation $\mathcal U$, is closest to the given event:
\begin{equation}\label{eq:pileupemdintro}
\mathcal E_C(\mathcal E, \rho) = \argmin_{\mathcal E' \in \Omega}\,\EMD_\beta(\mathcal E,\mathcal E' + \rho\,\mathcal U),
\end{equation}
yielding the pileup-corrected event $\mathcal E_C$.
Here, $\Omega$ refers to the space of all possible energy flows and $\EMD_\beta$ compares events of equal energy, as described at the beginning of \Sec{sec:observables}.
We demonstrate that Voronoi area subtraction~\cite{Cacciari:2007fd,Cacciari:2008gn} and constituent subtraction~\cite{Berta:2014eza} can be phrased exactly as \Eq{eq:pileupemdintro} in the small-pileup limit.
Generalizing this to the large-pileup limit, we develop two new pileup subtraction schemes, Apollonius subtraction and iterated Voronoi subtraction, and discuss their prospects and potential advantages.

In \Sec{sec:theory}, we introduce a distance between theories: the cross section mover's distance (stylized as $\Sigma$MD, using the typical greek letter for cross section).
Here, a ``theory'' $\mathcal T$ is taken to be a distribution over (or collection of) events $\{\E_i\}$ weighted by cross sections $\{\sigma_i\}$:
\begin{equation}
\label{eq:T_defintro}
\mathcal{T}(\mathcal E) = \sum_{i = 1}^N \sigma_i \, \delta(\mathcal E - \mathcal E_i).
\end{equation}
The $\Sigma$MD is formulated as an optimal transport problem with EMD as the ground metric and cross sections as the weights.
The similarity of the constructions of EMD and $\Sigma$MD are highlighted in \Tab{tab:emdsmdcomp}.
Interestingly, we connect $\Sigma$MD to a recently proposed technique for probing jet modifications due to the quark-gluon plasma by comparing similar sets of events between proton-proton and heavy-ion collisions~\cite{Brewer:2018dfs}.
We also demonstrate that representative events can be identified by clustering using the $\Sigma$MD, analogously to how particles are clustered into jets.
The $\Sigma$MD provides the foundation for a rigorous formulation of ``theory space'', quantifying how different two theories are based on all of their physically observable quantities simultaneously.

\begin{table}[t]
\centering
\begin{tabular}{rcc}
\hline\hline
& Energy Mover's Distance & Cross Section Mover's Distance  \\ \hline \hline
Symbol & \text{EMD} & \text{$\Sigma$MD} \\
Description & Distance between events & Distance between theories \\ \hline
Weight & Particle energies $E_i$ & Event cross sections $\sigma_i$ \\
Ground Metric & Particle distances $\theta_{ij}$ & Event distances $\text{EMD}(\mathcal E_i, \mathcal E_j)$ \\
\hline\hline
\end{tabular}
\caption{\label{tab:emdsmdcomp}
Comparing the constructions of EMD and $\Sigma$MD as optimal transport problems.
Events are treated as energy-weighted angular distributions, whereas theories are treated as cross section-weighted event distributions.
This connection allows us to bootstrap the EMD as a ground metric for the $\Sigma$MD to develop a rigorous notion of theory space.
}
\end{table}

Our conclusions are presented in \Sec{sec:conc}, where we also highlight the interesting and unique interplay between machine learning and the natural sciences in this story.

\section{Infrared and collinear safety: Smoothness in the space of events}
\label{sec:safety}

\begin{figure}[t]
\centering
\includegraphics[width=\columnwidth]{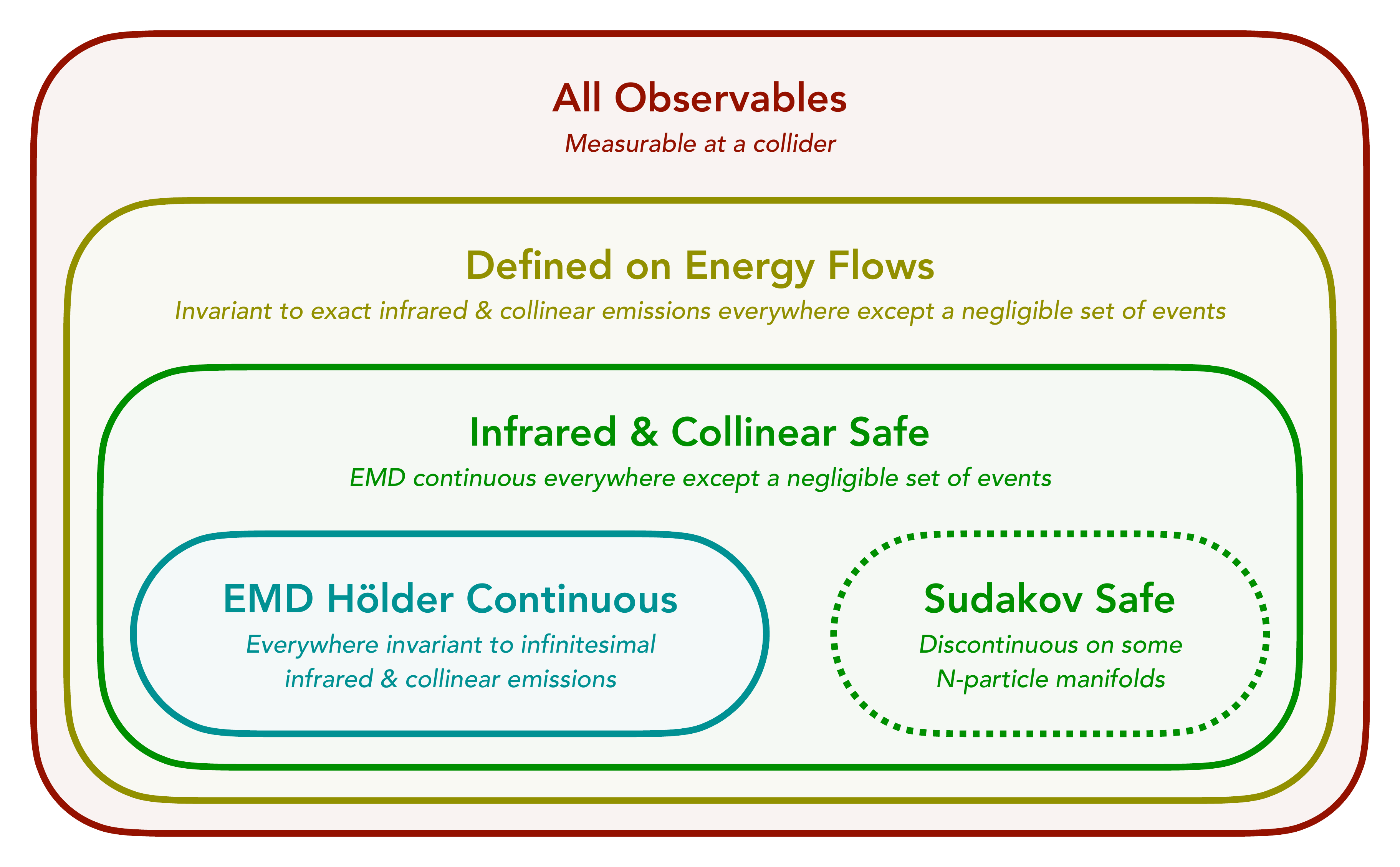}
\caption{\label{fig:obsset}
An illustration of the set of observables partitioned according to various IRC-invariance properties.
Examples of observables in each category are listed in \Tab{tab:exampleobs}.
}
\end{figure}

IRC safety is a central notion in collider physics because it indicates when an observable is robust to long distance effects and hence can be described in perturbation theory~\cite{Kinoshita:1962ur,Lee:1964is} using a combination of fixed-order calculations and resummation.
This insensitivity is frequently connected to the invariance of an observable under certain modifications of the event, namely soft and collinear splittings~\cite{Sterman:1977wj,Sterman:1978bi,Sterman:1978bj,sterman1995handbook,Weinberg:1995mt,Ellis:1991qj,Banfi:2004yd}.

In this section, we review some of the common mathematical statements of this invariance that have appeared in the literature, with the goal of clarifying and categorizing their implications.
We arrive at a simple, unified description of IRC safety and related concepts (including Sudakov safety) as statements about continuity in the space of energy flows.
In \Fig{fig:obsset}, we show the breakdown of observables into broad classes according to our categorization.
A few common examples of each category are given in \Tab{tab:exampleobs}.

\begin{table}[p]
\begin{flushright}

\begin{tabular}{p{0.56\textwidth} p{0.38\textwidth}}
\hline\hline
\cellcolor{table_red_bg} \textbf{\textcolor{table_red}{All Observables}} & \cellcolor{table_red_bg} Comments \\
 \hline
\cellcolor{table_red_bg} Multiplicity $\left(\sum_i 1\right)$ & \cellcolor{table_red_bg}  IR unsafe and C unsafe\\
\cellcolor{table_red_bg} Momentum Dispersion~\cite{CMS:2013kfa} $\left(\sum_i E_i^2 \right)$ & \cellcolor{table_red_bg}  IR safe but C unsafe\\ 
\cellcolor{table_red_bg} Sphericity Tensor~\cite{Bjorken:1969wi} $\left(\sum_i p_i^\mu p_i^\nu \right)$ & \cellcolor{table_red_bg}  IR safe but C unsafe\\
\cellcolor{table_red_bg} Number of Non-Zero Calorimeter Deposits & \cellcolor{table_red_bg}  C safe but IR unsafe \\
  \hline \hline
\end{tabular}

\begin{tabular}{p{0.51\textwidth} p{0.38\textwidth}}
 \multicolumn{2}{l}{\cellcolor{table_yellow_bg} \textbf{\textcolor{table_yellow}{Defined on Energy Flows}}} \\
\hline
\cellcolor{table_yellow_bg} Pseudo-Multiplicity ($\min \{N\, |\, \mathcal T_N = 0\}$)& \cellcolor{table_yellow_bg} Robust to exact IR or C emissions\\
\hline  \hline
\end{tabular}

\begin{tabular}{p{0.46\textwidth} p{0.38\textwidth}}
 \multicolumn{2}{l}{\cellcolor{table_green_bg} \textbf{\textcolor{table_green}{Infrared \& Collinear Safe}}} \\
 \hline
\cellcolor{table_green_bg} Jet Energy $\left(\sum_i E_i \right)$ &\cellcolor{table_green_bg}  Disc.\ at jet boundary\\
\cellcolor{table_green_bg} Heavy Jet Mass~\cite{Clavelli:1981yh} &\cellcolor{table_green_bg}  Disc.\ at hemisphere boundary \\
\cellcolor{table_green_bg} Soft-Dropped Jet Mass~\cite{Dasgupta:2013ihk,Larkoski:2014wba} & \cellcolor{table_green_bg} Disc.\ at grooming threshold\\
\cellcolor{table_green_bg} Calorimeter Activity~\cite{Pumplin:1991kc} ($N_{95}$) & \cellcolor{table_green_bg} Disc.\ at cell boundary\\
  \hline
  \hline
   \multicolumn{2}{l}{\cellcolor{table_green_bg} \textit{\textcolor{table_green}{Sudakov Safe}}} \\
  \hline
\cellcolor{table_green_bg} Groomed Momentum Fraction~\cite{Larkoski:2015lea} ($z_g$) & \cellcolor{table_green_bg} Disc.\ on $1$-particle manifold\\
\cellcolor{table_green_bg} Jet Angularity Ratios~\cite{Larkoski:2013paa} & \cellcolor{table_green_bg} Disc.\ on 1-particle manifold \\
\cellcolor{table_green_bg} $N$-subjettiness Ratios~\cite{Thaler:2010tr,Thaler:2011gf} ($\tau_{N+1} / \tau_{N}$) & \cellcolor{table_green_bg} Disc.\ on $N$-particle manifold\\
\cellcolor{table_green_bg} $V$ parameter~\cite{Banfi:2004yd} (\Eq{eq:Vobs}) & \cellcolor{table_green_bg} H\"{o}lder disc.\ on 3-particle manifold \\
 \hline \hline
\end{tabular}

\begin{tabular}{p{0.41\textwidth} p{0.38\textwidth}}
 \multicolumn{2}{l}{\cellcolor{table_teal_bg} \textbf{\textcolor{table_teal}{EMD H\"{o}lder Continuous Everywhere}}} \\
\hline
\cellcolor{table_teal_bg} Thrust~\cite{Brandt:1964sa,Farhi:1977sg} & \cellcolor{table_teal_bg} \\
\cellcolor{table_teal_bg} Spherocity~\cite{Georgi:1977sf} & \cellcolor{table_teal_bg}  \\ 
\cellcolor{table_teal_bg} Angularities~\cite{Berger:2003iw} & \cellcolor{table_teal_bg}  \\ 
\cellcolor{table_teal_bg} $N$-jettiness~\cite{Stewart:2010tn} $\left(\mathcal T_N\right)$ & \cellcolor{table_teal_bg}  \\
\cellcolor{table_teal_bg} $C$ parameter~\cite{Parisi:1978eg,Donoghue:1979vi,Ellis:1980wv,Catani:1997xc} & Resummation beneficial at $C = \frac34$ \cellcolor{table_teal_bg} \\
\cellcolor{table_teal_bg} Linear Sphericity~\cite{Donoghue:1979vi} $\left(\sum_i E_i n_i^\mu n_i^\nu \right)$ & \cellcolor{table_teal_bg}  \\
\cellcolor{table_teal_bg} Energy Correlators~\cite{Banfi:2004yd,Larkoski:2013eya,Larkoski:2014gra,Moult:2016cvt} & \cellcolor{table_teal_bg}  \\
\cellcolor{table_teal_bg} Energy Flow Polynomials~\cite{Komiske:2017aww,Komiske:2019asc} & \cellcolor{table_teal_bg}  \\
\hline\hline
\end{tabular}
\end{flushright}
\caption{
Examples of well-known collider observables, along with their classification according to \Fig{fig:obsset}.
The observables satisfy the conditions of all bold-faced categories above them in the table.
Note that via our classification, Sudakov safe observables are IRC safe, since the discontinuities appear on $N$-particle manifolds which are negligible sets in the full space.
}
\label{tab:exampleobs}
\end{table}

\subsection{Review of infrared and collinear invariance}
\label{sec:invariance}

The most straightforward statement of IRC invariance is that an observable $\mathcal O$ is unchanged under the addition of an exactly zero energy particle or an exactly collinear splitting~\cite{sterman1995handbook}:
\begin{align}
\label{eq:exactirsafety}\text{Exact Infrared Invariance:}&\quad \mathcal O(p_1^\mu, \ldots,p_M^\mu) =  \mathcal O(0 p_0^\mu, p_1^\mu,  \ldots, p_M^\mu),\\
\label{eq:exactcsafety}\text{Exact Collinear Invariance:}&\quad \mathcal O(p_1^\mu,  \ldots, p_M^\mu) = \mathcal O(\lambda p_1^\mu, (1-\lambda) p_1^\mu,\ldots, p_M^\mu),
\end{align}
for any soft momentum $p_0^\mu$ and collinear splitting fraction $\lambda\in[0,1]$.
These conditions correctly rule out some observables from having a perturbative description, such as the number of particles in an event, which change by a finite amount under any splitting.
Exact IRC invariance, however, is not sufficiently restrictive to guarantee perturbative calculability of an observable.
For instance, the number of calorimeter cells with non-zero energy is safe according to \Eqs{eq:exactirsafety}{eq:exactcsafety}, though it is highly sensitive to arbitrarily low-energy effects~\cite{Pumplin:1991kc}.
Similarly, the pseudo-multiplicity, which we define as the smallest $N$ that yields zero $N$-jettiness (see \Sec{subsec:nsubjettiness} below), is unchanged by exact infrared and collinear emissions,%
\footnote{We thank Andrew Larkoski for discussions related to this point.}
but is highly sensitive to any emissions at finite energy or angle.

Another common statement of IRC invariance refines the concept by invoking the limit as particles become soft or collinear~\cite{Sterman:1978bi,Sterman:1978bj,Weinberg:1995mt,Banfi:2004yd}:
\begin{align}
\label{eq:nearirsafety}\text{Near Infrared Invariance:}&\quad \mathcal O(p_1^\mu, \ldots,p_M^\mu) =  \lim_{\epsilon\to0}\mathcal O(\epsilon p_0^\mu, p_1^\mu,  \ldots, p_M^\mu),\\
\label{eq:nearcsafety}\text{Near Collinear Invariance:}&\quad \mathcal O(p_1^\mu, \ldots, p_M^\mu) = \lim_{p_0^\mu\to p_1^\mu}\mathcal O(\lambda p_{0}^\mu, (1-\lambda) p_1^\mu, \ldots, p_M^\mu).
\end{align}
One issue with this definition is that many reasonable observables that have hard boundaries in phase space are excluded, such as jet kinematics due to sensitivity to particles on a jet boundary.
Hybrid definitions mixing exact and near IRC invariance also appear in the literature but they suffer from the same pathologies.
Another issue is that \Eqs{eq:nearirsafety}{eq:nearcsafety} (and also \Eqs{eq:exactirsafety}{eq:exactcsafety}) do not guarantee insensitivity to multiple soft or collinear splittings.

Several of these issues were previously identified in \Ref{Banfi:2004yd}, which utilized a limit-based statement of IRC invariance, recognized the importance of allowing for multiple soft and collinear emissions, and allowed for exceptions on sets of measure zero.
Despite noting that a rigorous mathematical definition of IRC safety would be desirable, \Ref{Banfi:2004yd} concluded that formulating one without pathologies was challenging and that a satisfactory definition had not yet been obtained.
Here, we explore how the geometric picture provided by the EMD yields a natural and elegant way to phrase IRC safety and to control these various subtleties.
This builds on the notion of ``$C$-continuity'' advocated for in \Refs{Tkachov:1995kk,Tkachov:1999py}, which argue that the perturbative calculability of $C$-continuous observables can be seen by relating the energy flow to the stress-energy tensor of the underlying quantum field theory.

\subsection{Infrared and collinear safety in the space of events}
\label{sec:ircsafety}

The EMD provides a natural language for understanding IRC-safe observables as continuous functions on the space of events.
To make this precise, we first must understand which observables are well-defined functions of the energy flow.

We can show that observables that are defined on \emph{all} energy flows are precisely those which have exact IRC invariance according to \Eqs{eq:exactirsafety}{eq:exactcsafety}.
First, an observable is well defined on the space of energy flows if its value is the same on events that are zero EMD apart. 
The following lemma establishes the remaining connection to exact IRC invariance.
\begin{lemma}
Two events are zero EMD apart if and only if they differ by zero energy emissions or exactly collinear splittings.
\end{lemma}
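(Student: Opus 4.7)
My plan is to prove both directions directly from the definition of EMD in Eq.~\ref{eq:emd} together with the constraints in Eq.~\ref{eq:emdconstraints}, using the key observation that the ground metric $\theta_{ij}$ vanishes precisely when $\hat n_i = \hat n_j$.

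For the easy ($\Leftarrow$) direction, I would argue that zero-energy emissions and collinear splittings both preserve the energy flow as a distribution on the sphere and therefore admit explicit transport plans of cost zero. If $\E'$ is obtained from $\E$ by adjoining a particle with $E_0 = 0$, set $f_{0j} = 0$ for all $j$ and leave other flows unchanged; no cost is incurred and the energy-balance constraint is unaffected. If $\E'$ splits a particle $i$ of energy $E_i$ into two collinear daughters with $\hat n_a = \hat n_b = \hat n_i$ and $E_a + E_b = E_i$, define $f_{ia}=E_a$, $f_{ib}=E_b$; since $\theta_{ia}=\theta_{ib}=0$ the transport term vanishes, and the total energies agree so the second term in Eq.~\ref{eq:emd} vanishes too.

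For the harder ($\Rightarrow$) direction, suppose $\EMD(\E,\E') = 0$. Both non-negative terms in Eq.~\ref{eq:emd} must vanish separately, so $\sum_i E_i = \sum_j E_j'$ and there exists a feasible $\{f_{ij}\}$ with $\sum_{ij} f_{ij}(\theta_{ij}/R)^\beta = 0$. By non-negativity, $f_{ij} > 0$ forces $\theta_{ij} = 0$, i.e.\ $\hat n_i = \hat n_j$. Since total energies match, the constraints in Eq.~\ref{eq:emdconstraints} become equalities $\sum_j f_{ij} = E_i$ and $\sum_i f_{ij} = E_j'$. Grouping particles in each event by common direction $\hat n$ and summing these equalities over all $(i,j)$ with $\hat n_i = \hat n_j = \hat n$, I would conclude that the total energy deposited at each direction $\hat n$ agrees between $\E$ and $\E'$, i.e.\ the two energy flows in Eq.~\ref{eq:energyflow} are identical as measures on the sphere.

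The final step is to argue that two particle-level representations of the same energy flow differ only by collinear splittings and zero-energy emissions. At each direction $\hat n$ with non-zero total energy, the particles of $\E$ pointing along $\hat n$ and those of $\E'$ pointing along $\hat n$ are both collinear decompositions of a single energy $E(\hat n)$, so one can merge the $\E$ particles into a single collinear particle and then split it into the $\E'$ particles — a sequence of collinear splittings. Any particles in either event with $E_i = 0$ are precisely zero-energy emissions, not contributing to the energy flow. The main obstacle is being careful at this last combinatorial step, since one must allow merging as an inverse of splitting and must treat the direction-by-direction bookkeeping cleanly; once framed as ``equal energy flows $\Leftrightarrow$ equal distributions on the sphere'', the equivalence becomes essentially tautological.
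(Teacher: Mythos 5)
Your proof is correct, and the forward direction coincides with the paper's (exhibit an explicit zero-cost transport plan). Where you genuinely diverge is in the reverse direction: the paper simply invokes the fact that the EMD is a proper metric on energy flows and applies the identity of indiscernibles to conclude $\E=\E'$ as distributions, whereas you derive this equality from scratch out of the linear program itself --- both non-negative terms in \Eq{eq:emd} must vanish, the matching total energies promote the inequality constraints in \Eq{eq:emdconstraints} to equalities, $f_{ij}>0$ forces $\hat n_i=\hat n_j$, and summing the marginal constraints direction by direction shows the two energy flows agree as measures on the sphere. Your route is more elementary and self-contained: since the lemma is precisely a characterization of when $\text{EMD}=0$, leaning on the identity of indiscernibles (i.e.\ on the metric axioms of the Wasserstein distance) is close to assuming the hard part, and your LP argument supplies the missing content explicitly. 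The paper's approach buys brevity by outsourcing that step to the known theory of optimal transport metrics. Both proofs then finish with the same direction-by-direction bookkeeping --- equal energy deposited at each $\hat n$ means the two particle decompositions are related by collinear merges/splittings, and residual zero-energy particles are infrared emissions --- which you state with somewhat more care than the paper does.
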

\begin{proof}
Adding a zero energy particle or a collinear splitting to an event manifestly does zero energy moving, proving the forward direction.
To prove the reverse direction, suppose that two events are zero EMD apart and take their energy flows to be:
\begin{equation}
\mathcal E(\hat n) = \sum_{i=1}^M E_i\,\delta(\hat n - \hat n_i),\quad\quad \mathcal E'(\hat n) = \sum_{j=1}^{M'} E'_j \,\delta(\hat n - \hat n'_j).
\end{equation}
Since the EMD is a proper metric between energy flows, the identity of indiscernibles says that $\EMD(\E(\hat n),\E'(\hat n))=0$ implies $\E(\hat n)=\E'(\hat n)$.
For any direction $\hat n$ with at least one particle, either the sums of energies in that direction are equal between the two events or the particle has zero energy.
In the first case, the events differ by exactly collinear splittings in that direction, and in the second case they differ by zero energy particles.
\end{proof}
By this lemma we see that exact IRC invariance ensures that we can write $\mathcal O(\mathcal E)$ rather than $\mathcal O(p_1^\mu, \cdots, p_M^\mu)$ for an observable.
As discussed in \Sec{sec:invariance}, exact IRC invariance is insufficient to guarantee IRC safety and we must formulate a stronger condition phrased in the geometric language of the space of events.

We propose that IRC safety is achieved by requiring an observable to be EMD continuous, in the sense of Definition~\ref{def:emdcontinuity}, except possibly on a negligible set of events.
We define a negligible set to be one that contains no EMD ball.
The (open) \EMD ball $B_r(\E)$ around an event $\mathcal E$ is defined as all events within an EMD of $r>0$:
\begin{equation}
B_r(\E)=\left\{\E'\in\Omega\,\Big|\,\EMD(\E,\E')<r\right\},
\end{equation}
where $\Omega$ is the space of all energy flows.
Implicit in the above requirement is that an observable must be well defined on energy flows.
Concretely, we state IRC safety as the following:
\begin{framed}
\begin{ircsafety}
An observable is IRC safe if it is EMD continuous for all energy flows, except potentially on a negligible set of events.
\end{ircsafety}
\end{framed}
\noindent This new formulation of IRC safety has many aspects of existing ideas of safety discussed in \Sec{sec:ircsafety} wrapped into a concise and rigorous statement.
It makes mathematically precise the intuitive notion that small perturbations in the energy flow of the event give rise to small perturbations in the observable.
This notion of EMD continuity for IRC safe observables is illustrated in \Fig{fig:space_ircsafe}.
The exception for negligible sets allows observables to be discontinuous in a way that affords them the opportunity to depend sharply on phase space but does not spoil their calculability.
Calculability is a statement about integrability, and removing a negligible set of points from an integral cannot change its value.

\begin{figure}[t]
\centering
\includegraphics[scale=0.75]{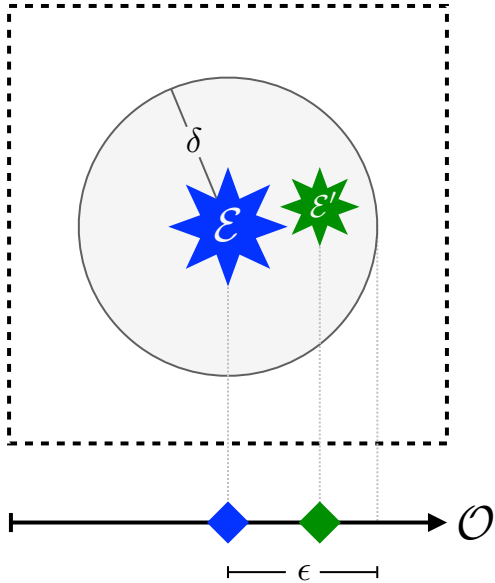}
\caption{\label{fig:space_ircsafe} An illustration of IRC safety of an observable as continuity in the space of events.
As formulated in \Eq{eq:emdcontinuity}, small perturbations to the event, as measured by EMD, yield small changes in the observable value.}
\end{figure}

To get some familiarity with this definition, consider additive IRC-safe observables, which are ubiquitous structures~\cite{Komiske:2019asc} that take the form $\mathcal O(\mathcal E) = \sum_{i=1}^M E_i f(\hat n_i)$ for an angular function $f$.
One can prove that they are Lipschitz continuous in the space of events assuming $f$ is Lipschitz continuous~\cite{Komiske:2019fks}, and therefore they naturally satisfy continuity according to the EMD.
As a generalization of additive observables, energy flow networks~\cite{Komiske:2018cqr} are a machine learning architecture that can approximate any IRC-safe observable through an additive IRC-safe latent space.
As long as the activation functions are continuous almost everywhere, then the final energy flow network output will be IRC safe.

There are also observables that fail the criteria of \Eqs{eq:nearirsafety}{eq:nearcsafety} for small sets of events but are safe according to our definition and are indeed calculable.
The energy of a jet is a simple example where emissions on the jet boundary result in discontinuous behavior of the observable, but this discontinuity is integrable in fixed-order perturbation theory.
A more complicated example is the invariant mass after soft drop grooming~\cite{Larkoski:2014wba,Dasgupta:2013ihk}: for events on the threshold of having an emission dropped, tiny perturbations can give rise to discontinuously large changes in the observable.
This issue, however, only occurs on a negligible set, satisfying our definition of safety and avoiding serious analytic pathologies~\cite{Frye:2016okc,Frye:2016aiz,Marzani:2017mva,Marzani:2017kqd}.
Piecewise continuity does, however, complicate analyzing the nonperturbative corrections~\cite{Hoang:2019ceu} and detector response~\cite{ATL-PHYS-PUB-2019-027,Aad:2019vyi} of soft-dropped jet mass.

Our definition also includes observables that would sometimes not be called IRC safe since they do not have a well defined Taylor expansion in the small parameter of the theory (e.g.\ $\alpha_s$ for QCD).
These observables are nevertheless perturbatively calculable, though methods beyond fixed-order perturbation theory may be required.
The next subsections are devoted to exploring which IRC-safe observables are calculable in fixed-order perturbation theory and which require additional techniques.

\subsection{Calculability in fixed-order perturbation theory}
\label{sec:fopt}

IRC safety has long been connected with the notion of calculability order-by-order in perturbative quantum field theory.
However, IRC safety according to our Definition~\ref{def:emdcontinuity} includes observables that are not calculable in fixed-order perturbation theory, which we explore further in the next subsection.
Here, building off the work in \Refs{Sterman:1979uw,Banfi:2004yd}, we formulate the stronger notion of EMD H\"older continuity~\cite{Ortega2000,Gilbarg2001} and argue that it is the appropriate condition to guarantee order-by-order perturbative control:
\begin{definition}\label{def:emdholdercontinuity}
An observable $\mathcal O$ is EMD H\"older continuous with exponent $\alpha\in(0,1]$ at an event $\E$ if there exists $K>0$ such that for all $\E'$ in some neighborhood of $\E$:
\begin{equation}
\label{eq:Holder}
|\mathcal O(\E)-\mathcal O(\E')|\le K\,\text{\emph{EMD}}(\E,\E')^\alpha.
\end{equation}
\end{definition}
\noindent Note that the case of $\alpha=1$ corresponds to Lipschitz continuity at $\E$, and in general we have containment such that H\"older continuity with exponent $\alpha$ implies H\"older continuity with exponent $\beta$ if $\beta\le\alpha$.
EMD H\"older continuity effectively specifies that the $\delta$ in Definition~\ref{def:emdcontinuity} is no smaller than $\epsilon$ to some power (times a constant) for all points in a neighborhood of $\E$, and thus it is a stronger requirement than plain EMD continuity.

To connect to fixed-order perturbation theory, we state the following conjecture:
\begin{framed}
\begin{conjecture}\label{conj:fopt}
An observable is calculable order-by-order in perturbation theory if it is EMD H\"older continuous on all but a negligible set of events in each $N$-particle manifold. 
\end{conjecture}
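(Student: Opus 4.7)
The plan is to show that EMD Hölder continuity together with the universal structure of infrared and collinear singularities in gauge-theory matrix elements is exactly the regularity needed to make the fixed-order phase-space integrals finite after KLN cancellation. Write the differential distribution at order $\alpha_s^n$ as a sum of contributions from $(N+k)$-particle final states with $k\le n$, including virtual corrections on each. Divergences in the $(N+k)$-particle real-emission piece arise precisely when the event $\E$ approaches a lower-multiplicity configuration $\E_0\in\mathcal{P}_{N+k-1}$ via soft or collinear limits, and the key kinematic observation is that $\EMD(\E,\E_0)$ vanishes in these same regions, scaling for an emission of energy $\epsilon$ at angle $\theta$ from a hard direction like $\epsilon(\theta/R)^\beta$, which tracks the natural soft-collinear measure.

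First I would reduce the conjecture to a local integrability statement near each singular stratum $\mathcal{P}_{N+k-1}\subset\mathcal{P}_{N+k}$. The KLN theorem guarantees that fully inclusive IR divergences from real and virtual corrections cancel; what remains in the differential observable distribution is controlled by the deviation $\mathcal{O}(\E)-\mathcal{O}(\E_0)$ weighted by the singular splitting kernels. Invoking Definition \ref{def:emdholdercontinuity} bounds this deviation by $K\,\EMD(\E,\E_0)^\alpha$, which in the soft-collinear region behaves like $(\epsilon\theta^\beta)^\alpha$. Since the matrix element times phase space is at worst logarithmically divergent (producing $1/\varepsilon$ poles in dimensional regularization) as $\epsilon,\theta\to 0$, any strictly positive power suppression is enough to render the integrand locally integrable and the cancellation against the virtual piece clean.

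Next I would extend the argument to overlapping soft and collinear limits at order $\alpha_s^n$ by induction on the number of simultaneously unresolved emissions. Given $\E\in\mathcal{P}_{N+n}$ approaching a stratum $\mathcal{P}_{N+n-m}$ via $m$ collapsing emissions, EMD Hölder continuity relative to any element of the target stratum still gives a power-law suppression in $\EMD$, mirroring the sector-decomposition or Catani-Seymour factorization of multi-emission singularities. The exception for negligible sets within each $\mathcal{P}_N$ is harmless because such sets have vanishing phase-space measure and contribute zero to any integrated observable, which is precisely why the conjecture allows such discontinuities without spoiling calculability.

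The main obstacle is making precise what \emph{calculable order-by-order} means and controlling the singularity structure uniformly across loop orders. Calculability of moments, cumulants, and differential distributions are not equivalent, and the latter may develop integrable singularities (such as the $C=\tfrac{3}{4}$ feature noted in \Tab{tab:exampleobs}) that are consistent with the conjecture but blur what ``calculable'' should mean. More seriously, the nesting of overlapping soft and collinear limits at $n$ loops must be controlled uniformly in the Hölder exponent, and it is not obvious that a single $\alpha$ suffices at all orders rather than a sequence $\alpha_n$ that could degrade. A fully rigorous proof would likely require an inductive construction along the lines of Banfi-Salam-Zanderighi recursive IR safety, combined with the EMD-metric perspective to unify soft and collinear singularities geometrically rather than bookkeeping them separately; it is this synthesis that I expect to be the technically demanding step.
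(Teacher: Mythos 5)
First, a point of orientation: the paper does not prove this statement. It is explicitly a conjecture, and the authors say so — ``the relation between our formalism and fixed-order perturbative calculability is phrased as a conjecture since additional subtleties or nuances about this type of calculability may emerge with future research.'' What the paper supplies instead is evidence of consistency: it shows that EMD H\"older continuity at an event is equivalent to the Sterman criteria of \Eqs{eq:stermanenergycond}{eq:stermanthetacond} (H\"older continuity separately in a soft particle's energy and in a collinear pair's opening angle, using $\EMD(\E,\E')\propto E_i$ and $\EMD(\E,\E')\propto\theta_{ij}$ in those limits), and it translates ``Version 2'' of the Banfi--Salam--Zanderighi definition into the EMD language. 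So there is no proof in the paper to compare against line by line; you are attempting something strictly stronger than what the authors claim.

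On its own merits, your sketch is the natural one — it is essentially the physical argument that historically motivated Sterman's criteria — but it has gaps that you partly concede and that are exactly why the authors stopped at a conjecture. (i) The reduction to ``local integrability near each stratum $\mathcal{P}_{N+k-1}$'' plus ``induction on the number of unresolved emissions'' glosses over the fact that the singular strata overlap: a configuration can be simultaneously soft, collinear, and multiply unresolved, and the bound $|\O(\E)-\O(\E_0)|\le K\,\EMD(\E,\E_0)^\alpha$ requires a choice of $\E_0$ that differs sector by sector, with constants $K$ and exponents $\alpha$ that must be uniform over the stratum — pointwise H\"older continuity does not supply this uniformity. This is precisely why Banfi--Salam--Zanderighi introduced \emph{recursive} IRC safety, which the paper notes is strictly more restrictive than EMD H\"older continuity; your induction would need to show that H\"older continuity alone suffices where they felt recursion was needed, and that step is missing. (ii) KLN handles fully inclusive cancellation, but the weight $\O(\E)$ breaks inclusivity; what you need is integrability of the \emph{mismatch} between real and virtual weights, and the Sterman-style argument establishes this only for moments of observables — the paper itself flags that the extension to distributions ``appears'' to hold but is not settled. (iii) ``At worst logarithmically divergent'' understates the nested $1/\varepsilon^{2n}$ singularity structure at order $n$, though a fixed positive power does beat any power of a logarithm, so this point is repairable. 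In short, your approach correctly identifies the hard step (uniform control of nested limits) but, as written, is a plausibility argument of roughly the same strength as the paper's own supporting discussion rather than a proof of Conjecture~\ref{conj:fopt}.
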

\end{framed}
\noindent This relation phrases the ideas of \Ref{Sterman:1979uw} and ``Version 2'' of the IRC safety definition of \Ref{Banfi:2004yd} in our geometric language via the EMD.
While these criteria were originally formulated for the calculability of moments of an observable, they appear to also extend to the calculability of distributions of observables~\cite{Sterman:2006uk}.

It is possible to demonstrate a precise equivalence between our Conjecture~\ref{conj:fopt} and the following criteria of \Ref{Sterman:1979uw} regarding when the average value of an observable $\O$ is calculable in fixed-order perturbation theory:
\begin{equation}
\label{eq:stermanenergycond}
\lim_{|\vec p_i|\to0}\frac{\O(\vec p_1,\ldots,\vec p_i,\ldots,\vec p_M)-\O(\vec p_1,\ldots,\vec p_{i-1},\vec p_{i+1},\ldots,\vec p_M)}{|\vec p_i|^a}=0,
\end{equation}
\begin{equation}
\label{eq:stermanthetacond}
\lim_{\theta_{ij}\to0}\frac{\O(\vec p_1,\ldots,\vec p_i,\ldots,\vec p_j,\ldots, \vec p_M)-\O(\vec p_1,\ldots,\vec p_i+\vec p_j,\ldots,\vec p_{j-1},\vec p_{j+1},\ldots, \vec p_M)}{\theta_{ij}^b}=0,
\end{equation}
where the powers $a$ and $b$ are positive and the choices of $i$ and $j$ are arbitrary.
Here, \Eq{eq:stermanenergycond} is a statement of H\"older continuity in the energy of particle $i$, which implies ordinary soft safety.
Similarly, \Eq{eq:stermanthetacond} is a statement of H\"older continuity in the angular distance between particles $i$ and $j$, which implies ordinary collinear safety.
In these soft and collinear limits, $\EMD(\mathcal E, \mathcal E') \propto E_i$ and $\EMD(\mathcal E, \mathcal E') \propto \theta_{ij}$ respectively, and so \Eqs{eq:stermanenergycond}{eq:stermanthetacond} can be phrased compactly as:
\begin{equation}
\lim_{\mathcal E' \to \mathcal E} \frac{\mathcal O(\mathcal E) - \mathcal O(\mathcal E')}{\EMD(\mathcal E, \mathcal E')^c} = 0.
\end{equation}
for some positive exponent $c$.
This is equivalent to the H{\"{o}}lder continuity of the observable $\mathcal O$ at $\E$ with some exponent $\alpha \ge c$, connecting the formulation of \Ref{Sterman:1979uw} to our conjecture.

Our Conjecture~\ref{conj:fopt} also nicely connects to ``Version 2'' of the IRC safety definition in \Ref{Banfi:2004yd}, which we restate here with a suggestive relabeling of the original notation.
The criteria for fixed-order calculability of an observable in \Ref{Banfi:2004yd} are as follows:
\begin{quote}
\Ref{Banfi:2004yd}: Given almost any fixed set of particles and any value $n$, then for any $\epsilon>0$, however small, there should exist a $\delta>0$ such that producing $n$ extra soft or collinear emissions, each emission being at a distance of no more than $\delta$ from the nearest particle, then the value of the observable does not change by more than $\epsilon$. Furthermore, there should exist a positive power $c$ such that for small $\epsilon$, $\delta^c$ can always be taken greater than $\epsilon$.
\end{quote}
By equipping the space of events with these topological and geometric structures via EMD, our language provides a natural language to sharply mathematically formulate this discussion.
The first sentence can be encoded as EMD continuity of the observable on all but a negligible set of events.
The power relation between the $\epsilon$ and $\delta$ parameters is precisely captured by EMD H\"{o}lder continuity with some exponent $\alpha > c$, connecting to our Conjecture~\ref{conj:fopt}.

A variety of observables are considered in \Ref{Banfi:2004yd} at the boundary of perturbative calculability, which helpfully illustrate the various requirements in their definition.\footnote{We thank Gavin Salam for discussions related to this point.}
An observable that is useful to consider is:
\begin{equation}\label{eq:Vobs}
V(\mathcal E) = \mathcal T_2(\mathcal E) \left(1 + \frac{1}{\ln E(\mathcal E)/{\mathcal T_3(\mathcal E)}}\right),
\end{equation}
where $\mathcal T_N$ are $N$-jettiness observables~\cite{Stewart:2010tn} discussed further in \Sec{sec:njettiness}, and $E$ is the total energy of the event.
We will refer to this observable as the ``$V$ parameter''.
The double logarithmic structure of $\mathcal T_3$ spoils the integrability of $V$ at fixed order due to its behavior as $\mathcal T_3$ goes to zero~\cite{Banfi:2004yd}, which occurs on the three-particle manifold $\mathcal P_3$.
Nonetheless, this observable can be calculated using techniques beyond fixed-order perturbation theory, such as the Sudakov safety approach discussed in the next section.

The relation between our formalism and fixed-order perturbative calculability is phrased as a conjecture since additional subtleties or nuances about this type of calculability may emerge with future research.
Nonetheless, it is very satisfying that our geometric language provides an efficient encapsulation and unification of the existing formulations of \Refs{Sterman:1979uw,Banfi:2004yd}.
In future work, it would be interesting to find a geometric phrasing of recursive IRC safety~\cite{Banfi:2004yd}, which is a more restrictive condition than EMD H\"older continuity and relevant for understanding factorization and resummation.
It would also be interesting to find a geometric phrasing of unsafe observables that can be nevertheless be computed with the help of non-perturbative fragmentation functions (see \Ref{Elder:2017bkd} for a broad class of such observables).
We hope that further refinements and developments will benefit from and be enabled by the rigorous geometric and topological constructions we have introduced for the space of events via the EMD.

\subsection{A refined understanding of Sudakov safety}

Sudakov-safe observables~\cite{Larkoski:2013paa,Larkoski:2014wba,Larkoski:2015lea} are an interesting class of observables that are not typically considered IRC safe because divergences may appear order by order in perturbation theory; this issue was originally pointed out in \Ref{Soyez:2012hv}.
Nevertheless, the distribution for a Sudakov-safe observable $\mathcal{O}_s$ can be computed perturbatively by calculating its conditional distribution with an IRC-safe companion observable $\mathcal{O}_c$, resumming the $\mathcal{O}_c$ distribution, and then marginalizing over $\mathcal{O}_c$ to obtain a finite answer~\cite{Larkoski:2015lea}:
\begin{equation}
\label{eq:sudakov_safe_strategy}
p(\mathcal{O}_s) = \int \text{d}\mathcal{O}_c \, p(\mathcal{O}_s | \mathcal{O}_c) \, p(\mathcal{O}_c).
\end{equation}
The conditional probability $p(\mathcal{O}_s | \mathcal{O}_c)$ can either be computed in fixed-order perturbation theory or it can be further resummed to obtain a more accurate prediction for $p(\mathcal{O}_s)$.

Here, we interpret Sudakov-safe observables as observables that \emph{are} IRC safe according to our definition but may be EMD (H\"{o}lder) discontinuous on sets with non-zero measure when restricted to some idealized massless $N$-particle manifold $\mathcal P_N$, defined in \Eq{eq:npmanifold}.
The relevant manifolds are the $N$-particle manifolds since these contain the infrared singular regions of massless gauge theories, namely configurations that differ by soft and collinear splittings.
The IRC safety of an observable according to our definition guarantees that any potentially problematic energy flows are infinitesimally close to energy flows for which the observable is well defined.
The strategy in \Eq{eq:sudakov_safe_strategy} also enables the computation of observables such as the $V$ parameter in \Eq{eq:Vobs}, which are EMD continuous everywhere but exhibit H\"{o}lder discontinuities on sets with non-zero measure in $\mathcal P_N$ and are therefore incalculable with fixed-order perturbation theory alone.

It is instructive to make a connection to practical methods of computing Sudakov-safe observables.
In a quantum field theory of massless particles, the cross section to produce events with exactly $N$ particles is zero (i.e.~the naive $S$-matrix is zero), and such theories ultimately yield smooth predictions in the space of events.
Hence, divergences that appear in the calculation of such an observable in a fixed-order expansion can be regulated by a joint, all-orders calculation of the observable and the distance from the problematic manifold $\mathcal P_N$.
This is precisely the strategy represented by \Eq{eq:sudakov_safe_strategy}, though \Ref{Larkoski:2015lea} did not provide a generic method to identify the companion observable $\mathcal{O}_c$.
In \Sec{sec:observables}, we will establish that the distance from an event to the manifold $\mathcal P_N$ is precisely $N$-(sub)jettiness~\cite{Stewart:2010tn,Thaler:2010tr,Thaler:2011gf}, suggesting that they are universal companion observables for the calculation of Sudakov-safe observables, in a similar spirit to \Refs{Alioli:2012fc,Alioli:2013hqa,Alioli:2015toa}.

It is worth mentioning that, even if an observable is EMD H\"{o}lder continuous everywhere, resummation along the lines of \Eq{eq:sudakov_safe_strategy} may still be beneficial for making reliable predictions.
The $C$-parameter~\cite{Parisi:1978eg,Donoghue:1979vi,Ellis:1980wv} is an example of an EMD H\"{o}lder continuous observable, yet its fixed-order perturbative distribution exhibits discontinuous behavior at $C = \frac{3}{4}$~\cite{Catani:1997xc}.
This perturbative discontinuity can be smoothed through soft-gluon resummation, and such techniques are relevant for other observables that exhibit Sudakov shoulder behavior~\cite{Larkoski:2015uaa}.
This is different, however, from Sudakov-safe observables, where the observable itself (and not just its distribution) is ill-defined on some $\mathcal P_N$.

To summarize, our definition of IRC safety does includes Sudakov-safe observables, but we argue that this is appropriate since such observables are indeed perturbatively accessible via regulation with $N$-(sub)jettiness.
This motivates the following conjecture:
\begin{framed}
\begin{conjecture}\label{conj:safe_is_calculable}
An observable is perturbatively calculable, using a combination of fixed-order and resummation techniques, if it is IRC safe according to the definition in \Sec{sec:ircsafety}.
\end{conjecture}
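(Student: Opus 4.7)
The plan is to prove the statement by reducing to Conjecture~\ref{conj:fopt} through a case analysis on where EMD H\"older continuity fails. Let $\O$ be an IRC-safe observable in the sense of our definition, so that $\O$ is EMD continuous on all of $\Omega$ except possibly a negligible set. Define
\begin{equation}
N^{*} = \min\{N \ge 0 \,:\, \O \text{ fails EMD H\"older continuity on a non-negligible subset of } \mathcal P_N\},
\end{equation}
with the convention $N^{*} = \infty$ if no such $N$ exists. The case $N^{*} = \infty$ is handled immediately by Conjecture~\ref{conj:fopt}, which already delivers order-by-order fixed-order calculability with no resummation needed. The only nontrivial case is $N^{*} < \infty$, which is precisely the Sudakov-safe regime.

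For $N^{*} < \infty$, I would realize the prescription in \Eq{eq:sudakov_safe_strategy} by taking the $N^{*}$-(sub)jettiness $\mathcal T_{N^{*}}$ as the companion observable $\mathcal O_c$. Anticipating the result of \Sec{sec:observables} that $\mathcal T_{N^{*}}(\E) = \min_{\E' \in \mathcal P_{N^{*}}} \EMD(\E,\E')$, conditioning on $\mathcal T_{N^{*}} = \tau > 0$ restricts the analysis to events at finite EMD distance from the problematic manifold. I would then argue that on each slice $\{\E \,:\, \mathcal T_{N^{*}}(\E) = \tau\}$ with $\tau > 0$, $\O$ is EMD H\"older continuous except on a negligible set: by minimality of $N^{*}$, $\O$ is H\"older continuous on all $\mathcal P_N$ with $N < N^{*}$, while the H\"older-discontinuous piece on $\mathcal P_{N^{*}}$ itself has been excised by the conditioning. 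Conjecture~\ref{conj:fopt} then delivers a fixed-order perturbative prediction for $p(\O \mid \mathcal T_{N^{*}})$.

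The remaining ingredients are standard. The marginal $p(\mathcal T_{N^{*}})$ is resummable to all orders using the well-developed factorization theorems for $N$-(sub)jettiness, and its Sudakov suppression as $\mathcal T_{N^{*}} \to 0$ regulates any singular behavior that $p(\O \mid \mathcal T_{N^{*}})$ inherits from the nearby manifold. Carrying out the marginalization in \Eq{eq:sudakov_safe_strategy} then produces a finite, perturbatively accessible prediction for $p(\O)$, combining fixed-order calculation of the conditional distribution with resummation of the companion distribution, as required.

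The main obstacle is the conditional H\"older continuity step. A priori, an IRC-safe observable could fail H\"older continuity not only on $\mathcal P_{N^{*}}$ but along transverse directions in $\Omega$ that are not isolated by conditioning on a single $\mathcal T_{N^{*}}$, in which case the conditional distribution would remain ill-controlled. Establishing that this pathology does not occur --- or, equivalently, identifying additional companion observables when it does --- is the crux of the problem and likely requires leveraging the universal soft and collinear singularity structure of gauge-theory amplitudes together with the fact that the EMD geometry transverse to $\mathcal P_{N^{*}}$ is organized precisely by $\mathcal T_{N^{*}}$. A secondary obstacle is verifying convergence of the marginal integral in generality, beyond the explicit examples ($z_g$, angularity and $N$-subjettiness ratios, and the $V$ parameter) already known to be amenable to this strategy. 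These gaps are why the statement is posed as a conjecture rather than a theorem.
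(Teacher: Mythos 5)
The paper offers no proof of this statement: it is posed explicitly as a conjecture, and the text immediately following it invites the reader to prove it or find a counterexample. So there is nothing to check your argument against line by line; what I can say is that your plan reproduces, in somewhat more formal dress, exactly the heuristic argument the paper itself uses to motivate the conjecture in \Sec{sec:fopt} and the Sudakov-safety subsection: split into the case covered by Conjecture~\ref{conj:fopt} (EMD H\"older continuity on all but negligible subsets of each $\mathcal P_N$, hence fixed-order calculability) and the Sudakov-safe case, where the paper likewise proposes $N$-(sub)jettiness --- precisely because it equals the EMD to $\mathcal P_N$ --- as the universal companion observable $\mathcal O_c$ in \Eq{eq:sudakov_safe_strategy}. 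Your introduction of $N^{*}$ and the conditional-slicing argument is a sensible sharpening, and you correctly locate the two genuine gaps: (i) conditioning on a single $\mathcal T_{N^{*}}$ need not excise H\"older discontinuities supported on non-negligible subsets of $\mathcal P_N$ for $N>N^{*}$ that do not lie near $\mathcal P_{N^{*}}$ (recall $\mathcal P_{N^{*}}\subset\mathcal P_N$, so the problematic region of $\mathcal P_N$ can sit at finite $\mathcal T_{N^{*}}$), and (ii) Conjecture~\ref{conj:fopt} as stated concerns observables, not conditional or joint distributions, so invoking it for $p(\O\mid\mathcal T_{N^{*}})$ requires an extension the paper itself only gestures at. Since you flag both issues and correctly conclude that they are why the statement remains a conjecture, your proposal should be read as a faithful reconstruction of the paper's motivating argument rather than a proof --- which is the most that can be asked here.
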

\end{framed}
\noindent Proving this conjecture, or finding a counterexample, would shed considerable light on the structure of perturbative quantum field theory.
Of course, even if an observable is perturbatively calculable, it may suffer from large non-perturbative or detector corrections, and it may be helpful to use the $K$ and $\alpha$ parameters in \Eq{eq:Holder} to assess the sensitivity of observables to long-distance effects.

\section{Observables: Distances between events and manifolds}
\label{sec:observables}

In this section, we show that a number of event-level and jet substructure observables can be identified as geometric quantities in the space of events.
Broadly speaking, the observables we consider take the general form of a distance between an event and a manifold, as in \Eq{eq:obsdefemd}.
The illustration in \Fig{fig:space_obs} shows an observable as a distance between geometric objects in the space of events.
While not all IRC-safe observables can be written in this way, a remarkably large family of classic observables take precisely this geometric form.
We will work with unnormalized observables here, but normalized versions can be obtained by dividing by the total energy (or transverse momentum in the hadronic case).

\begin{figure}[t]
\centering
\includegraphics[scale=0.6]{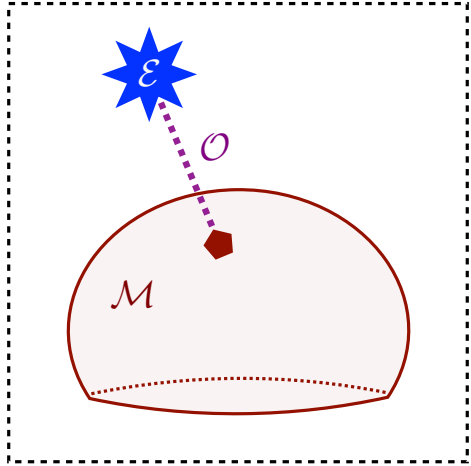}
\caption{\label{fig:space_obs} An illustration of an observable $\mathcal O$ as the distance of closest approach, as measured by the EMD, between the event $\mathcal E$ and a manifold $\mathcal M$ of events.
Many classic collider observables fit into this precise form, stated in \Eq{eq:obsdefemd}, with particular choices of manifold.
}
\end{figure}

We begin by discussing thrust and spherocity, where the manifold is the set of all back-to-back two-particle events.
To understand (recoil-free) broadening, we expand the manifold to all two-particle events, beyond just back-to-back configurations.
Then, to connect to $N$-jettiness, we utilize the idealized $N$-particle manifold defined in \Eq{eq:npmanifold}.
Our geometric language gives clear and intuitive explanations of what physics these observables probe and why they take the forms that they do.
While these EMD formulations do not necessarily lead to practical computational improvements, we do highlight ways to speed up the numerical evaluation of event isotropy using techniques from the optimal transport literature.
Finally, we identify jet angularities and $N$-subjettiness as jet substructure observables obeying similar principles at the level of jets.

\begin{table}[p]
\centering
\begin{tabular}{rccl}
\hline
\hline
  & \multicolumn{3}{l}{$\mathcal O(\mathcal E)=\displaystyle \min_{\mathcal E'\in\mathcal M} \text{EMD}_\beta(\mathcal E, \mathcal E')$} \\
 Name &  & $\beta$ & Manifold $\mathcal M$ \\ \hline\hline
Thrust & $t(\mathcal E)$ & 2& $\mathcal P^{\rm BB}_2$:  2-particle events, back to back \\
Spherocity & $\sqrt{s(\mathcal E)}$ & 1 &  $\mathcal P^{\rm BB}_2$:  2-particle events, back to back \\ 
Broadening & $b(\mathcal E)$ & 1 & $\mathcal P_2$:  2-particle events \\ 
$N$-jettiness & $\mathcal T_N^{(\beta)} (\mathcal E)$& $\beta$ & $\mathcal P_N$:  $N$-particle events \\ 
Isotropy & $\mathcal I^{(\beta)}(\mathcal E)$ & $\beta$ & $\mathcal M_{\mathcal{U}}$: Uniform events\\
\hline
Jet Angularities & $\lambda_\beta(\mathcal J)$ & $\beta$ & $\mathcal P_1$:   1-particle jets\\ 
$N$-subjettiness & $\tau_N^{(\beta)}(\mathcal J)$ & $\beta$ & $\mathcal P_N$:   $N$-particle jets \\ 
\hline
\hline
\end{tabular}
\caption{Observables as the EMD between the event $\mathcal E$ and a manifold $\mathcal{M}$, using the EMD definition in \Eq{eq:emd_noR}.
Several of these observables are illustrated in \Fig{fig:space_manyobs}.
Here, we consider only the ``recoil-free'' versions of these observables.}
\label{tab:obs}
\end{table}

\begin{figure}[p]
\centering
\subfloat[]{\includegraphics[scale=0.685]{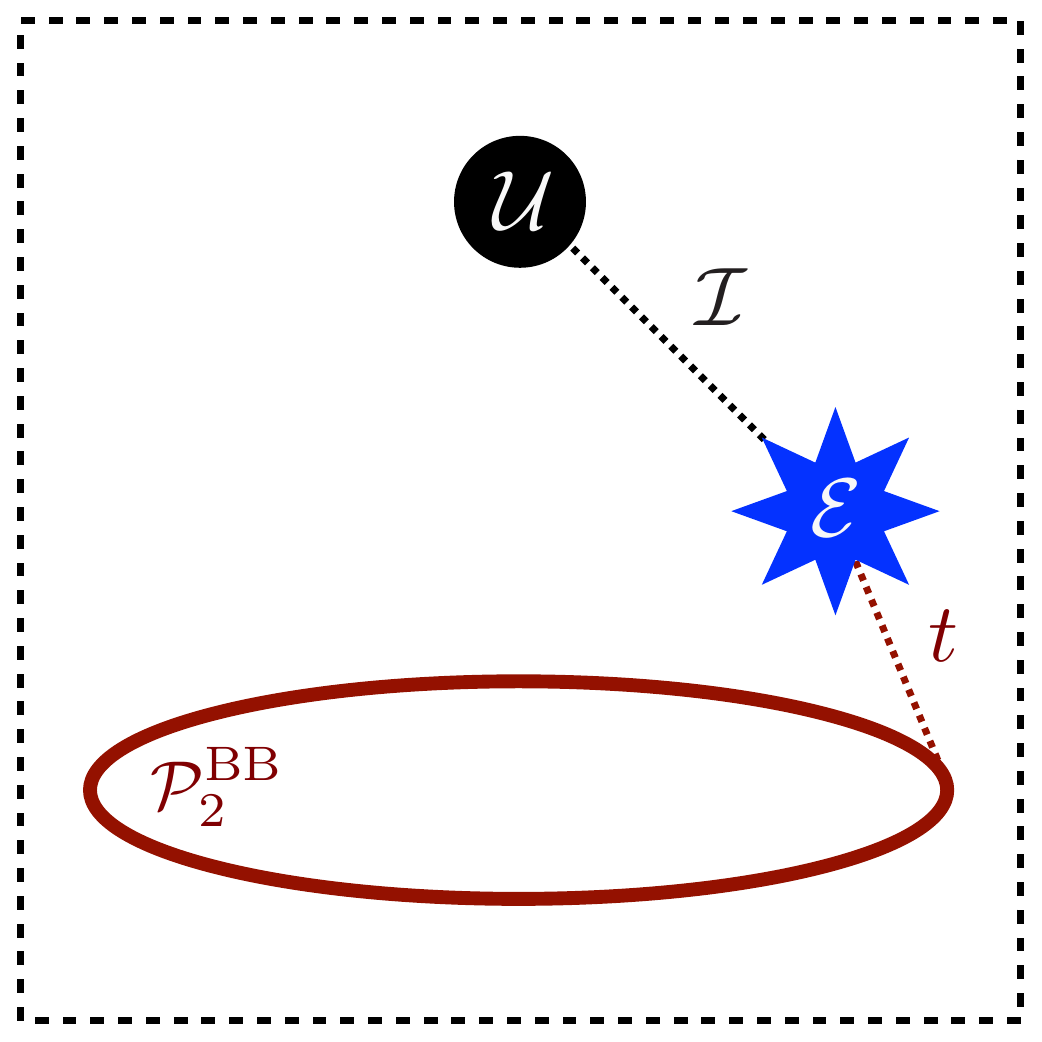}}\hspace{4mm}
\subfloat[]{\includegraphics[scale=0.685]{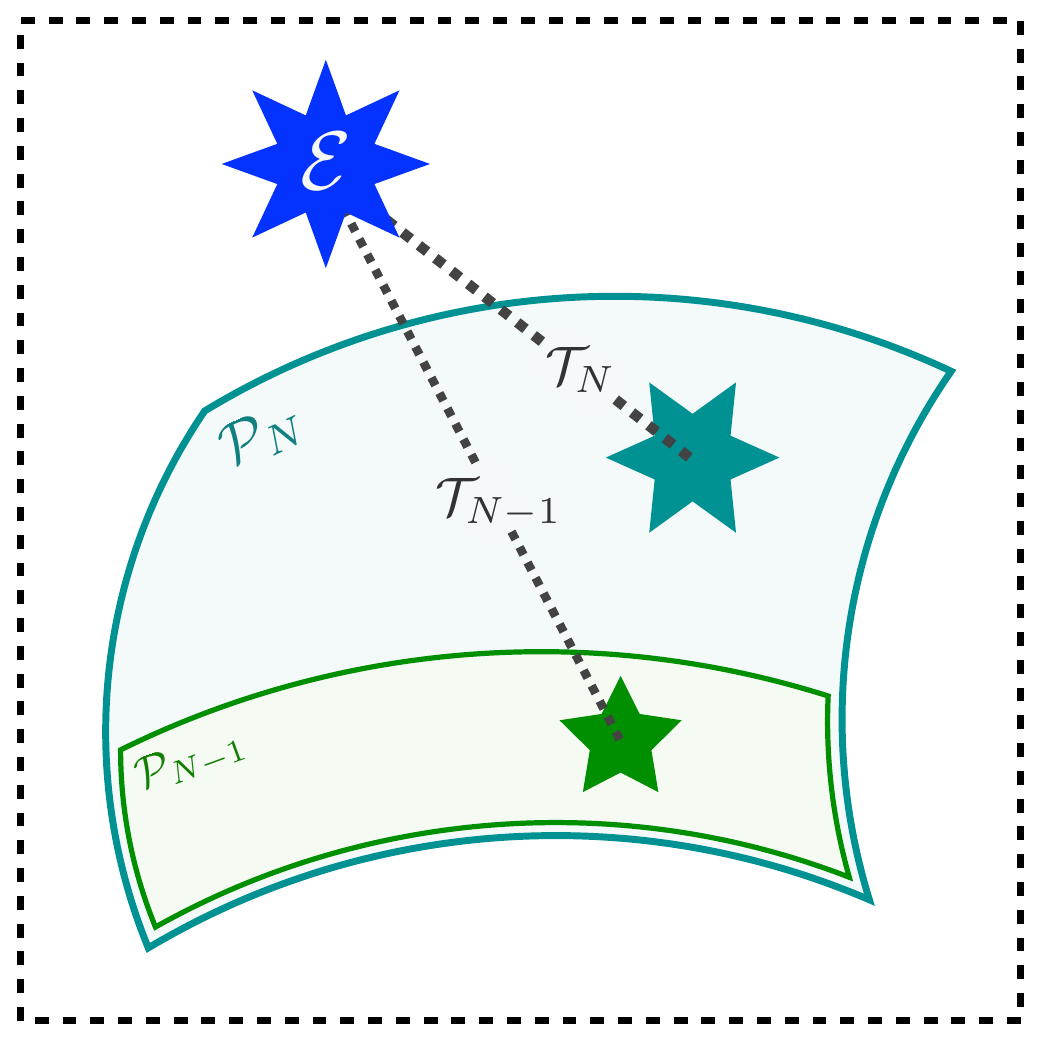}}
\caption{\label{fig:space_manyobs} An illustration of a variety of observables as distances between an event $\mathcal E$ and various manifolds in the space of events, as summarized in \Tab{tab:obs}.
(a) Thrust $t$ is the smallest distance from the event to the manifold $\mathcal P_2^\text{BB}$ of two-particle back-to-back events, while event isotropy $\mathcal I$ is the distance to the uniform event $\mathcal U$.
(b) $N$-jettiness observables $\mathcal T_N$ are the smallest distances from the event to the $N$-particle manifolds $\mathcal P_N$.
}
\end{figure}

For most of the observables in this section, the $R$ parameter is not needed, in which case we define a notion of EMD relevant for comparing events with equal energies:
\begin{equation}
\label{eq:emdRtoinfty}
\EMD_{\beta} (\mathcal E, \mathcal E') = \lim_{R \to \infty} R^\beta \, \EMD_{\beta,R} (\mathcal E, \mathcal E').
\end{equation}
This only has a finite limit if $\mathcal E$ and $\mathcal E'$ have the same total energy, which is a useful property to simplify our analysis.
Explicitly, when comparing events with equal energy, this EMD simplifies to:
\begin{equation}
\label{eq:emd_noR}
\EMD_{\beta} (\mathcal E, \mathcal E') = \min_{\{f_{ij}\ge0\}} \sum_{i=1}^M\sum_{j=1}^{M'} f_{ij} \theta_{ij}^\beta ,
\end{equation}
\begin{equation}
\label{eq:emdconstraints_noR}
\sum_{i=1}^M f_{ij} = E_j', \quad\quad \sum_{j=1}^{M'} f_{ij} = E_i, \quad\quad \sum_{i=1}^M\sum_{j=1}^{M'} f_{ij} = \sum_{i=1}^M E_i = \sum_{j=1}^{M'}E_j'.
\end{equation}
This will be the precise notion of EMD we use when the $R$ subscript is suppressed.

In \Tab{tab:obs}, we summarize some of the observables considered below and their geometric interpretations.
In \Fig{fig:space_manyobs}, we illustrate the geometric construction of many of these observables, which we will explore in detail below.

\subsection{Event-level observables}
\label{sec:eventobservables}

\subsubsection{Thrust}
\label{subsubsec:thrust}

Thrust is an observable that quantifies the degree to which an event is pencil-like~\cite{Brandt:1964sa,Farhi:1977sg,DeRujula:1978vmq}.
It has been experimentally measured~\cite{Barber:1979bj,Bartel:1979ut,Althoff:1983ew,Bender:1984fp,Abrams:1989ez,Li:1989sn,Decamp:1990nf,Braunschweig:1990yd,Abe:1994mf,Heister:2003aj,Abdallah:2003xz,Achard:2004sv,Abbiendi:2004qz} and theoretically calculated~\cite{Gehrmann-DeRidder:2007nzq,GehrmannDeRidder:2007hr,Becher:2008cf,Weinzierl:2009ms,Abbate:2010xh,Abbate:2012jh} in detail for electron-positron collisions.
Thrust seeks to find an axis $\hat n$ (the ``thrust axis'') such that most of the radiation lies in the direction of either $\hat n$ or $-\hat n$; i.e.~it maximizes the amount of radiation \emph{longitudinal} to the thrust axis.
While a variety of conventions for defining thrust exist, here we use the following dimensionful definition:
\begin{equation}\label{eq:thrust}
t(\mathcal E) = 2\min_{\hat n}\sum_{i=1}^M|\vec p_i|(1- |\vec n_i \cdot \hat n|),
\end{equation}
where $\hat n_i = \vec p_i/|\vec p_i|$ and other definitions follow by simple rescalings.
A thrust value of zero corresponds to an event consisting of two back-to-back prongs, while its maximum value of the total energy corresponds to a perfectly spherical event.

Interestingly, the value of thrust in \Eq{eq:thrust} is equivalent to the cost of an optimal transport problem.
This connection will allow us to cast thrust as a simple geometric quantity written in terms of the EMD.
Using $E_i=|\vec p_i|$ for massless particles and writing out the absolute value, we can cast \Eq{eq:thrust} as:
\begin{equation}\label{eq:thrust2}
t(\mathcal E) = 2\min_{\hat n}\sum_{i=1}^M E_i \min(1 - \hat n_i \cdot \hat n,\, 1 + \hat n_i \cdot \hat n).
\end{equation}
For a fixed $\hat n$, the summand in \Eq{eq:thrust2} is the transportation cost to move particle $i$ to the closer of $\hat n$ or $-\hat n$ with an angular measure of $\theta_{ij}^2 = 2n_i^\mu n_{j\mu}= 2 (1 - \hat n_i \cdot \hat n_j)$.
The sum is then the EMD between the event and a two-particle event consisting of back-to-back particles directed along $\hat n$, where the energy of each of the two particles is equal to the total energy in the corresponding hemisphere.
The minimization over $\hat n$ is equivalent to a minimization over all such two-particle events.

Thus, thrust is our first example of an observable that can be cast in the form of \Eq{eq:obsdefemd}.
First, we define the manifold of back-to-back two-particle events:
\begin{equation}
\label{eq:bbmanifold}
\mathcal P^{\rm BB}_2 = \left\{\left. \sum_{i=1}^2 E_i\, \delta(\hat n - \hat n_i)\,\, \right| \,\, E_i \ge 0, \,\, \hat{n}_1 = - \hat{n}_2 \right\}.
\end{equation}
Then, using the notation of \Eq{eq:emd_noR} with $\beta=2$,%
\footnote{As mentioned in footnote \ref{footnote:pWasser}, strictly speaking only the square root of $\text{EMD}_2$ is a proper metric.
Because the square root is a monotonic function, though, this has no impact on the interpretation of thrust as an optimal transport problem.}
thrust is the smallest EMD from the event to the $\mathcal P^{\rm BB}_2$ manifold:
\begin{equation}
\label{eq:thrustasEMD}
\begin{boxed}{
t(\mathcal E) = \min_{\mathcal E' \in \mathcal P^{\rm BB}_2} \text{EMD}_2(\mathcal E, \mathcal E'),}
\end{boxed}
\end{equation}
where the minimization is carried out over all back-to-back two-particle configurations.

Because of the $R \to \infty$ limit in \Eq{eq:emdRtoinfty}, the optimal back-to-back configuration is guaranteed to have the same total energy as the event $\mathcal E$, as desired.
Note that even if this analysis is carried out in the center-of-mass frame, the optimal back-to-back configuration will generically not be at rest, since it involves two massless particles with different energies.%
\footnote{We thank Samuel Alipour-fard for discussions related to this point.}
This suggests a possible variant of thrust where one restricts the two-particle manifold to only include events that are physically accessible, either by forcing $E_1 = E_2$ or by considering massive particles as in \App{sec:mass}.

\subsubsection{Spherocity}

Spherocity is an observable that also probes the jetty nature of events~\cite{Georgi:1977sf}.
It seeks to find an axis that minimizes the amount of radiation in the event \emph{transverse} to it according to the following criterion:
\begin{equation}
\label{eq:spherocity_orig}
s(\E) = \min_{\hat n} \left(\sum_{i=1}^ME_i|\vec n_i\times\hat n| \right)^2,
\end{equation}
where the original definition of spherocity is related to this by an overall rescaling.
In the small $s$ limit, where the event configurations are back to back, we can write $|\vec n_i\times\hat n|\simeq \sqrt{2(1- |\hat n_i\cdot\hat n|)}$ and obtain:
\begin{equation}
\label{eq:spherocity}
s(\mathcal E)  \simeq \min_{\hat n} \left( \sum_{i=1}^M E_i \sqrt{2(1 - |\hat n_i \cdot \hat n|)} \right)^2.
\end{equation}
We focus on this limiting form for the following discussion.

Similar to the case of thrust, we can identify the spherocity expression to be minimized as an optimal transport problem.
For a fixed $\hat n$, the summand in \Eq{eq:spherocity} is the cost to transport particle $i$ to the closer of $\hat n$ or $-\hat n$ with an angular measure of $\theta_{ij}=\sqrt{2n^\mu_i n_{j\mu}}$.%
\footnote{In fact, \Eq{eq:spherocity_orig} is already an optimal transport problem, using $\theta_{ij} = \sin \Omega_{ij}$, where $\Omega_{ij}$ is the opening angle between particles $i$ and $j$.  This has the same small angle behavior as $\theta_{ij} = 2 \sin \frac{\Omega_{ij}}{2}$ from \Eq{eq:theta_def}.}
The sum is once again the EMD from the event to the manifold of back-to-back events, with the minimization over $\hat n$ interpreted as a minimization over the manifold.

Spherocity, in the appropriate limit, is therefore the square of the smallest EMD (with $\beta=1$) from the event to the manifold $\mathcal P^{\rm BB}_2$ from \Eq{eq:bbmanifold}:
\begin{equation}\label{eq:spherocityasEMD}
\begin{boxed}{
\sqrt{s(\mathcal E)} = \min_{\mathcal E' \in \mathcal P^{\rm BB}_2} \text{EMD}_1(\mathcal E, \mathcal E').
}\end{boxed}
\end{equation}
Through this lens, spherocity differs from thrust (besides the overall exponent) solely in the angular weighting factor:  $\beta=1$ for spherocity and $\beta=2$ for thrust.
One could continue in this direction, defining the distance of closest approach for general $\beta$.
(This is related to the event shape angularities~\cite{Berger:2003iw}, with a key difference being that angularities are traditionally measured with respect to the thrust axis.)
Instead, we now turn towards enlarging the manifold itself.

\subsubsection{Broadening}

Recoil-free broadening~\cite{Larkoski:2014uqa} is an observable that is sensitive to two-pronged events that are not precisely back-to-back jets.
Here we focus on recoil-free broadening, to be distinguished from the original jet broadening~\cite{Rakow:1981qn,Ellis:1986ig,Catani:1992jc} which is defined in terms of the thrust axis.%
\footnote{There is an EMD-based definition of the original jet broadening, using the thrust axis defined by $\mathcal E_t = \argmin_{\mathcal E' \in \mathcal P^{\rm BB}_2} \text{EMD}_2(\mathcal E, \mathcal E')$.  With modified angular measure and normalization, the original jet broadening with respect to the thrust axis is $b_t(\mathcal E) = \text{EMD}_1(\mathcal E, \mathcal E_t)$.  Note the two different values of $\beta$ in these expressions.}
It differs from spherocity only in that it minimizes the same quantity over two ``kinked'' axes that need not be antipodal.
Though subtle, this difference gives rise to very important theoretical differences between broadening and spherocity in the treatment of soft recoil effects~\cite{Dokshitzer:1998kz}, as discussed extensively in~\Ref{Larkoski:2014uqa}.

Here, we use the following definition of broadening:
\begin{equation}\label{eq:broadening}
b(\mathcal E) = \min_{\hat n_L,\,\hat n_R} \sum_{i=1}^M E_i \min(\theta_{iL}, \theta_{iR}),
\end{equation}
where $\theta_{iL}$ and $\theta_{iR}$ are the angular distances between particle $i$ and $\hat n_L$ and $\hat n_R$, respectively.
The fact that $\hat n_L$ and $\hat n_R$ are minimized separately (rather than $\hat n_L = -\hat n_R$) is the key distinction between recoil-free broadening and previous observables.
For a fixed $\hat n_L$ and $\hat n_R$, the summand in \Eq{eq:broadening} is the cost to transport particle $i$ to the closer of $\hat n_L$ or $\hat n_R$ with an angular measure of $\theta_{ij} = \sqrt{2n_i^\mu n_{j\mu}}$.
The sum is then the EMD from the event to the manifold of all two-particle events, which need not be back-to-back, namely $\mathcal P_2$ from \Eq{eq:npmanifold}.
The minimization over $\hat n_L$ and $\hat n_R$ is then interpreted as a minimization over this manifold.

Thus, broadening is the smallest EMD with $\beta=1$ from the event to $\mathcal P_2$:
\begin{equation}\label{eq:broadeningasEMD}
\begin{boxed}{
b(\mathcal E) = \min_{\mathcal E' \in \mathcal P_2} \text{EMD}_1(\mathcal E, \mathcal E').
}\end{boxed}
\end{equation}
The geometrical formulation of broadening in \Eq{eq:broadeningasEMD} differs from that of spherocity in \Eq{eq:spherocityasEMD} only in that it does not restrict the manifold to back-to-back configurations.%
This distinction is important to extend these ideas beyond the two-particle manifold.

\subsubsection{$N$-jettiness}
\label{sec:njettiness}

$N$-jettiness~\cite{Stewart:2010tn} (see also \Ref{Brandt:1978zm}) is an observable that partitions an event into $N$ jet regions and, for hadronic collisions, a beam region.
Without a beam region, it is defined based on a minimization procedure over $N$ axes:
\begin{equation}\label{eq:Njettiness}
\mathcal T_N^{(\beta)} = \min_{\hat n_1,\cdots,\hat n_N} \sum_{i=1}^M E_i \min\left(\theta_{i1}^\beta, \theta_{i2}^\beta, \cdots, \theta_{iN}^\beta\right),
\end{equation}
where $\theta_{i1}$ through $\theta_{iN}$ are the angular distances between particle $i$ and axes $\hat n_1$ through $\hat n_N$, respectively.

We immediately identify the summand as the cost of transporting particle $i$ to the nearest axis.
For fixed $\hat n_1$ through $\hat n_N$, assigning the energy transported to each axis as the energy of that axis gives rise to an $N$-particle event.
The expression to be minimized is then the EMD between the original event and that $N$-particle event.
The minimization over $\hat n_1$ through $\hat n_N$ is interpreted as a minimization over all such $N$-particle events.

Therefore, $N$-jettiness is the smallest distance between the event and the manifold $\mathcal{P}_N$ of $N$-particle events.
Equivalently, one can view it as the EMD to the best $N$-particle approximation of the event, and we return to this interpretation in \Sec{subsec:xcone}.
Thus, we have:
\begin{equation}
\label{eq:Njettiness_asEMD}
\begin{boxed}{
\mathcal T_N^{(\beta)} = \min_{\mathcal E' \in \mathcal P_N} \text{EMD}_{\beta}(\mathcal E, \mathcal E').
}\end{boxed}
\end{equation}
We see that $N$-jettiness generalizes the geometric interpretation of broadening to a general $N$-particle manifold and a general angular weighting exponent $\beta$.

For hadronic collisions, initial state radiation and underlying event activity require the introduction of a ``beam'' (or out-of-jet) region~\cite{Stewart:2009yx,Stewart:2010tn,Berger:2010xi}.
This can be accomplished via the introduction of a beam distance $\theta_{i,\text{beam}}$ into the minimization of \Eq{eq:Njettiness}.
There are many possible beam measures~\cite{Jouttenus:2013hs,Stewart:2015waa}, including ones that involve optimizing over two beam axes $\hat n_a$ and $\hat n_b$.
For simplicity, we focus on $\theta_{i,\text{beam}} = R^\beta$ which makes no explicit reference to the beam directions~\cite{Thaler:2011gf}.
Dividing by an overall factor of $R^\beta$, this modified version of $N$-jettiness can be written as:
\begin{equation}\label{eq:Njettiness_with_beam}
\mathcal T_N^{(\beta,R)} = \min_{\hat n_1,\cdots,\hat n_N} \sum_{i=1}^M E_i \min\left(1, \frac{\theta_{i1}^\beta}{R^\beta}, \frac{\theta_{i2}^\beta}{R^\beta}, \cdots, \frac{\theta_{iN}^\beta}{R^\beta}\right).
\end{equation}
This definition of $N$-jettiness is similar to \Eq{eq:Njettiness}, though now a particle can be closer to the beam than to any axis.
In this case, we say that the particle is transported to the beam and removed for a cost $E_i$.
The summand is then the cost to transport the event to an $N$-particle event plus the cost of removing any particles beyond $R$ from any axes.

Remarkably, this precisely corresponds to the EMD when formulated for events of different total energy.
Namely, $N$-jettiness with this beam region is simply the smallest distance between the event and the manifold of $N$-particle events, with $R$ smaller than the radius of the space:
\begin{equation}
\label{eq:Njettiness_with_beam_asEMD}
\begin{boxed}{
\mathcal T_N^{(\beta,R)} = \min_{\mathcal E'\in \mathcal P_N} \text{EMD}_{\beta,R}(\mathcal E, \mathcal E').
}\end{boxed}
\end{equation}
Particles removed by the optimal transport procedure are interpreted as being part of the beam region.
This fact will also be relevant in \Sec{sec:seqrec} for understanding sequential recombination jet clustering algorithms as geometric constructions in the space of events.

\subsubsection{Event isotropy}
\label{sec:isotropy}

Our new geometric phrasing of these classic collider observables highlights the types of configurations that they are designed to probe.
Specifically, \Eq{eq:obsdefemd} can be interpreted as how similar an event is to the class of events on the manifold $\mathcal{M}$.
This framework also suggests regions of phase space that are poorly resolved by existing observables and provides a prescription for developing new observables by identifying new manifolds of interest.

Event isotropy~\cite{isotropytemp} is a recently-proposed observable that provides a clear example of this strategy.
It is based on the insight that distances from the $N$-particle manifolds (such as thrust and $N$-jettiness) are not well-suited for resolving isotropic events with uniform radiation patterns.
Having observables with sensitivity to isotropic events can, for instance, improve new physics searches for microscopic black holes or strongly-coupled scenarios.
This motivates event isotropy, which is the distance between the event $\mathcal E$ and an isotropic event $\mathcal U$ of the same total energy:
\begin{equation}\label{eq:isotropy}
\mathcal I^{(\beta)}(\mathcal E) =  \text{EMD}_{\beta}(\mathcal E, \mathcal{U}).
\end{equation}
Since $\mathcal E$ and $\mathcal{U}$ have the same total energy by construction, it is natural to normalize event isotropy by the total energy to make it dimensionless.
The analysis in \Ref{isotropytemp} focused primarily on $\beta = 2$, though this approach can be extended to a general angular exponent.
For practical applications, it is convenient to consider a manifold of quasi-isotropic events of the same total energy and then estimate event isotropy as the average EMD between an event and this manifold.

We can cast \Eq{eq:isotropy} into the form of \Eq{eq:obsdefemd} by introducing a manifold $\mathcal M_{\mathcal U}$ of uniform events with varying total energies:
\begin{equation}\label{eq:isotropy_emd}
\boxed{
\mathcal I^{(\beta)}(\mathcal E) = \min_{\mathcal{E}' \in \mathcal M_{\mathcal U}} \text{EMD}_{\beta}(\mathcal E, \mathcal{E}').
}
\end{equation}
The $R \to \infty$ limit in \Eq{eq:emdRtoinfty} enforces that the optimal isotropic approximation $\mathcal{U}$ has the same total energy as $\mathcal E$, as in the original event isotropy definition.

The particular notion of a uniform distribution depends on the collider context---spherical for electron-positron collisions and cylindrical or ring-like for hadronic collisions---with corresponding choices for the energy and angular measures.
The case of ring-like isotropy at a hadron collider is particularly interesting, since there are known simplifications for one-dimensional circular optimal transport problems.
For $\beta = 1$, ring-like event isotropy can be computed in $\mathcal O(M)$ runtime~\cite{DBLP:journals/jmiv/RabinDG11} and there are fast approximations for any $\beta\ge1$~\cite{DBLP:journals/jmiv/RabinDG11}.
This is much faster than the generic $\mathcal O(M^3 \log M)$ expectation for EMD computations, motivating further studies of these one-dimensional geometries.

\subsection{Jet substructure observables}
\label{sec:jetobservables}

\subsubsection{Jet angularities}

Jet angularities are the energy-weighted angular moments of radiation within a jet~\cite{Ellis:2010rwa} (see also \Refs{Almeida:2008yp,Larkoski:2014uqa,Larkoski:2014pca}).
Here, we use the following definition of a recoil-free jet angularity:
\begin{equation}
\lambda_\beta(\mathcal J) = \min_{\hat n} \sum_{i=1}^M E_i \,\theta_{i}^\beta,
\end{equation}
where $\theta_{i}$ is the angular distance between particle $i$ and an axis $\hat n$.
The summand of an angularity is the EMD from the jet to the axis, so we can follow the analogous logic from our previous discussions of event shapes to reframe this observable in our geometric language.
Specifically, the recoil-free angularities are the closest distance between the jet and the 1-particle manifold $\mathcal{P}_1$:
\begin{equation}
\label{eq:emd_jet_angularities}
\begin{boxed}{
\lambda_\beta(\mathcal J) = \min_{\mathcal J'\in\mathcal P_1} \text{EMD}_{\beta}(\mathcal J, \mathcal J').
}\end{boxed}
\end{equation}
One can alternatively consider a definition of angularities where $\theta_{i}$ is computed with respect to a fixed jet axis.
In that case, the angularities are the EMD from the jet to a 1-particle configuration where the total energy of the jet is placed at the position of the desired axis.

\subsubsection{$N$-subjettiness}
\label{subsec:nsubjettiness}

$N$-subjettiness is a jet substructure observable that applies the ideas of $N$-jettiness at the level of jet substructure~\cite{Thaler:2010tr,Thaler:2011gf}.
$N$ axes are placed within the jet, with a penalty for having energy far away from any axis, and then the positions of the axes are optimized.
The (dimensionful) $N$-subjettiness of a jet can be defined as follows:
\begin{equation}
\tau_N^{(\beta)} (\mathcal J) = \min_{\hat n_1,\cdots,\hat n_N} \sum_{i=1}^M E_i \min\left(\theta_{i1}^\beta, \theta_{i2}^\beta, \cdots, \theta_{iN}^\beta\right),
\end{equation}
where $\theta_{i1}$ through $\theta_{iN}$ are the angular distances between particle $i$ and axes $\hat n_1$ through $\hat n_N$.
The beam region is absent due to the fact that these observables are only defined using the particles already within an identified jet.

\begin{figure}[p]
\centering
\includegraphics[scale=0.7]{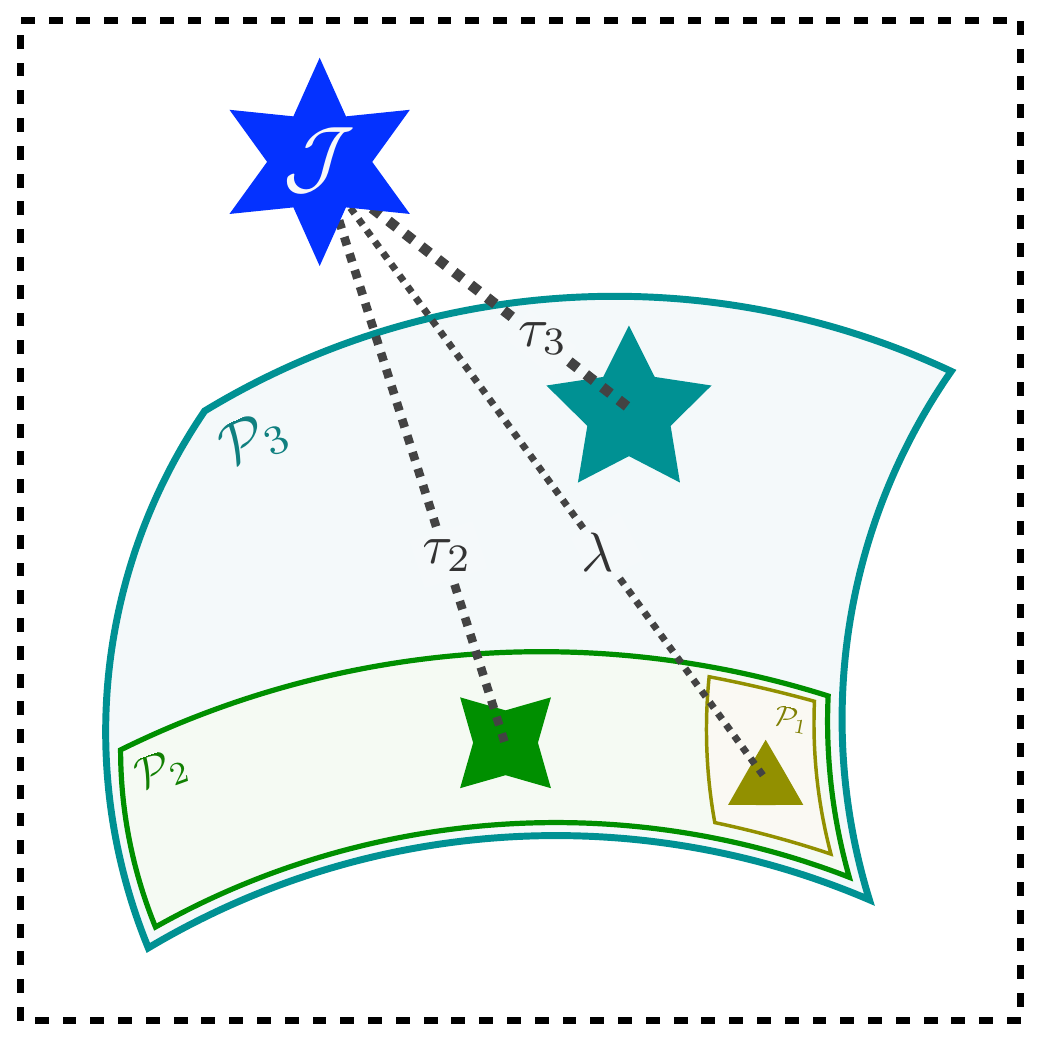}
\caption{\label{fig:space_nsub} An illustration of $N$-subjettiness values as the smallest distances, as measured by EMD, between the event $\mathcal E$ and each of the $N$-particle manifolds $\mathcal P_N$.
The jet angularities are the distances to the 1-particle manifold $\mathcal P_1$.
These observables form a set of ``coordinates'' for the space.
}
\end{figure}

We can find a geometric interpretation for $N$-subjettiness by using the analogous discussion from $N$-jettiness in \Sec{sec:njettiness}.
$N$-subjettiness is the distance between the jet and the manifold of all $N$-particle jets:
\begin{equation}
\begin{boxed}{
\tau_N^{(\beta)}(\mathcal J) = \min_{\mathcal J'\in\mathcal P_N} \text{EMD}_{\beta}(\mathcal J, \mathcal J').
}\end{boxed}
\end{equation}
As a limiting case, $N = 1$ corresponds to the jet angularities in \Eq{eq:emd_jet_angularities}.

In this way, we can view $N$-subjettiness values as ``coordinates'' for the space of jets, defined as distances from each of the $N$-particle manifolds, illustrated in \Fig{fig:space_nsub}.
The $N$-subjettiness ratios $\tau_{N} / \tau_{N-1}$, used ubiquitously for jet substructure studies~\cite{Larkoski:2017jix,Asquith:2018igt,Marzani:2019hun}, are then the relative distances between the manifolds $\mathcal{P}_N$ and $\mathcal{P}_{N-1}$.
This is also an interesting way to interpret existing constructions of observable bases using $N$-(sub)jettiness~\cite{Datta:2017rhs,Datta:2017lxt,Larkoski:2019nwj}; the fact that multiple $\beta$ values are typically needed for these constructions emphasizes that the choice of ground metric affects the geometry of the space induced by the EMD.

\begin{figure}[p]
\centering
\subfloat[]{\includegraphics[scale=0.685]{figures/space_xcone}}\hspace{4mm}
\subfloat[]{\includegraphics[scale=0.685]{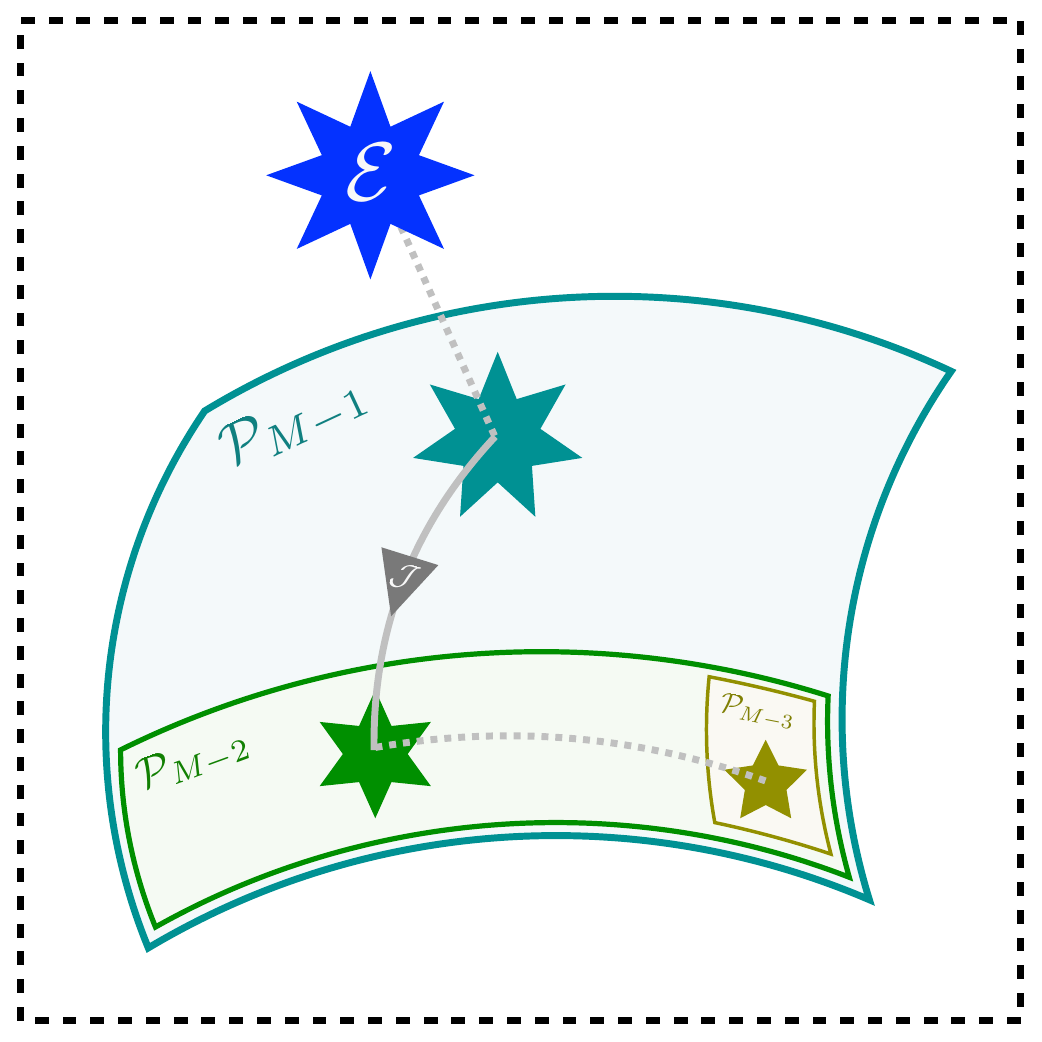}}
\caption{\label{fig:space_jets} An illustration of jet clustering algorithms as projections to $N$-particle manifolds $\mathcal P_N$ in the space of events.
(a) Exclusive cone finding algorithms yield $N$ jets as the closest $N$-particle approximation to the event, as measured by the EMD.
(b) Sequential recombination algorithms iteratively find the best $(M-1)$-particle approximation to the $M$-particle event, either (dashed) merging two particles or (solid) removing a particle and calling it a jet.}
\end{figure}

\pagebreak

\section{Jets: The closest $N$-particle description of an $M$-particle event}
\label{sec:jets}

In this section, we turn our attention to how jets are defined.
We interpret two of the most common classes of jet algorithms as simple geometric constructions in the space of events.
Intuitively, we find that jets are the best $N$-particle approximation to an $M$-particle event. 
Many existing techniques naturally emerge from this simple principle in fascinating ways.

First, we discuss exclusive cone finding, as this technique corresponds exactly to the intuition above that jets approximate the energy flow of an event using a smaller number of particles.
Next, we show that many sequential recombination algorithms can be derived by iteratively approximating an $M$-particle event using $M-1$ particles.
These jet-finding strategies are illustrated in \Fig{fig:space_jets} as projections to $N$-particle manifolds in the space of events.

\subsection{General $N$: Exclusive cone finding}
\label{subsec:xcone}

XCone~\cite{Stewart:2015waa,Thaler:2015xaa} is an exclusive cone finding algorithm that seeks to find jets by minimizing $N$-jettiness.
It returns a fixed number of jets based on the parameters $N$ and $R$, in the same spirit as the exclusive version of the $k_t$ sequential recombination algorithm~\cite{Catani:1993hr}.
XCone proceeds by finding the $N$ axes that minimize $N$-jettiness as defined in \Eq{eq:Njettiness_with_beam}:
\begin{equation}\label{eq:XCone}
\underset{\hat n_1,\cdots,\hat n_N}{\text{argmin}}\sum_{i=1}^M E_i \min\left(1, \frac{\theta_{i1}^\beta}{R^\beta}, \frac{\theta_{i2}^\beta}{R^\beta}, \cdots, \frac{\theta_{iN}^\beta}{R^\beta}\right).
\end{equation}
Together with the energy assigned to those axes, or equivalently the set of particles mapped to each axis, the $N$ axes from \Eq{eq:XCone} define $N$ jets.
The jet radius parameter $R$ controls which particles are not assigned to any jet (i.e.\ assigned to the beam region). 
Following the discussion in \Sec{sec:njettiness}, \Eq{eq:XCone} can be interpreted as finding the $N$-particle configuration that best approximates the event of interest.

In our geometric language, we can cast XCone as identifying the point of closest approach between an event $\mathcal E$ and the $N$-particle manifold $\mathcal{P}_N$:
\begin{equation}\label{eq:XConeasEMD}
\begin{boxed}{
\mathcal J^\text{XCone}_{N,\beta,R}(\mathcal E) = \underset{\mathcal J\in\mathcal P_N}{\text{argmin}}\,\,\text{EMD}_{\beta,R}(\mathcal E, \mathcal J).
}\end{boxed}
\end{equation}
Different variants of XCone correspond to different choices for the energy weight $E_i$ and the angular measure $\theta_{ij}$~\cite{Jouttenus:2013hs,Stewart:2015waa}, which in turn correspond to different choices for what defines the ``best'' $N$-particle approximation to an event.

As discussed in \Ref{Thaler:2015uja}, there is a close relationship between exclusive cone finding algorithms, stable cone algorithms~\cite{Blazey:2000qt,Ellis:2001aa,Salam:2007xv}, and jet maximization algorithms~\cite{Georgi:2014zwa,Ge:2014ova,Bai:2014qca,Bai:2015fka,Wei:2019rqy}.
For the choice of $\beta = 2$, the jet axis aligns with the jet momentum direction, which is known as the stable cone criterion~\cite{Blazey:2000qt,Ellis:2001aa}.
For $N = 1$, one can relate the optimization problem in \Eq{eq:XConeasEMD} to maximizing a ``jet function'' over all possible partitions of an event into one in-jet region and one out-of-jet region~\cite{Georgi:2014zwa}.
Iteratively applying the $N = 1$ procedure is related to the \text{SISCone} algorithm with progressive jet removal~\cite{Salam:2007xv}.
All of these various algorithms can now be interpreted in our geometric picture as different ways to ``project'' the event $\mathcal E$ onto the $N$-particle manifold $\mathcal{P}_N$.

\subsection{$N=M-1$: Sequential recombination}
\label{sec:seqrec}

Sequential recombination algorithms are a class of jet clustering algorithms that have seen tremendous use at colliders, particularly the anti-$k_t$ algorithm~\cite{Cacciari:2008gp} which is the current default jet algorithm at the LHC.
These methods utilize an interparticle distance $d_{ij}$, a particle-beam distance $d_{iB}$, and a recombination scheme for merging two particles.
The algorithm proceeds iteratively by finding the smallest distance, combining particle $i$ and $j$ if it is a $d_{ij}$, or calling $i$ a jet and removing it from further clustering if it is a $d_{iB}$.

There are a variety of distance measures and recombination schemes that appear in the literature, many of which are implemented in the \textsc{FastJet} library~\cite{Cacciari:2011ma}.
The most commonly used distance measures take the form:
\begin{equation}\label{eq:dij}
d_{ij} = \min\left(E_i^{2p}, E_j^{2p}\right) \frac{\theta_{ij}^2}{R^2},\quad \quad d_{iB} = E_i^{2p},
\end{equation}
where $p$ is an energy weighting exponent and $R$ is the jet radius.
The exponent $p=1$ corresponds to $k_t$ jet clustering~\cite{Catani:1993hr,Ellis:1993tq}, $p=0$ corresponds to Cambridge/Aachen (C/A) clustering~\cite{Dokshitzer:1997in,Wobisch:1998wt}, and $p=-1$ corresponds to anti-$k_t$ clustering~\cite{Cacciari:2008gp}.
The recombination scheme determines the energy $E_c$ and direction $\hat n_c$ of the combined particle and typically takes the form:
\begin{equation}
\label{eq:recomb}
E_c = E_i  + E_j,\quad \quad \hat n_c = \frac{E_i^\kappa \, \hat n_i + E_j^\kappa \, \hat n_j}{E_i^\kappa + E_j^\kappa},
\end{equation}
where $\kappa=1$ corresponds the $E$-scheme (most typically used), $\kappa=2$ is the $E^2$-scheme~\cite{Catani:1993hr,Butterworth:2002xg}, and $\kappa\to\infty$ is the winner-take-all scheme~\cite{Bertolini:2013iqa,Larkoski:2014uqa,Salambroadening}.
In the $E$-scheme, the four-momenta of the two particles are simply added.%
\footnote{\label{footnote:Escheme}One has to be a bit careful about the interpretation of jet masses in the $E$-scheme.  In the discussion below, the combined particle is interpreted as a massless four-vector.  For the angular distance in \Eq{eq:theta_def}, the direction $\hat{n}_i$ is the same for massless and massive particles, so one can consistently assign the mass of the jet to be the invariant mass of the summed jet constituents.  For the rapidity-azimuth distance typically used at hadron colliders, though, the rapidity of a particle depends on its mass, so one has to be careful about whether one is talking about a light-like jet axis or a massive jet when discussing the $E$-scheme.  See further discussion in \App{sec:mass}.}
In the winner-take-all scheme, the direction is determined by the more energetic particle.

\begin{table}[t]
\centering
\begin{tabular}{c|ccc|cc}
\hline\hline
EMD$_{\beta,R} $& Name & Measure $d_{ij}$ & $d_{iB}$ & Name & Scheme $\lambda^*$ \\ \hline\hline
$0<\beta < 1$ & Gen.\ $k_t$ & $\min(E_i, E_j) \frac{\theta_{ij}^\beta}{R^\beta}$ & $E_i$ & Winner-take-all  & $\text{argmin}(E_i,E_j)$ \\
$\beta = 1$ & $k_T$ & $\min(E_i, E_j) \frac{\theta_{ij}^{\phantom{\beta}}}{R}$ & $E_i$ & Winner-take-all  & $\text{argmin}(E_i,E_j)$ \\
$\beta = \frac32$ & ? & $\frac{E_i E_j}{\sqrt{E_i^2 + E_j^2}} \frac{\theta_{ij}^\frac32}{R^\frac32}$ & $E_i$ & $E^2$-scheme & $\frac{E_j^2 }{E_i^2 + E_j^2}$ \\
$\beta = 2$ & ? & $\frac{E_i E_j}{E_i + E_j} \frac{\theta_{ij}^2}{R^2}$ &  $E_i$ & $E$-scheme &  $\frac{E_j}{E_i + E_j}$ \\
\hline
$\beta\to\infty$ & C/A & $\frac{\theta^{\phantom{\beta}}_{ij}}{2R}$ & 1 & ? & $\frac12$\\ \hline \hline
\end{tabular}
\caption{Different sequential recombination measures $d_{ij}$ and recombination schemes $\lambda^*$ that emerge from an EMD formulation.
A question mark indicates a method that, to our knowledge, does not yet appear in the literature.
The traditional definitions of generalized $k_t$ and C/A require squaring $d_{ij}$ and $d_{iB}$.
Note the factor of 2 in the C/A effective jet radius parameter.
}
\label{tab:seqrec}
\end{table}

The conceptual and algorithmic richness of these different distance measures and recombination schemes arose from decades of phenomenological studies.
Remarkably, many of these techniques naturally emerge from event space geometry, as finding the point on the $(M-1)$-particle manifold $\mathcal P_{M-1}$ that is closest to configuration $\mathcal{E}$ with $M$ particles.
Note that the sequential recombination algorithms in \Eqs{eq:dij}{eq:recomb} depend on the two parameters $p$ and $\kappa$, whereas \Eq{eq:SRasEMD} depends only on $\beta$, so the logic below will only identify a one-dimensional family of jet algorithms, as summarized in \Tab{tab:seqrec}.

To derive this connection between event geometry and sequential recombination, we need the following simple yet profound lemma, using the suggestive notation of $d_{iB}$ and $d_{ij}$ to refer to the EMD cost of rearrangement.
\begin{lemma}
As measured by the EMD, the closest $(M-1)$-particle event to an $M$-particle event has, without loss of generality, either:
\begin{enumerate}
\item[(a)] Two of the particles in the event merged together.
\item[(b)] One of the particles in the event removed.
\end{enumerate}
\end{lemma}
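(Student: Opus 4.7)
The plan is to characterize the closest $\mathcal E' \in \mathcal P_{M-1}$ through the structure of its optimal transport plan from $\mathcal E$. Since both the positions $\hat m_j$ and the energies $E'_j$ of $\mathcal E'$ are free to vary, the minimization decouples into (i) a partition of the $M$ source particles into destination groups plus a set of ``beam-removed'' particles, and (ii) a placement of each destination at the energy-weighted optimal location for its assigned group. The linear-programming structure of the EMD in \Eq{eq:emd} implies that, for any fixed destination positions, each source particle sends its entire energy either to its nearest destination or (if the beam penalty is cheaper than the nearest $\theta^\beta/R^\beta$) entirely to the beam, so I may restrict attention to ``hard'' single-destination assignments without loss of generality. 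Existence of a minimizer follows from compactness of $\mathcal P_{M-1}$ restricted to total energy at most that of $\mathcal E$ together with continuity of the EMD in the destination positions.

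Next I would establish a swap-out lemma to rule out groups of size three or more. Suppose a group $\{i_1,\ldots,i_s\}$ with $s\ge 3$ is merged at its optimal location $\hat m^*$ while at least one of the $M-1$ destination slots is unused. Peeling off $i_s$ into the empty slot at $\hat n_{i_s}$ replaces the group's contribution $\sum_{a=1}^{s} E_{i_a}\,\theta_{i_a \hat m^*}^\beta$ by at most $\sum_{a=1}^{s-1} E_{i_a}\,\theta_{i_a \hat m^*}^\beta$ (obtained by keeping $\hat m^*$ fixed for the smaller group), while the new singleton contributes zero. Iterating, every group may be taken to have size at most two. In parallel, any configuration sending $k \ge 2$ particles to the beam is dominated by retaining all but the cheapest such particle as singletons at their original positions, which eliminates an amount $R^\beta$ times the sum of energies of the $k-1$ restored particles from the beam penalty.

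A short enumeration on the number $k$ of beam-removed particles then finishes the proof. For $k=0$, pigeonhole on $M$ particles into $M-1$ destinations of size at most two forces exactly one doubleton and $M-2$ singletons: the singletons sit at their original directions with zero transport cost and the doubleton is merged at its EMD-optimal location, which is case (a). For $k=1$, the $M-1$ remaining particles fill the $M-1$ slots as singletons at zero cost and the total EMD is simply the beam penalty of the one removed particle, which is case (b). Mixed strategies that simultaneously remove a particle and nontrivially merge another pair are strictly dominated by pure merging or pure removal, since they pay both penalties without compensating savings.

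The main obstacle will be the first reduction: rigorously justifying that a minimizer exists and that one may restrict to hard single-destination assignments rather than fractional transport. A secondary subtlety is that some swap-out inequalities become equalities in degenerate configurations (e.g.\ collinear particles or coincident directions), so the strongest conclusion is that there \emph{exists} an optimal solution of type (a) or (b), which matches the ``without loss of generality'' qualifier in the statement.
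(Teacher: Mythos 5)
Your proposal is correct and is essentially a rigorous expansion of the paper's much terser argument, which simply asserts that removing costs $d_{iB}$, merging costs $d_{ij}$, that altering more than two particles by (re)moving fractions of additional particles always incurs extra EMD cost, and that degenerate ties can be broken without loss of generality. One small imprecision: pigeonhole on $M$ particles into $M-1$ slots of size at most two only forces \emph{at least} one doubleton, not exactly one, but two doubletons would leave a slot free and one more application of your own swap-out step splits one of them, so the argument closes.
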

\begin{proof}
Removing a particle from the event has some EMD cost $d_{iB}$ and merging a pair of particles has a some EMD cost $d_{ij}$.
To reduce the number of particles in the event by one, one can either remove a particle or merge two particles.
Altering more than two particles by (re)moving fractions of additional particles always incurs additional EMD costs.
If there are multiple pairs that are zero distance apart, then we can without loss of generality always choose to only merge one pair.
\end{proof}

The two options in this lemma correspond precisely to the two possible actions at each stage of a sequential recombination algorithm.
The EMD cost of removing a particle is always
\begin{equation}\label{eq:diBemd}
d_{iB} = E_i.
\end{equation}
If this is less than the cost of merging two particles together, then particle $i$ can be identified as a jet.
For one step of a sequential recombination (SR) procedure applied to an event $\mathcal E$ with $M$ particles, we can express this step mathematically as:
\begin{equation}\label{eq:SRasEMD}
\begin{boxed}{
\mathcal J_{\beta, R}^\text{SR}(\mathcal E) = \mathcal E - \underset{\mathcal E' \in \mathcal P_{M-1}}{\text{argmin}}\,\,\text{EMD}_{\beta,R}(\mathcal E, \mathcal E').
}\end{boxed}
\end{equation}
In our geometric picture, if the $M$ particle event is ``far away'' from the $(M-1)$-particle manifold $\mathcal P_{M-1}$, then the projected difference is a jet.

On the other hand, if the cost of merging two particles is less than any of the particle energies, then the event is ``close'' to the $(M-1)$-particle manifold.
Consider a pair of particles with energies $E_i$ and $E_j$ separated by a distance $\theta_{ij}$.
To find the best $(M-1)$-particle approximation, we want to merge these two particles into one combined particle with energy $E_i + E_j$.
Because the EMD is a metric, the optimal transportation plan must occur along a ``geodesic'' connecting the particles, with particle $i$ moving a distance $\lambda\,\theta_{ij}$ and particle $j$ moving a distance $(1-\lambda)\,\theta_{ij}$ for some $\lambda \in [0,1]$.\footnote{This linear decomposition of the distance does not hold for a general ground metric. However, it does hold when using the rapidity-azimuth distance, the opening-angle on the sphere, the small angle limit of \Eq{eq:theta_def}, or the improved $\theta_{ij}$ distance with particle masses in \App{sec:mass}.}
Minimizing this cost with respect to $\lambda$ yields both the cost of merging those two particles as well as the optimal recombination scheme with which to merge them.
Because no energy is removed in this process, \Eq{eq:SRasEMD} yields a zero energy jet, which we can interpret as no jet being found at this step of the sequential recombination.

The cost of merging particles $i$ and $j$ depends on the jet radius parameter $R$ and angular exponent $\beta$:
\begin{equation}\label{eq:emd2ps}
d_{ij} = \min_\lambda\left[E_i \left(\lambda \,\frac{\theta_{ij}}{R}\right)^\beta + E_j \left((1-\lambda) \,\frac{\theta_{ij}}{R}\right)^\beta\right].
\end{equation}
For $\beta \le 1$, the cost in \Eq{eq:emd2ps} is minimized at the endpoints.
This corresponds to moving the less energetic particle the entire distance $\theta_{ij}$ to the more energetic particle, which is the precisely behavior of the winner-take-all recombination scheme.
For $\beta>1$, the optimal value $\lambda^*$ can be found by differentiating \Eq{eq:emd2ps} with respect to $\lambda$ and setting the result equal to zero.
In general, the optimal recombination scheme has:
\begin{align}
0 < \beta \le 1:& \quad \lambda^* = 1 \text{ if $E_i < E_j$, else } 0, \nonumber \\
\beta>1: & \quad \lambda^* = \frac{1}{1 + \left(\frac{E_i}{E_j}\right)^{\frac{1}{\beta-1}}}.
\label{eq:opt_recomb}
\end{align}
To determine the actual cost, we substitute this $\lambda^*$ back into \Eq{eq:emd2ps}:
\begin{align}
\beta \le 1:&\quad d_{ij} = \min(E_i, E_j)\frac{\theta_{ij}^\beta}{R^\beta}, \nonumber\\
\beta>1: & \quad d_{ij} = \frac{E_i E_j^{\frac{\beta}{\beta-1} }+ E_i^{\frac{\beta}{\beta-1}}E_j}{\left(E_i^{\frac{1}{\beta-1}} + E_j^{\frac{1}{\beta-1}}\right)^\beta}\frac{\theta_{ij}^\beta}{R^\beta}.
\label{eq:dijemd}
\end{align}
If all $d_{ij}$ values in \Eq{eq:dijemd} are smaller than all particle energies in \Eq{eq:diBemd}, then the optimal transportation plan is to merge particles $i$ and $j$.

In this way, \Eq{eq:SRasEMD} takes an $M$-particle event and returns a jet (with zero energy if no actual jet is found) plus the remaining $(M-1)$-particle approximation.
This corresponds exactly to one step of a sequential clustering procedure.
Iterating this procedure until $M = 1$, we derive a sequential recombination jet algorithm, where the jets correspond to all of the positive energy configurations obtained from \Eq{eq:SRasEMD}.

Many existing methods reside within the simple framework of \Eq{eq:SRasEMD}.
For instance, $\beta=1$ corresponds to $k_t$ jet clustering with winner-take-all recombination.
The recombination scheme for $\beta=2$ is the $E$-scheme, whereas for $\beta=\frac32$ it is the $E^2$-scheme.
Raising the distance measures to the $1/\beta$ power and taking the $\beta\to\infty$ limit, we obtain the C/A clustering metric, albeit with an effective jet radius that is twice the $R$ parameter.
There are also a number of methods, indicated as question marks in \Tab{tab:seqrec}, that emerge from this reasoning yet do not presently appear in the literature.
Exploring these new methods is an interesting avenue for future work.

Intriguingly, in this geometric picture, the distance measure $d_{ij}$ and the recombination scheme $\lambda^*$ are paired by the $\beta$ parameter.
A similar pairing was noted in \Refs{Stewart:2015waa,Dasgupta:2015lxh} in the context of choosing approximate axes for computing $N$-(sub)jettiness, and it would be interesting to explore the phenomenological implications of these paired choices for jet clustering.
One sequential combination algorithm that does not appear is anti-$k_t$.
Given that anti-$k_t$ is a kind of hybrid between sequential recombination and cone algorithms, there may be a way to combine the logic of \Secs{subsec:xcone}{sec:seqrec} to find a geometric phrasing of anti-$k_t$.
If successful, such a geometric construction would likely illuminate the difference between exclusive jet algorithms like XCone that find a fixed number of jets $N$ and inclusive jet algorithms like anti-$k_t$ that determine $N$ dynamically.

\section{Pileup subtraction: Moving away from uniform events}
\label{sec:pileup}

The LHC era has brought with it new collider data analysis challenges.
One notable example is pileup mitigation~\cite{Soyez:2018opl}, removing the diffuse soft contamination from additional uncorrelated proton-proton collisions.
The radiation from pileup interactions is approximately uniform in the rapidity-azimuth plane, and several existing pileup mitigation strategies seek to remove this uniform distribution of energy from the event~\cite{Cacciari:2007fd,Krohn:2013lba,Cacciari:2014jta,Cacciari:2014gra,Bertolini:2014bba,Berta:2014eza,Komiske:2017ubm,Monk:2018clo,Martinez:2018fwc}.

\begin{figure}[t]
\centering
\includegraphics[width=0.5\columnwidth]{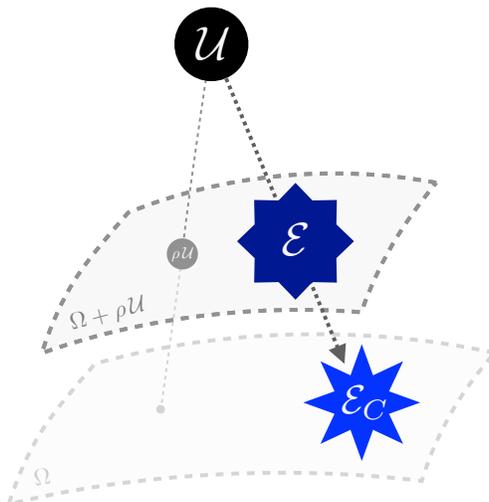}
\caption{
\label{fig:spacepu}
A visualization of pileup subtraction in the space of events as moving away from uniform radiation.
This proceeds by finding the event $\mathcal{E}_C$ that, when combined with uniform contamination $\rho \, \mathcal{U}$, is most similar to the given event $\mathcal{E}$.
Different pileup mitigation strategies implement this removal in different ways.
In the figure above, $\Omega$ refers to the space of all energy flows and $\Omega + \rho \, U$ is a subset of that space obtained by adding uniform contamination to every event configuration (shown as a separate manifold for ease of visualization). 
}
\end{figure}

In this section, inspired by the approximate uniformity of pileup, we consider a class of pileup removal procedures that can be described as ``subtracting'' a uniform distribution of energy with density $\rho$, denoted $\rho\,\mathcal U$, from a given event.
We take the pileup density per unit area $\rho$ to be given, for instance, by the area-median approach~\cite{Cacciari:2007fd}.
Given an event flow $\mathcal E$, the subtracted distribution $\mathcal E - \rho \, \mathcal{U}$ is typically not a valid energy flow, since the local energy density can go negative.
Therefore, to implement this principle at the level of energy distributions, we turn this logic around and declare the corrected event $\mathcal{E}_C$ to be one that is as close as possible to the given event $\E$ when uniform radiation $\rho\,\mathcal U$ is added to it:
\begin{equation}
\label{eq:pileupemd}
\E_C(\E, \rho)  = \argmin_{\E' \in \Omega}\,\EMD_\beta(\E,\E' + \rho\,\mathcal U).
\end{equation}
Here, $\Omega$ refers to the complete space of energy flows, and the $R\to\infty$ limit of the EMD from \Eq{eq:emdRtoinfty} enforces that the corrected distribution $\E_C$ has the correct total energy.

As illustrated in \Fig{fig:spacepu}, one can visualize \Eq{eq:pileupemd} as a procedure that subtracts a uniform component from the energy flow.
To make contact with existing techniques, we show that area-based Voronoi subtraction~\cite{Cacciari:2007fd,Cacciari:2008gn,Cacciari:2011ma} and ghost-based constituent subtraction~\cite{Berta:2014eza} can be cast in the form of \Eq{eq:pileupemd} in the low-pileup limit.
We then develop two new pileup mitigation techniques that have optimal transport interpretations even away from the low-pileup limit: Apollonius subtraction, which corresponds to exactly implementing \Eq{eq:pileupemd} for $\beta=1$, and iterated Voronoi subtraction, which repeatedly applies \Eq{eq:pileupemd} with an infinitesimal $\rho$.
Since pileup is characteristic of a hadron collider, throughout this section we compute the EMD using particle transverse momenta $p_{T,i}$ and rapidity-azimuth coordinates $\hat n_i = (y_i,\phi_i)$, with $\theta_{ij}$ being the rapidity-azimuth distance.
Typically, pileup is taken to be uniform in a bounded region of the plane (e.g.~$|y|< y_{\rm max}$), though the specifics will not significantly affect our analysis.
First, though, we establish an important lemma that justifies why the corrected distribution $\E_C$ has a particle-like interpretation.

\subsection{A property of semi-discrete optimal transport}

There is a direct connection between pileup subtraction in \Eq{eq:pileupemd} and semi-discrete optimal transport~\cite{hartmann2017semi}.
Semi-discrete means that we are comparing a discrete energy flow (i.e.~one composed of individual particles) to a smooth distribution (i.e.~uniform pileup contamination).

Importantly, if $\mathcal{E}$ is discrete, then the corrected distribution $\E_C$ will also be discrete.
This can be proved via the following lemma.
\begin{lemma}
\label{lemma:pileupemd}
$\E_C$ defined according to \Eq{eq:pileupemd} is strictly contained in $\E$, where containment here means that $\E-\E_C$ is a valid distribution with non-negative particle transverse momenta.
\end{lemma}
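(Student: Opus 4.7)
The plan is to show that any candidate $\E' \in \Omega$ which violates $\E' \leq \E$ can be replaced by some $\tilde{\E}' \leq \E$ of the same total energy and no larger EMD cost, forcing the defining optimum to satisfy the containment.

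First I would write the EMD as an optimal transport problem and exploit the fact that $\E' + \rho\,\mathcal{U}$ decomposes canonically into a discrete part ($\E'$) and a mutually singular continuous part ($\rho\,\mathcal{U}$). Letting $\pi$ be an optimal transport plan from $\E$ to $\E' + \rho\,\mathcal{U}$, this decomposition induces a splitting of each particle's energy $E_i$ into a portion $a_i \in [0, E_i]$ routed to the $\E'$ component and the complement $E_i - a_i$ routed to $\rho\,\mathcal{U}$, with $\sum_i a_i = \sum_i E_i - \rho A$ (where $A$ is the pileup area, so totals match in the $R \to \infty$ limit of \Eq{eq:emdRtoinfty}). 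I would then build $\tilde{\E}' \equiv \sum_i a_i\,\delta(\hat n - \hat n_i)$, which manifestly obeys $\tilde{\E}' \leq \E$ and has the right total mass, together with the modified plan $\tilde\pi$ that keeps the $a_i$ at $\hat n_i$ in place (zero cost) and reuses $\pi$'s routing for the $E_i - a_i$ shipped into $\rho\,\mathcal{U}$.

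The cost comparison is then immediate: writing $f_{ik} \geq 0$ for the $\pi$-flow from particle $i$ at $\hat n_i$ into the atom of $\E'$ at position $\hat m_k$ (so $a_i = \sum_k f_{ik}$), one finds
\[
\EMD_\beta(\E,\, \E' + \rho\,\mathcal{U}) \;-\; \text{cost}(\tilde\pi) \;=\; \sum_{i,k} f_{ik}\,\theta(\hat n_i, \hat m_k)^\beta \;\geq\; 0,
\]
with strict inequality whenever $\E'$ supports mass off $\{\hat n_i\}$ (some $f_{ik} > 0$ with $\theta(\hat n_i, \hat m_k) > 0$) or places more than $E_i$ at some $\hat n_i$ (forcing some $j \neq i$ to ship positive-distance mass there). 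Applied to the defining optimum $\E_C$, optimality forces equality, which collapses $\mathrm{supp}(\E_C)$ onto $\mathrm{supp}(\E)$ and caps each on-particle weight by $E_i$, yielding $\E_C \leq \E$.

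The main subtlety I anticipate is justifying the decomposition of the optimal $\pi$ against the discrete/continuous split of the target. Existence and measure-theoretic decomposability of optimal plans on compact rapidity-azimuth domains are standard in the optimal transport literature, but mild care is needed for edge cases such as coincident $\E$-particles (handled by treating them as a single effective atom) and non-uniqueness of $\pi$ (handled by applying the swap to \emph{any} optimal plan, since the contradiction with optimality of $\E_C$ then applies uniformly).
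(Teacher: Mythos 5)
Your proof is correct and follows essentially the same route as the paper: both arguments identify the sub-distribution of $\E$ that exchanges mass with the candidate under the optimal plan (your $\tilde{\E}'=\sum_i a_i\,\delta(\hat n-\hat n_i)$ is the paper's $\tilde\E$) and show it is a strictly cheaper replacement whenever containment fails, contradicting optimality. The only difference is presentational: you establish the key cost inequality by explicit plan surgery (reusing the routing into $\rho\,\mathcal U$ and keeping the $a_i$ in place at zero cost), whereas the paper reaches the same inequality by invoking the additive decomposition of the optimal cost together with the Wasserstein property $\EMD_\beta(\E,\mathcal F)\ge\EMD_\beta(\E+\mathcal G,\mathcal F+\mathcal G)$.
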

\begin{proof}
Suppose for the sake of contradiction that $\E_C$ is defined according to \Eq{eq:pileupemd} has some support where $\E$ does not.
Let $\tilde\E$ be the distribution that $\E_C$ flows to when $\E_C+\rho\,\mathcal U$ is optimally transported to $\E$, noting that by definition, $\tilde\E$ must be contained in $\E$.
By the linear sum structure of \Eq{eq:emd_noR}~\cite{OTbook}, we have the following relation:
\begin{equation}
\label{eq:emdineq}
\EMD_\beta(\E,\E_C+\rho\,\mathcal U)=\EMD_\beta(\tilde\E,\E_C)+\EMD_\beta(\E-\tilde\E,\rho\,\mathcal U).
\end{equation}
Now using the following property of $\EMD_\beta$ inherited from Wasserstein distances~\cite{hartmann2017semi}:
\begin{equation}
\label{eq:wassersteinprop}
\EMD_\beta(\mathcal E,\mathcal F) \ge  \EMD_\beta(\mathcal E+\mathcal G,\mathcal F+\mathcal G),
\end{equation}
with equality if $\beta=1$ and the ground metric is Euclidean, we add $\tilde\E$ to both arguments of the last term in \Eq{eq:emdineq} and apply \Eq{eq:wassersteinprop} to find:
\begin{equation}
\label{eq:emdineq2}
\EMD_\beta(\E,\E_C+\rho\,\mathcal U)\ge\EMD_\beta(\tilde\E,\E_C)+\EMD_\beta(\E,\tilde\E+\rho\,\mathcal U).
\end{equation}
Now using that $\EMD_\beta(\tilde\E,\E_C)>0$ by the assumption that they have different supports as well as the non-negativity of the EMD, we find:
\begin{equation}
\label{eq:contradiction}
\EMD_\beta(\E,\E_C+\rho\,\mathcal U)>\EMD_\beta(\E,\tilde\E+\rho\,\mathcal U),
\end{equation}
which contradicts the assumption that $\E_C$ is found according to \Eq{eq:pileupemd}.
Thus, we conclude that $\E_C$ has no support outside of the support of $\E$, verifying the claim.
\end{proof}

This lemma establishes that pileup mitigation strategies defined by \Eq{eq:pileupemd} act by scaling the energies of the particles in the original event $\mathcal{E}$, not by producing new particles.
Indeed, this is a desirable feature of many popular pileup mitigations schemes, including two well-known methods that we describe next.

\subsection{Voronoi area subtraction}
\label{sec:voronoi}

\begin{figure}[t]
\centering
\subfloat[Voronoi]{\includegraphics[width=0.33\columnwidth]{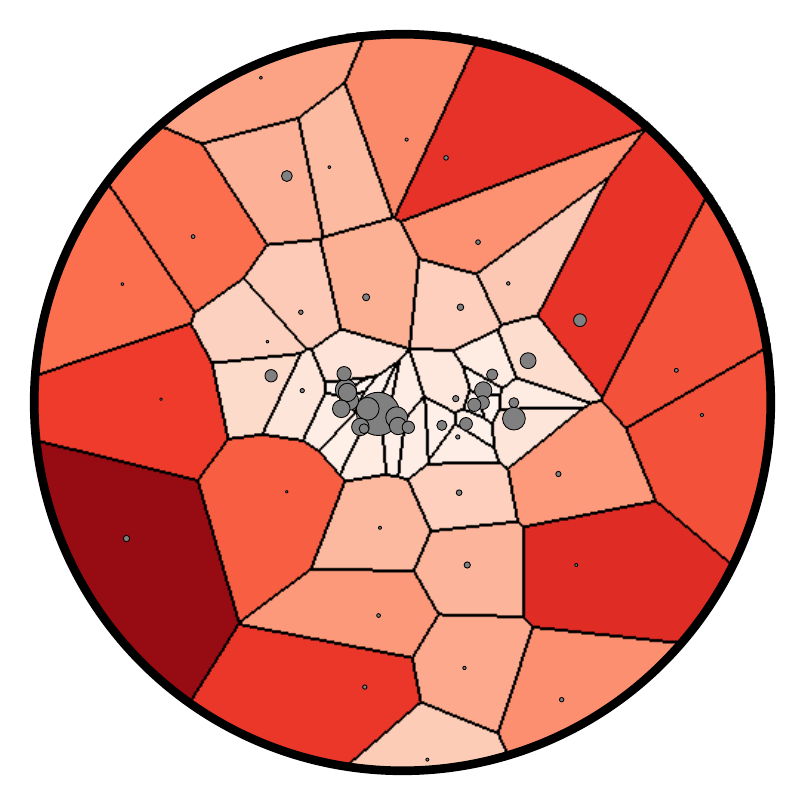}\label{fig:voronoi}}
\subfloat[Constituent Subtraction]{\includegraphics[width=0.33\columnwidth]{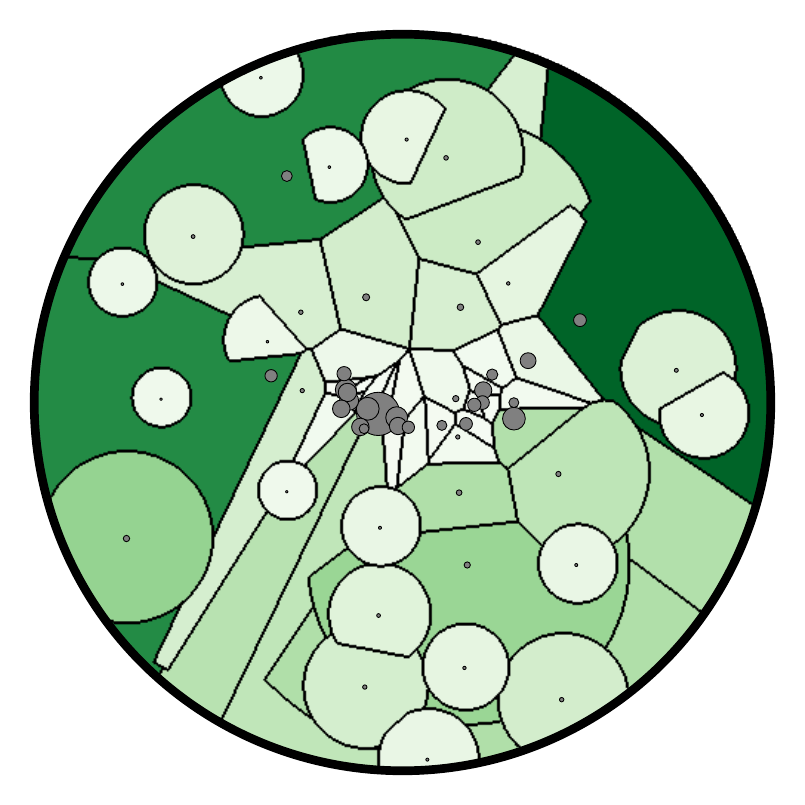}\label{fig:constsub}}
\subfloat[Apollonius]{\includegraphics[width=0.33\columnwidth]{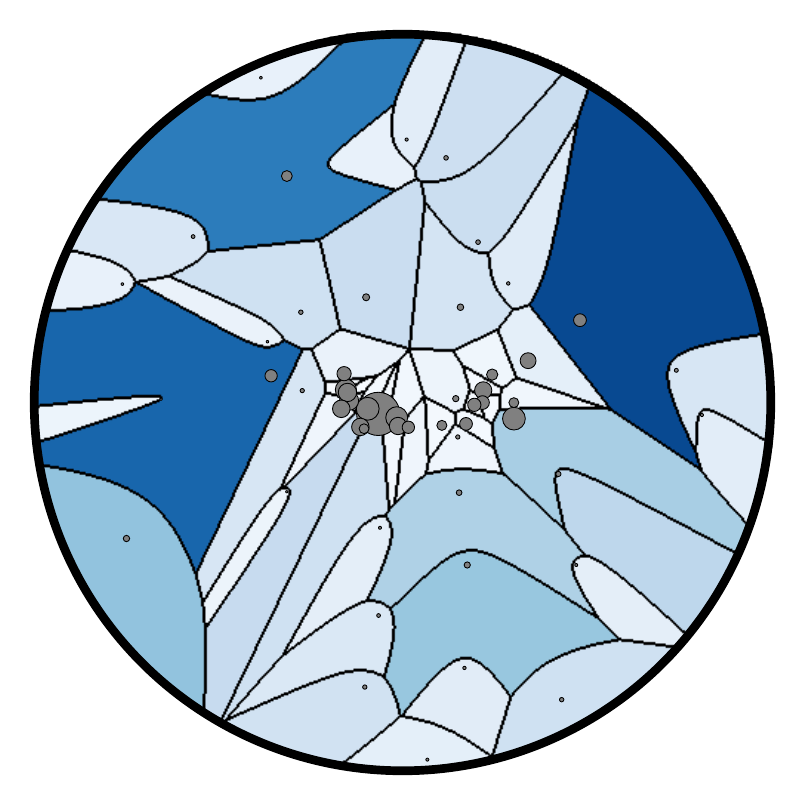}\label{fig:apollonius}}
\caption{
\label{fig:pileupexample}
Three pileup subtraction procedures shown for a jet from the 2011A CMS Jet Primary Dataset.
The jet constituents are shown as gray disks at their locations in the rapidity-azimuth plane with sizes proportional to their transverse momenta.
The boundary is at a distance $R=0.5$ from the jet axis at the center.
The color intensity of each region is proportional to its area, which determines the size of the pileup correction.
The Voronoi diagram (a) is independent of $\rho$.
The constituent subtraction (b) and Apollonius (c) diagrams are determined using a $\rho$ that corresponds to subtracting one-tenth of the total scalar $p_T$ of the jet.
}
\end{figure}

Voronoi area subtraction~\cite{Cacciari:2007fd,Cacciari:2008gn,Cacciari:2011ma} is a pileup mitigation technique that estimates a particle's pileup contamination by associating it with an area determined by its corresponding \emph{Voronoi region}, or the set of points in the plane closer to that particle than any other~\cite{Aurenhammer2013:book}.
Letting $A_i^\text{Vor.}$ be the area of the Voronoi region of particle $i$, Voronoi subtraction then simply removes $\rho A_i^\text{Vor.}$ from each particle's transverse momentum, without letting the particle $p_T$ become negative.
If $\rho A_i^\text{Vor.}\ge p_{T,i}$ then the particle is removed entirely.
In \Fig{fig:voronoi}, we show the Voronoi regions for an example jet recorded by the CMS detector~\cite{CMS:JetPrimary2011A,Komiske:2019jim}.

Voronoi area subtraction (VAS) can be thought of as carving up the uniform event $\rho\,\mathcal U$ according to the original event's Voronoi diagram and transporting this energy to the location of the corresponding particle, yielding the corrected energy flow:%
\begin{equation}
\label{eq:evas}
\E_\text{VAS}(\hat n)=\sum_{i=1}^M\max\left(p_{T,i}-\rho A_i^\text{Vor.},0\right)\delta(\hat n - \hat n_i).
\end{equation}
Strictly speaking, Voronoi area subtraction does not satisfy exact IRC invariance (see \Eqs{eq:exactirsafety}{eq:exactcsafety}) and thus it cannot in general be written as operating on energy flows.
The reason is that an exact IRC splitting changes the number of Voronoi regions as well as their areas.
In order for \Eq{eq:evas} to be valid, we therefore assume that particles with exactly zero transverse momentum are removed and exactly coincident particles are combined before applying the Voronoi area subtraction procedure.

In the limit that $\rho\le p_{T,i}/A_i^\text{Vor.}$ for all particles $i$, the max in \Eq{eq:evas} evaluates to just its first argument.
In this case, since no particle is assigned a larger correction than its own transverse momentum, the Voronoi diagram gives the optimal transportation plan that minimizes the \EMD of moving the uniform event with density $\rho$ onto the event of interest:
\begin{equation}\label{eq:evas2}
\boxed{
\rho \le \min\left\{p_{T,i}/A_i^\text{Vor.}\right\}: \quad \mathcal E_\text{VAS} = \argmin_{\mathcal E' \in \Omega} \EMD_1(\mathcal E, \mathcal E' + \rho\,\mathcal U).
}
\end{equation}
Thus, in this small-pileup limit, \Eq{eq:evas} agrees with \Eq{eq:pileupemd} with $\beta=1$.
Despite this attractive geometric interpretation, Voronoi area subtraction beyond this limit is sensitive to arbitrarily soft particles: the amount that is subtracted depends only on particle positions, through their Voronoi areas, and not their transverse momenta.

\subsection{Constituent subtraction}
\label{subsec:const_sub}

Constituent subtraction~\cite{Berta:2014eza} is another pileup mitigation method that resolves several pathologies of Voronoi area subtraction by correcting the particles in a manner that depends on both their positions and their transverse momenta.%
\footnote{In this discussion, we focus on the $\alpha=0$ case of constituent subtraction, as recommended by \Ref{Berta:2014eza}.}
This comes at the cost of requiring a fine grid of low energy ``ghost'' particles with $p_T^g=\rho A^\text{ghost}$, where $A^\text{ghost}$ is the area assigned to each ghost, as a proxy for the pileup contamination.
The algorithm is applied by considering the geometrically closest ghost-particle pair $k,i$ and modifying them via:
\begin{equation}
p_{T,i} \to \max(p_{T,i} - p_{T,k}^g, 0), \quad \quad p_{T,k}^g \to \max(p_{T,k}^g - p_{T,i}, 0),
\end{equation}
continuing until all such pairs have been considered.
Since the number of ghosts is typically large in order to have fine angular granularity, this iteration through all ghost-particle pairs can be computationally expensive.

Constituent subtraction (CS) in the continuum ghost limit can be geometrically described by placing circles around each point in the rapidity-azimuth plane and simultaneously increasing their radii.
Each point in the plane is assigned to the particle whose circle reaches it first.
Circles stop growing when $A_i^\text{CS}$, the area assigned to particle $i$, grows larger than $p_{T,i}/\rho$.
We can write the resulting distribution as:
\begin{equation}
\label{eq:ecs}
\E_\text{CS}(\hat n)=\sum_{i=1}^M\left(p_{T,i}-\rho A_i^\text{CS}\right)\delta(\hat n-\hat n_i).
\end{equation}
Unlike naive Voronoi area subtraction, continuum constituent subtraction satisfies exact IRC invariance, since a zero energy particle has zero $A^\text{CS}$ and an exact collinear splitting yields two areas that sum to the original $A^\text{CS}$.
Constituent subtraction is also better suited for intermediate values of $\rho$, where particles can be fully removed, since further corrections are distributed to the next closest particle instead of being ignored as in Voronoi area subtraction.

Due to the complicated shapes of the corresponding regions, it is difficult to describe the areas $A_i^\text{CS}$ analytically and in practice they need to be estimated using numerical ghosts.
An example of constituent subtraction is shown in \Fig{fig:constsub}, where it can be seen that some region boundaries are straight and thus contained in the Voronoi diagram of \Fig{fig:voronoi}.
Indeed, growing circles from a set of points and assigning points in the plane according to which circle reaches them first is another way of describing the construction of a Voronoi diagram.
Regions with circular boundaries correspond to softer particles that are fully subtracted by the constituent subtraction procedure.

When $\rho$ is sufficiently small such that no particle's region has a circular boundary (i.e.\ no circle stops growing), constituent subtraction is exactly equivalent to Voronoi area subtraction.
Constituent subtraction in the low-pileup limit is then also equivalent to optimally transporting the uniform event with density $\rho$ to the event of interest and subtracting accordingly, again in line with \Eq{eq:pileupemd} with $\beta=1$:
\begin{equation}\label{eq:ecs2}
\boxed{
\rho \le \min\left\{p_{T,i}/A_i^\text{Vor.}\right\}: \quad \mathcal E_\text{CS} = \argmin_{\mathcal E' \in \Omega} \EMD_1(\mathcal E, \mathcal E' + \rho\,\mathcal U).
}
\end{equation}
Constituent subtraction can also be extended with a $\Delta R^\text{max}$ parameter to restrict ghosts from affecting distant particles.
Our geometric formalism can also encompass this locality by re-introducing the $R$-parameter to the EMD in \Eq{eq:ecs2} with $R = \Delta R^\text{max}$.

\subsection{Apollonius subtraction}
\label{sec:apollonius}

Voronoi area subtraction and constituent subtraction both make contact with \Eq{eq:pileupemd} in the small-$\rho$ limit, but we would like to explore pileup subtraction based on optimal transport for all values of $\rho$.
By Lemma~\ref{lemma:pileupemd}, we know that the corrected event is contained in the original event, and by the decomposition properties of the EMD in \Eq{eq:emdineq}, we only need to consider the transport of $\rho\,\mathcal U$ to $\E$.
Since the total transverse momenta of $\rho\,\mathcal U$ and $\E$ are generally different, this is now an example of a semi-discrete, \emph{unbalanced} optimal transport problem~\cite{OTtheory,bourne2018semi}.

The problem of minimizing the EMD between a uniform distribution and an event is solved, for general $\beta$, by a \emph{generalized Laguerre diagram}~\cite{bourne2018semi}.
For the special case of $\beta=1$, which we focus on here, this is also known as the \emph{Apollonius diagram} (or additively weighted Voronoi diagram)~\cite{DBLP:conf/esa/KaravelasY02,DBLP:journals/comgeo/GeissKPR13,hartmann2017semi}, and for $\beta=2$ it is a \emph{power diagram}~\cite{DBLP:journals/tog/XinLCCYTW16}.
An Apollonius diagram in the plane is constructed from a set of points $\hat n_i$ that each carry a non-negative weight $w_i$ that is the $i^\text{th}$ component of a vector ${\bf w}\in\mathbb{R}_+^M$.
In the two-dimensional Euclidean plane, the Apollonius region associated to particle $i$ depending on ${\bf w}$ is:
\begin{equation}
\label{eq:apollonius}
R^\text{Apoll.}_i({\bf w})=\left\{\hat n\in\mathbb{R}^2\,\big|\,\|\hat n-\hat n_i\|-w_i\le\|\hat n-\hat n_j\|-w_j,\,\,\forall\,\,j\neq i\right\},
\end{equation}
where particle indices $i,j=1,\ldots,M$ and $\|\cdot\|$ is the Euclidean norm.
One interpretation of \Eq{eq:apollonius} is that region $i$ is all points closer to a circle of radius $w_i$ centered at $\hat n_i$ than to the corresponding circle for any other particle.
The boundaries of the Apollonius regions are contained in the set $\{\hat n\in\mathbb{R}^2\,|\,\|\hat n-\hat n_i\|-\|\hat n-\hat n_j\|=w_i-w_j\}$, which is a union of hyperbolic segments.
Note that adding the same constant to all of the weights does not change the resulting Apollonius diagram.
Hence, if all the weights are equal, they can equivalently be set to zero and we attain the Voronoi diagram as a limiting case of an Apollonius diagram.

We can now specify the action of Apollonius subtraction on an event using the areas of the Apollonius regions subject to the minimal EMD requirement:
\begin{align}
\label{eq:corrapollonius}
\E_\text{Apoll.}(\hat n) &=\sum_{i=1}^M\left(p_{T,i} - \rho A_i^\text{Apoll.}({\bf w}^*)\right)\delta(\hat n-\hat n_i),
\\ \label{eq:wstar}
{\bf w}^* &=\argmin_{{\bf w}\in\mathbb{R}_+^M}\sum_{i=1}^M\EMD_{1,R}\left(\{p_{T,i},\hat n_i\},\rho R_i^\text{Apoll.}(\bf w)\right),
\end{align}
treating $\rho R_i^\text{Apoll.}({\bf w})$ as an event with uniform energy density $\rho$ in that Apollonius region.
Here, \Eq{eq:corrapollonius} is analogous to \Eqs{eq:evas}{eq:ecs}, and \Eq{eq:wstar} implements the requirement that the EMD of the subtraction is minimal.
Note that the $R$ parameter in \Eq{eq:wstar} serves only to guarantee that it is more efficient to transport energy rather than create/destroy it.
As long as $2R$ is greater than the diameter of the space, $R$ has no impact on the solution other than to guarantee that $\rho A_i^\text{Apoll.}({\bf w}^*)$ does not exceed $p_{T,i}$, as this would be less efficient than transporting the excess energy elsewhere.
An example of an Apollonius diagram is shown in \Fig{fig:apollonius}, where hyperbolic boundaries of the Apollonius regions are clearly seen in the outer part of the jet and straight boundaries, matching those of the Voronoi diagram, are seen near the core.

In this way, Apollonius subtraction generalizes Voronoi area and constituent subtraction beyond the small-pileup limit, directly implementing \Eq{eq:pileupemd} for $\beta=1$ for all values of $\rho$:
\begin{equation}\label{eq:eapoll2}
\boxed{
\mathcal E_\text{Apoll.} = \argmin_{\mathcal E' \in \Omega} \EMD_1(\mathcal E, \mathcal E' + \rho\,\mathcal U).
}
\end{equation}
While the optimal solution in \Eq{eq:wstar} is based on an unbalanced optimal transport problem, the restatement in \Eq{eq:eapoll2} corresponds to balanced transport.
This same connection underpins Lemma~\ref{lemma:pileupemd}, guaranteeing that the corrected event in \Eq{eq:corrapollonius} involves the same $M$ directions as the original event, just with different weights.

To turn \Eq{eq:eapoll2} into a practical algorithm, we would need an efficient way to compute the weights according to \Eq{eq:wstar}.
While \Refs{OTtheory,bourne2018semi} have developed the theoretical framework of semi-discrete, unbalanced optimal transport needed to solve this convex minimization problem, they stop short of describing easily-implementable algorithms to attain practical solutions.
In order to create \Fig{fig:apollonius}, we were limited to using numerical ghosts to directly solve for the transport plan that minimizes the EMD cost of subtracting the uniform energy component from the event, which is too computationally costly for LHC applications.

If the target areas $A_i^\text{Apoll.}$ are previously specified, then the solution to \Eq{eq:wstar} simplifies~\cite{hartmann2017semi}.
Given that the areas depend nontrivially on the resulting weight vector, though, the only case where we know them ahead of time is when $\rho$ is such that \emph{all} of the energy will be exactly subtracted, in which case $A_i^\text{Apoll.}=p_{T,i}/\rho$.
Though this is not so useful for pileup, where we typically want to subtract an amount of energy less than the total, it does indicate that an Apollonius diagram can be found and used to compute the event isotropy from \Sec{sec:isotropy} without the use of numerical ghosts.
We leave the implementation of such a procedure to future work, though we note that \Ref{hartmann2017semi} has already built an implementation that relies on numerical ghosts to estimate the areas of the Apollonius regions rather than solving for them analytically.

\subsection{Iterated Voronoi subtraction}
\label{sec:ivs}

Given the difficulty of analytically solving \Eq{eq:wstar} and thus implementing Apollonius subtraction, we now develop an alternative method called iterated Voronoi subtraction that gives up a global notion of minimizing EMD but retains a local one.
In all three methods described above, the difficulty comes when a particle is removed in the course of subtracting pileup.
Otherwise, the above methods all reduce to subtracting transverse momentum according to the Voronoi areas of the regions corresponding to the particles, as in \Eq{eq:evas2}.
This suggests a procedure in which pileup is subtracted according to \Eq{eq:pileupemd} an infinitesimal amount at a time, thus ensuring that \Eq{eq:evas2} can be used at every stage of the procedure.

The area of the Voronoi cell of particle $i$ is now a function of the total amount of energy density that has been subtracted thus far, a quantity that starts at zero and will be integrated up to the target $\rho^\text{tot}$ over the course of the procedure.
When a particle loses all of its transverse momentum, it is removed from the Voronoi diagram and is considered to have zero area associated to it.
The removal of a particle from the diagram changes the Voronoi regions of all of its neighbors, and their areas are updated accordingly.
Denoting the area associated to particle $i$ after $\rho$ worth of energy density has been subtracted as $A_i^\text{IVS}(\rho)$, we can write the corrected distribution for iterated Voronoi subtraction (IVS) as:
\begin{equation}
\label{eq:corrivs}
\E_\text{IVS}(\hat n)=\sum_{i=1}^M\left(p_{T,i} - \int_0^{\rho^\text{tot}} d\rho\,A_i^\text{IVS}(\rho)\right)\delta(\hat n-\hat n_i).
\end{equation}

Unlike \Eqs{eq:corrapollonius}{eq:wstar}, \Eq{eq:corrivs} naturally lends itself to a simple and efficient implementation.
We can iteratively solve for $A_i^\text{IVS}(\rho)$ using the fact that the areas correspond to Voronoi regions, and furthermore that these regions change only when a particle is removed.
Let $\E^{(0)}$ be the initial event consisting of particles with transverse momenta $p_{T,i}^{(0)}$ in Voronoi regions with area $A_i^{(0)}$.
We subtract a total energy density $\rho^\text{tot}$ by breaking up the integral in \Eq{eq:corrivs} starting with $\rho^{(0)}=0$ and determining the boundaries from:
\begin{equation}
\label{eq:rhon}
\rho^{(n)}=\max\left\{\rho\,\Bigg|\,0\le\rho\le\rho^\text{tot}-\sum_{i=0}^{n-1}\rho^{(i)}\,\text{ s.t. }\rho A_j^{(n-1)}\le p_{T,j}^{(n-1)}\,\forall\,j\right\},
\end{equation}
where $n$ starts at 1 and goes up to at most $M$.
The values of $\rho^{(n)}$ can be expressed simply as:
\begin{equation}
\label{eq:rhon2}
\rho^{(n)}=\min\left(\min\left\{\frac{p_{T,j}^{(n-1)}}{A_j^{(n-1)}}\right\}, \rho^\text{tot} - \sum_{i=0}^{n-1} \rho^{(i)}\right),
\end{equation}
where the inner minimum is taken over all remaining particles with $p_{T,j}^{(n-1)}>0$.

The updated particle momenta from each piece of the integral in \Eq{eq:corrivs} are then:
\begin{equation}
p_{T,i}^{(n)}=p_{T,i}^{(n-1)}-\rho^{(n)}A_i^{(n-1)},
\end{equation}
and a particle is considered removed if its transverse momentum is zero, in which case it is also considered to have zero area.
The areas $A_i^{(n)}$ are determined by the Voronoi diagram of $\E^{(n)}$.
The above procedure terminates either when the total amount of energy density removed is equal to $\rho^\text{tot}$ or there are no more particles left in the event.
Thus, iterated Voronoi subtraction makes contact with the geometric perspective of \Eq{eq:pileupemd}, applying it in infinitesimal increments, resulting in the discrete steps:
\begin{equation}\label{eq:eitvor2}
\boxed{
\mathcal E_\text{IVS}^{(n)} = \argmin_{\mathcal E' \in \Omega} \EMD_1(\mathcal E^{(n-1)}, \mathcal E' + \rho^{(n)}\,\mathcal U).
}
\end{equation}
Said another way, this is simply a repeated application of Voronoi area subtraction:  subtract until a particle reaches zero momentum, and repeat until the desired energy density has been removed.

Iterated Voronoi subtraction is made even more attractive computationally when one considers that the Voronoi diagram of $\E^{(n)}$ does not need to be recomputed from scratch.
Rather, it can be obtained from the Voronoi diagram of $\E^{(n-1)}$ by removing a site and updating only the neighboring regions.
Thus, we only need to construct the Voronoi diagram of $\E^{(0)}$ and each removal can be done in constant (amortized) time as the average number of neighbors of any cell is no more than 6~\cite{Aurenhammer2013:book}.
We have constructed an implementation of iterated Voronoi subtraction that interfaces with \textsc{FastJet} and will explore its phenomenological properties in future work.

\section{Theory space}
\label{sec:theory}

When do two theories give rise to similar signatures?
In this section, we seek to generalize the intuition behind the EMD to obtain a metric between theories using their predicted cross sections in energy flow space.
A construction of such a distance and the induced ``theory space'' is conceptually useful and, in fact, naturally underpins several recently introduced techniques for collider physics.

We introduce the cross section mover's distance ($\Sigma$MD) as a metric for the space of theories.
Here, we treat a ``theory'' as an ensemble of event energy flows with corresponding cross sections, encompassing both the predictions of quantum field theories as well as the structure of collider datasets.
To accomplish this, we again make use of an EMD-like construction, except the $\Sigma$MD uses the EMD itself as the ``angles'' and the event cross sections as the ``energies'', as mentioned in \Tab{tab:emdsmdcomp}.
The resulting space of theories with the $\Sigma$MD as a metric is illustrated in \Fig{fig:theory_space}.
Interestingly, \Ref{Erdmenger:2020abc} also put a metric on theory space by using the Fisher information matrix.

\begin{figure}[t]
\centering
\includegraphics[scale=0.7]{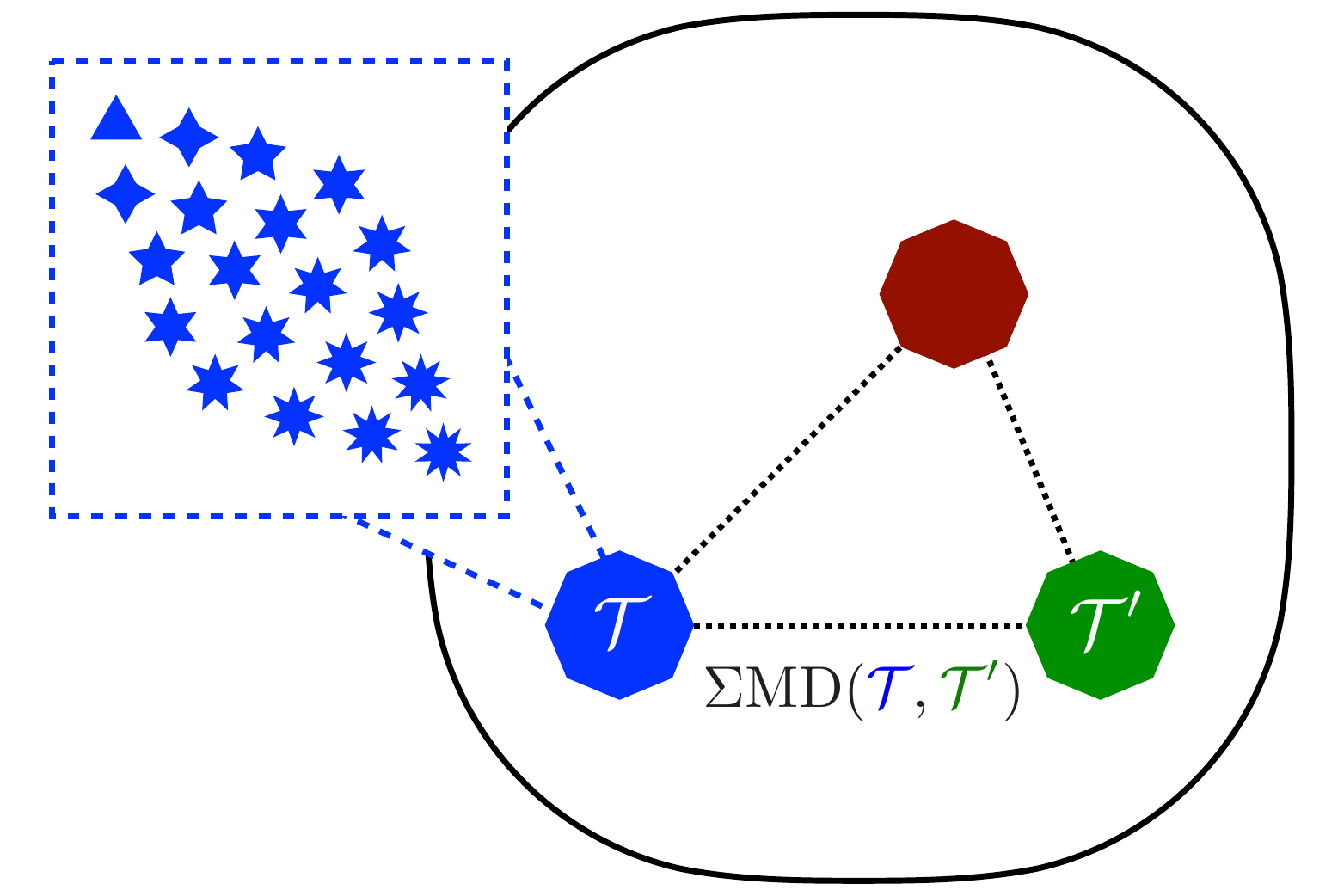}
\caption{\label{fig:theory_space} An illustration of the space of theories.
Each point in the space is a ``theory'': a distribution over (or collection of) events, as indicated by the blue point.
The distance between theories is quantified by the $\Sigma$MD, giving rise to a metric space.}
\end{figure}

\subsection{Introducing a distance between theories}

A ``theory'' $\mathcal T$ is taken to be a (finite, for now) set of events with associated cross sections $\{(\mathcal E_i,\sigma_i)\}_{i=1}^N$.
We can equivalently view $\mathcal T$ as a distribution over the space of event configurations $\mathcal E$:
\begin{equation}
\label{eq:T_def}
\mathcal{T}(\mathcal E) = \sum_{i = 1}^N \sigma_i \, \delta(\mathcal E - \mathcal E_i).
\end{equation}
In the case of unweighted events, the cross sections are simply $\sigma_i = 1/\mathcal{L}$, where $\mathcal{L}$ is the total integrated luminosity.
One can think of these $N$ events as being produced by a perfect event generator or measured by a perfect collider experiment in the context of this theory.
In the $\mathcal{L} \to \infty$ or $N\to\infty$ limit, \Eq{eq:T_def} becomes a smooth distribution.

The $\Sigma$MD is the minimum ``work'' required to rearrange one theory $\mathcal T$ into another $\mathcal T'$ by moving cross section $\mathcal F_{ij}$ from event $i$ in one theory to event $j$ in the other:
\begin{equation}
\label{eq:SigmaMD}
\text{$\Sigma$MD}_{\gamma,S; \beta, R}(\mathcal T, \mathcal T') = \min_{\{\mathcal F_{ij}\ge0\}} \sum_{i=1}^N\sum_{j=1}^{N'}  \mathcal F_{ij} \left(\frac{\text{EMD}_{\beta, R}(\mathcal E_i, \mathcal E_j')}{S} \right)^\gamma + \left|\sum_{i=1}^N \sigma_i - \sum_{j=1}^{N'}\sigma_j'\right|,
\end{equation}
\begin{equation}
\sum_{i=1}^N \mathcal F_{ij} \le \sigma_j', \quad \sum_{j=1}^{N'} \mathcal F_{ij} \le \sigma_i, \quad \sum_{i=1}^N\sum_{j=1}^{N'} \mathcal F_{ij} = \min\left(\sum_{i=1}^N \sigma_i,\sum_{j=1}^{N'}\sigma_j'\right).
\end{equation}
Here, $i$ and $j$ index the events in theories $\mathcal T$ and $\mathcal T'$, respectively. 
The parameter $S$, which has the same units as the EMD, controls the relative importance of the two terms, analogous to the jet radius parameter $R$ in the EMD.
We have also introduced a possible $\gamma$ exponent, analogous to the $\beta$ angular exponent in the EMD.

The $\Sigma$MD has dimensions of cross section, where the first term quantifies the difference in event distributions and the second term accounts for the creation or destruction of cross section.
For $\gamma > 0$, it is a true metric as long as the underlying EMD is a metric and $2S$ is larger than the largest attainable EMD between two events.%
\footnote{The analogous discussion to footnote \ref{footnote:pWasser} holds for $\gamma > 1$.}
In the limit as $S\to\infty$, the $\Sigma$MD reduces simply to the difference in total cross section between the two theories.
The natural continuum notion of \Eq{eq:SigmaMD} can be used whenever such an analysis is analytically tractable.

The $\Sigma$MD from a theory to itself is zero in the continuum limit or with infinite data.
Further, two theories that differ in their Lagrangians yet give rise to identical scattering cross sections for all energy flows will have a $\Sigma$MD of zero.
This includes, for instance, theories that are equivalent up to field redefinitions~\cite{Cheung:2017pzi} or rearrangements of the asymptotic states~\cite{Frye:2018xjj,Hannesdottir:2019rqq,Hannesdottir:2019opa}.
Finally, if two theories have any observable differences in energy flow, then the $\Sigma$MD between them will be non-zero.
Note that the $\Sigma$MD inherits the flavor and charge insensitivity of the EMD, but it is interesting to consider extending the $\Sigma$MD to account for additional quantum numbers that particles may carry.

\subsection{Jet quenching via quantile matching}

Quantile matching~\cite{Brewer:2018dfs} is an analysis strategy to study the modification of jets as they traverse the quark-gluon plasma in heavy-ion collisions.
We now show that, surprisingly, this technique can be cast naturally in a $\Sigma$MD formulation.
The optimal ``theory moving'' transport between two otherwise-equivalent datasets provides a proxy for the jet modification by the quark-gluon plasma.

Intuitively, the idea is to select a set of statistically equivalent jets from both proton-proton (pp) collisions and heavy-ion (AA) collisions.
This gives a snapshot of jets and their energies before and after modification by the quark-gluon plasma, respectively.
Such a selection can be achieved by selecting jets with the same upper cumulative effective cross section, after appropriately normalizing the AA cross section to account for the average number of nucleon-nucleon collisions.

With such a selection, a quantile matching can be used to specify $p_T^\text{quant}$ for a given heavy-ion jet with reconstructed transverse momentum $p_T^\text{AA}$:
\begin{equation}
\Sigma^\text{eff}_\text{pp}(p_T^\text{quant}) \equiv \Sigma_\text{AA}^\text{eff}(p_T^\text{AA}),
\end{equation}
where $p_T^\text{quant}$ gives a proxy for the jet $p_T$ prior to modification by the quark-gluon plasma.
The ratio between the heavy-ion and proton-proton jet transverse momentum in the same quantile then gives a physically-motivated quantification of the medium jet modification.
\begin{equation}
Q_\text{AA}(p_T^\text{quant}) = \frac{p_T^\text{AA}}{p_T^\text{quant}}.
\end{equation}

We now turn to explaining the intriguing connection between this quantile matching procedure and the $\Sigma$MD through optimal transport.
We use transverse momenta in place of energies and take $R\to\infty$ in \Eq{eq:emd}, where the EMD becomes simply the difference in jet transverse momenta.
Further, we set $\gamma = 1$ and $S=1$ in \Eq{eq:SigmaMD} and note that the normalization of the cross sections makes the second term in that equation vanish.

The theory moving problem now becomes a simple one-dimensional optimal transport problem of moving the pp jet $p_T$ distribution to the AA jet $p_T$ distribution.
Remarkably, this is mathematically equivalent to quantile matching.
We use the notation TM to represent the optimal theory movement $\mathcal F^*$ in the $\Sigma$MD.
Letting $\mathcal T_\text{AA}$ be the set of heavy-ion jets and $\mathcal T_\text{pp}$ be the set of proton-proton jets, we have that:
\begin{equation}
\begin{boxed}{
p_T^\text{quant} = \text{TM}\left(\mathcal T_\text{AA}, \mathcal T_\text{pp}\right)[p_{T}^\text{AA}],
}\end{boxed}
\end{equation}
where we can define this formally using a ``ghost'' heavy-ion jet with transverse momentum $p_T^\text{AA}$ and infinitesimal cross section $\sigma \sim 0$.

Quantile matching can therefore be seen as a matching induced by the optimal theory movement between the heavy-ion and proton-proton jets.
In this sense, it operationally defines the modification by the quark-gluon plasma in terms of the theory-movement of the jet transverse momentum spectrum.
It would be interesting to follow this connection further and explore this procedure using the full EMD beyond the $R\to\infty$ limit to study the medium modification as a function of the jet substructure.

\subsection{Event clustering and coresets}

One of the essential unsupervised methods for probing a dataset is to analyze its complexity.
A method to do this for collider physics datasets is that of $k$-eventiness, recently introduced in~\Ref{Komiske:2019jim}.
Here, one seeks to find $k$ representative events that minimize the EMD from each event in the dataset to the nearest representative event:
\begin{equation}
\mathcal V^{(\gamma)}_k = \min_{\mathcal K_1, \ldots,\mathcal K_k}\sum_{i=1}^N \sigma_i \min\{\text{EMD}(\mathcal E_i, \mathcal K_1)^\gamma, \text{EMD}(\mathcal E_i, \mathcal K_2)^\gamma, \ldots, \text{EMD}(\mathcal E_i, \mathcal K_k)^\gamma\},
\end{equation}
where we have dropped the $\beta$ and $R$ subscripts on \text{EMD} for compactness.
The value of $\mathcal V_k$ probes how well the dataset is approximated by the $k$ events.
This gives rise to the notion of $\mathcal V_k$ as the ``$k$-eventiness'' of the dataset, in analogy with $N$-(sub)jettiness, where smaller values of $\mathcal V_k$ indicate better approximations.

From a geometric perspective, $\mathcal V_k$ is the smallest $\Sigma$MD to the manifold of $k$-event datasets.
Analogous to \Eq{eq:emd_noR}, we can introduce the $S \to \infty$ version of the $\Sigma$MD:
\begin{equation}
\label{eq:smd_noR}
\text{$\Sigma$MD}_{\gamma} (\mathcal T, \mathcal T') = \lim_{S \to \infty} S^\gamma \, \text{$\Sigma$MD}_{\gamma,S} (\mathcal T, \mathcal T'),
\end{equation}
which yields an infinite distances between theories of different total cross sections.
Following the identical logic to \Secs{sec:njettiness}{subsec:nsubjettiness}, we have:
\begin{equation}
\label{eq:keventiness_EMD}
\begin{boxed}{
\mathcal V^{(\gamma)}_k = \min_{|\mathcal T'|=k} \text{$\Sigma$MD}_\gamma(\mathcal T, \mathcal T').
}\end{boxed}
\end{equation}
Here, we use the $|\cdot|$ notation to count the number of events in $\mathcal T'$.
Just like for $N$-(sub)jettiness, different values of $\gamma$ highlight different aspects of theory space geometry.

Following the logic in \Sec{subsec:xcone} of lifting the $N$-jettiness observable into the XCone jet algorithm, we can lift $k$-eventiness into an event clustering algorithm.
The representative events $\mathcal K$ (i.e.\ the point of closest approach on the $k$-event manifold), has the interpretation of the ``$k$-geometric-medians'' for $\gamma = 1$ or ``$k$-means'' for $\gamma = 2$.
For practical applications, it is often convenient to restrict the representative events to be within the dataset $\mathcal T$, i.e. the ``$k$-medoids'', giving only an approximate value of $\mathcal V_k$.
While the full problem of finding the representative jets may be computationally intractable, fast approximations to find the medoids exist and have been explored in \Ref{Komiske:2019jim}.

Inspired by \Sec{sec:seqrec}, one might consider implementing sequential clustering algorithms by iterating $\Sigma$MD computations to approximate $M$ events with $M-1$ events and so forth.
Such a clustering may be helpful for rigorous data compression of large collider datasets or, if implemented efficiently, for tasks such as triggering.
These ideas are closely related to the notion of finding a coreset (see \Ref{2019arXiv191008707J} for a recent review), for which techniques from quantum information and quantum computation may also find use~\cite{harrow2020small}.
Additionally, \Ref{DBLP:journals/corr/abs-1805-07412} uses the Wasserstein metric to construct ``measure coresets'' that take into account the underlying data distribution and which may prove useful for high-energy physics applications.
We leave further exploration of theory geometry and theory space algorithms to future work.

\section{Conclusions}
\label{sec:conc}

In this paper, we have explored the metric space of collider events from a theoretical perspective.
Beginning from the EMD between final states, namely the ``work'' required to rearrange one into another, we have cast a multitude of diverse collider algorithms and analysis techniques in a geometric language.
First, we connected this metric to the fundamental notion of IRC safety in massless quantum field theories, with the EMD providing a sharp language to define IRC safety and even Sudakov safety.
We extended this connection by highlighting that a wide variety of collider observables, including thrust and $N$-jettiness, can be cast as distances between events and manifolds in this space.
Further, we demonstrated that many jet clustering algorithms, such as exclusive cone finding and sequential clustering, can be exactly derived in full detail from the simple principle that jets are the best $N$-particle approximation to the event.
Even pileup mitigation techniques developed to face the LHC-era challenge of high luminosity running can be cast in the language of subtracting a uniform radiation pattern, connecting this field to semidiscrete unbalanced optimal transport.
Finally, we generalized our reasoning to define a distance between ``theories'' as sets of events with cross sections, proving a new lens to understand several existing techniques and a roadmap for future developments in the geometry of theory space.

From the perspective of massless quantum field theories, our metric space of events is the natural space for understanding observables, as the only truly observable quantities are IRC safe.
More speculatively, it would be interesting to circumvent the (unphysical) particle-level stage of calculations and make theoretical predictions directly in the space of events.
Understanding and expanding in this direction would require natural notions of volume and integration in this space, perhaps aided by recent developments in Wasserstein spaces~\cite{DBLP:journals/siamma/BianchiniB10,DBLP:journals/siamma/AguehC11,DBLP:conf/gsi/BertrandK13,DBLP:conf/gsi/GouicL15}, though we leave this fascinating exploration to future work.
Nonetheless, it has already been established that the energy flows themselves obey factorization theorems in effective field theory contexts~\cite{Bauer:2008jx}, and give rise to rich behavior in correlators~\cite{Basham:1978zq,Belitsky:2013ofa,Dixon:2019uzg,Chen:2019bpb}.
Going directly from first principles and symmetries to observables (i.e.\ energy flows, not particles) suggests a natural extension to the philosophy driving the present scattering amplitudes program (see \Refs{Elvang:2013cua,Carrasco:2015iwa,Cheung:2017pzi} for reviews).
It is also interesting to extend this logic to massive quantum field theories where observable quantities can depend on flavor and charge.
One promising strategy is to treat events as collections of objects (jets, electrons, muons, etc.) and to use a ground distance that penalizes converting one type of object into another~\cite{Romao:2020ojy}.
Alternatively, it may be possible to find an EMD variant that mimics the flavor-sensitive clustering behavior of \Ref{Banfi:2006hf}.

It is also useful to discuss these developments in the broader context of machine learning and the physical sciences~\cite{Larkoski:2017jix,Radovic:NatureML,Carleo:2019ptp}.
Typically, problems in the natural sciences can be cast as machine learning problems such as classification and regression, whereby the relevant tools from machine learning can be applied to achieve improved performance on those tasks.
It is far rarer for machine learning to enhance our theoretical or conceptual understanding of physics directly.
This story provides an interesting case where new insights and questions exposed by machine learning have impacted purely theoretical and phenomenological collider physics.
The question of when two collider events are similar, for which the EMD was introduced, was originally motivated by unsupervised learning methods and autoencoders~\cite{Hajer:2018kqm,Heimel:2018mkt,Farina:2018fyg,Roy:2019jae}, which require a distance matrix or reconstruction loss.
By providing an answer to this simple question, which itself involved familiar machine learning tools such as optimal transport, we uncovered a new mathematical formalism to better understand and express concepts in quantum field theory and collider physics.
We hope that this will be just one example of many future profound insights into the natural sciences facilitated by this perspective.

\acknowledgments

We are grateful to Samuel Alipour-fard, Jasmine Brewer, Cari Cesarotti, Aram Harrow, Andrew Larkoski, Simone Marzani, David Miller, Benjamin Nachman, Miruna Oprescu, Gavin Salam, Gregory Soyez, and Frank Tackmann for helpful conversations.
This work was supported by the Office of Nuclear Physics of the U.S. Department of Energy (DOE) under grant DE-SC-0011090 and by the DOE Office of High Energy Physics under grants DE-SC0012567 and DE-SC0019128.
JT was additionally supported by the Simons Foundation through a Simons Fellowship in Theoretical Physics.
We acknowledge the importance of minimizing transportation plans in these challenging times.

\appendix

\section{Energy moving with massive particles}
\label{sec:mass}

In this appendix, we briefly explore an alternative definition of energy flow appropriate for massive particles, with a corresponding change in the measures used to define the EMD.
The energy flow in \Eq{eq:energyflow} treats events as sets of particles that have energy-like weights $\{E_i\}$ and geometric directions $\{\hat{n}_i\}$.
The EMD in \Eq{eq:emd} is based on pairwise distances $\{\theta_{ij}\}$ that are only functions of the $\hat{n}_i$ and $\hat{n}_j$ directions.
The exact definitions of $E_i$, $\hat{n}_i$, and $\theta_{ij}$ may vary depending on the collider context and other choices.
For massless final-state particles in $e^+e^-$ collisions, it is typical to take the energy $E$ to be equal to the total momentum $|\vec p\,|$, and the geometric direction $\hat{n}$ to be equal to the unit vector $\vec{p} / E$.
For massless particles in hadronic collisions, it is natural to use transverse momentum $p_T$ and a geometric direction based on azimuth $\phi$ and pseudorapidity $\eta$.

It is straightforward to adapt the energy flow to massive particles (see related discussion in \Ref{Mateu:2012nk}).
For the energy measure, the natural choices are energy in the $e^+e^-$ case and transverse energy in the hadronic case:
\begin{equation}
E_i = \sqrt{|\vec p_i|^2+m_i^2}, \qquad
E_{Ti}=\sqrt{p_{Ti}^2+m_i^2}.
\end{equation}
Both of these reduce nicely to the expected expressions in the $m_i\to0$ limit.
For the geometric direction, the natural choices are velocity and transverse velocity, written in four-vector notation:
\begin{equation}
n^\mu_i = \frac{p_i^\mu}{E_i} = (1,\vec{v})^\mu, \qquad
n^\mu_{Ti} = \frac{p_i^\mu}{E_{Ti}} = (\cosh y_i, \vec{v}_{Ti}, \sinh y_i)^\mu,
\end{equation}
where $\vec{v} = \vec{p}_i / E_i$ is the particle three-velocity, $\vec{v}_{Ti} = \vec{p}_{Ti} / E_{Ti}$ is the particle transverse two-velocity, and $y_i$ is the particle rapidity.
Again, these have the expected behavior in the $m_i\to0$ limit, and for finite mass, the velocities are bounded as $|\vec{v}| \in [0,1]$ and $|\vec{v}_T| \in [0,1]$.

To define the EMD, we choose the following pairwise angular distance:
\begin{equation}
\label{eq:theta_massive}
\theta_{ij}=\sqrt{ - (n_i^\mu-n_j^\mu)^2},
\end{equation}
where one replaces $n^\mu$ with $n_T^\mu$ in the hadronic case.
The first minus sign is needed because the difference between two time-like vectors with $n^2 \in [0,1]$ is space-like.
This expression reduces to the usual expression $\theta_{ij}=\sqrt{2n_i^\mu n_{j\mu}}$ in the massless limit.

\begin{figure}[t]
\centering
\includegraphics[scale=0.7]{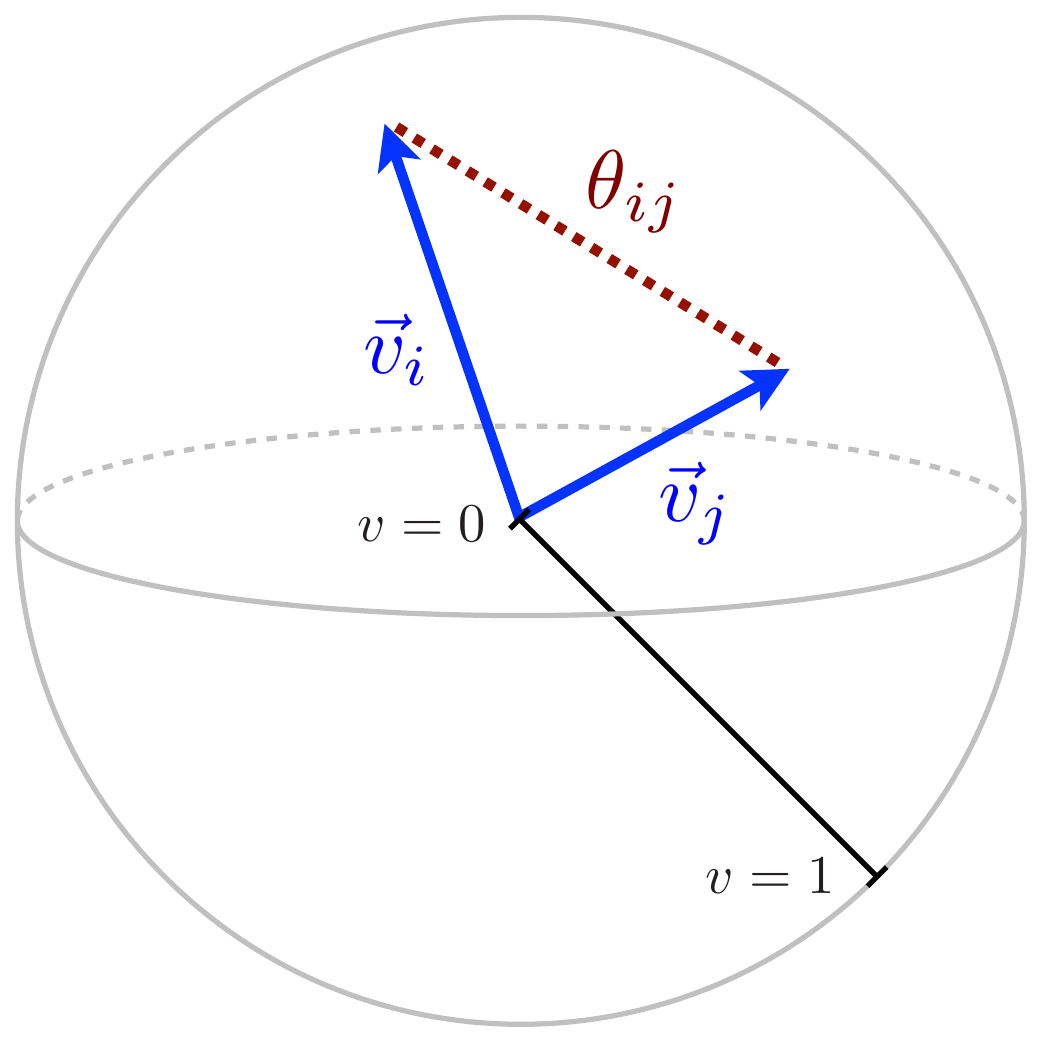}
\caption{\label{fig:sphere} The space of (massive) particle kinematics, with pairwise distances corresponding to Euclidean distances in this space.
Massless particles with $v=1$ live on the boundary and particles at rest with $v=0$ are at the origin.
One can interpret this figure as a snapshot of the event taken at a time $t$ after the collision, when the particles have traveled a distance $v t$.
}
\end{figure}

To gain intuition for this geometric distance between massive particles, it is instructive to expand out \Eq{eq:theta_massive} in the $e^+e^-$ case:
\begin{equation}
\theta_{ij}=\sqrt{\left(\vec v_i-\vec v_j\right)^2}=\sqrt{v_i^2+v_j^2 - 2 \, v_i \, v_j \cos \Omega_{ij}},
\end{equation}
where $\Omega_{ij}=\arccos \hat n_i\cdot \hat n_j$ is the purely geometric angle between particles $i$ and $j$.
We see that the velocity magnitude $v = |\vec{v}|$ acts as a radial coordinate on the sphere, and the pairwise distances $\theta_{ij}$ are just the Euclidean distances between two points in the unit ball, with distances $v_i$ and $v_j$ from the origin and angle $\Omega_{ij}$ between them.
Massless particles live entirely on the boundary with $v=1$ and massive particles live inside the ball with $0\le v<1$.
An illustration of this massive particle phase space is shown in \Fig{fig:sphere}.

The use of this massive distance measure has an interesting interplay with some of the studies in the body of the paper.
For example, the analysis of thrust in \Sec{subsubsec:thrust} involved finding the EMD to the manifold of back-to-back massless particle configurations of potentially unequal energy.
Using the massive particle distance, one could consider finding the EMD to the manifold of all possible two-particle configurations, including massive particles.
For $\beta = 2$, this is equivalent to partitioning the event into two halves with masses $M_A$ and $M_B$ and corresponding energies $E_A$ and $E_B$, and minimizing the quantity $M_A^2 / E_A + M_B^2 / E_B$.
A nice feature of this approach is that the optimal two particle configuration has the same energies and velocities as one would get from clustering the particles in each half.
Note that this approach is closely related to (but not identical to) the original definition of heavy jet mass in \Ref{Clavelli:1981yh} based on minimizing $M_1^2 + M_2^2$.

The idea of optimizing jet regions based on $M^2 / E$ also appears in the jet maximization approach~\cite{Georgi:2014zwa}.
In fact, using the massive distance measure in \Eq{eq:XConeasEMD} with $\beta = 2$ and $N=1$, and repeating the logic in \Ref{Thaler:2015uja}, we recover precisely the algorithm in \Ref{Georgi:2014zwa}, where the parameter $R$ controls the size of the resulting jet region.
The EMD approach yields a natural way to extend the jet maximization algorithm to $N > 1$, and also allows for an alternative definition of $N$-jettiness from \Eq{eq:Njettiness_with_beam_asEMD} based on time-like axes.

As another example, the analysis of recombination schemes in \Sec{sec:seqrec} involved minimizing the transportation cost to merge two particles into one.
For $\beta = 2$, this was equivalent to the $E$-scheme, namely $\kappa = 1$ in \Eq{eq:recomb}, up to the subtlety noted in footnote~\ref{footnote:Escheme}.
Using the massive particle distance, the merged particle in the $E$-scheme has the energy and direction:
\begin{equation}
E_c = E_i + E_j, \qquad n^\mu_c = \frac{E_i n_i^\mu + E_j n_j^\mu}{E_i + E_j},
\end{equation}
where appropriate $T$ subscripts should be included in the hadronic case.
The combined four-vector is
\begin{equation}
p^\mu_c = E_c n^\mu_c = p_i^\mu + p_j^\mu,
\end{equation}
which is a valid expression in both the $e^+e^-$ and hadronic cases.
Thus, the combined four-vector is just the sum of the two particles, which is indeed the desired $E$-scheme behavior.
Note, however, that the interpretation of the jet radius is very different if one uses the massive particle distance, since clustering happens in velocity space.
We leave further studies of the massive particle distance to future work.

\bibliography{eventgeometry}

\end{document}